\documentclass[11pt]{article}

\usepackage{float}
\usepackage{enumitem}
\usepackage{comment}
\usepackage[utf8]{inputenc}
\usepackage{lipsum}
\usepackage[table,xcdraw]{xcolor}
\usepackage{amsmath, amssymb,amsfonts,amsthm,mathtools,nicefrac}
\numberwithin{equation}{section}
\usepackage{xspace,graphicx,relsize,bm}
\usepackage{soul} 

\usepackage{parskip}  
\usepackage{comment}
\usepackage[most]{tcolorbox}

\usepackage{libertinus}  
\usepackage[zerostyle=a,scaled=.97]{newtxtt}  
\usepackage[T1]{fontenc}
\usepackage[libertine]{newtxmath}  
\usepackage{dsfont}  
\usepackage[cal=stix, calscaled=.97]{mathalpha}  
\usepackage{anyfontsize}
\usepackage{booktabs}
\usepackage{tabularx}

\usepackage{enumitem}
\setlist[enumerate,1]{
 label=\textup{(\roman*)},
 ref=\textup{(\roman*)}
}

\usepackage{url}
\usepackage[
    backend=biber,
    style=alphabetic,
    sorting=nyt,
    maxbibnames=99
    ]{biblatex}
\DefineBibliographyStrings{english}{%
  backrefpage = {page},
  backrefpages = {pages},
  mathesis = {MSc thesis},
}

\usepackage{tikz}
\usetikzlibrary{arrows.meta, positioning, decorations.pathreplacing}

\usepackage[linktocpage=true]{hyperref}
\usepackage[margin=1in]{geometry}
\usepackage[singlespacing]{setspace}
\hypersetup{
	colorlinks,
	linkcolor={purple},
	citecolor={cyan},
	urlcolor={purple}
}

\usepackage{subcaption}
\usepackage{mleftright}
\usepackage{tipa}
\usepackage{multirow}
\usepackage{physics}  
\usepackage{authblk}  

\usepackage{subfiles} 

\usepackage{algorithm}
\usepackage{algpseudocode}

\usepackage{multirow}

\usepackage{pifont}

\usepackage{float}

\usepackage{chngpage}

\usepackage[capitalize,noabbrev]{cleveref}  

\usepackage{longtable}
\usepackage{lscape}

\usepackage{algorithm}
\usepackage{algpseudocode}

\newcommand{\F}{\mathbb F}
\newcommand{\E}{\mathbb E}
\newcommand{\Prb}{\mathbb P}
\newcommand{\poly}{\operatorname{poly}}

\newcommand{\scor}{\operatorname{score}}

\newcommand{\ceil}[1]{\left\lceil #1\right\rceil}
\newcommand{\floor}[1]{\left\lfloor #1\right\rfloor}
\newcommand{\LR}{\operatorname{LR}}

\newcommand{\calO}{\mathcal O}

\definecolor{choiceone}{RGB}{100,160,220}
\definecolor{choicetwo}{RGB}{20,60,140}
\definecolor{classical}{HTML}{D55E00}
\definecolor{quantum}{HTML}{0072B2}
\definecolor{neutral}{HTML}{4D4D4D}
\definecolor{gap}{HTML}{009E73}

\providecommand{\myvec}[1]{\ensuremath{\boldsymbol{#1}}}

\providecommand{\yy}{\ensuremath{\myvec{y}}}

\newtheoremstyle{mydefinitionsty}
{10pt}
{10pt}
{\itshape}
{}
{}
{}
{.5em}
{\textbf{\thmname{#1}~\thmnumber{#2}:  }\thmnote{(#3)}}
\theoremstyle{mydefinitionsty}
\newtheorem{definition}{Definition}
\newtheorem{remark}{Remark}

\makeatother

\newtheorem{innercustomgeneric}{\customgenericname}
\providecommand{\customgenericname}{}
\newcommand{\newcustomtheorem}[2]{%
  \newenvironment{#1}[1]
  {%
   \renewcommand\customgenericname{#2}%
   \renewcommand\theinnercustomgeneric{##1}%
   \innercustomgeneric
  }
  {\endinnercustomgeneric}
}

\newcustomtheorem{customthm}{Theorem}
\newcustomtheorem{customlemma}{Lemma}
\newcustomtheorem{customproposition}{Proposition}

\newtheoremstyle{mythmsty}
{10pt}
{10pt}
{\itshape}
{}
{}
{}
{.5em}
{\textbf{\thmname{#1}~\thmnumber{#2}:  }\thmnote{(#3)}}
\theoremstyle{mythmsty}
\newtheorem{inftheorem}{Informal Theorem}
\newtheorem{theorem}{Theorem}
\newtheorem{lemma}{Lemma}

\newtheorem{corollary}{Corollary}

\newtheorem{proposition}{Proposition}

\definecolor{qbaccent}{HTML}{4C566A}   
\definecolor{qbfill}{HTML}{F4F5F7}     

\newtcolorbox{questionbox}[1][]{
    enhanced,
    breakable,
    colback=qbfill,
    colframe=qbfill,          
    boxrule=0pt,
    arc=3pt,
    outer arc=3pt,
    left=16pt, right=14pt, top=12pt, bottom=12pt,
    borderline west={2.5pt}{0pt}{qbaccent}, 
    fontupper=\itshape,
    #1
}

\definecolor{max}{RGB}{164,12,52}

\newcommand{\stkout}[1]{\ifmmode\text{\st{\ensuremath{#1}}}\else\st{#1}\fi}

\setlist[enumerate,1]{
 label=\textup{(\roman*)},
 ref=\textup{(\roman*)}
}

\newif\ifverbose

\definecolor{choiceone}{RGB}{100,160,220} 
\definecolor{choicetwo}{RGB}{20,60,140} 
\definecolor{neutral}{HTML}{4C566A} 
\definecolor{classical}{HTML}{4C78A8} 
\definecolor{quantum}{HTML}{B24C88} 
\definecolor{gap}{HTML}{6B5AA6} 

\title{A provable quantum advantage for approximate optimization\\ via decoded quantum interferometry}

\newcommand{\dccqs}{Dahlem Center for Complex Quantum Systems, Freie Universit{\"a}t Berlin, 14195 Berlin, Germany}
\newcommand{\hzb}{Helmholtz-Zentrum Berlin f{\"u}r Materialien und Energie, 14109 Berlin, Germany}
\newcommand{\hhi}{Fraunhofer Heinrich Hertz Institute, 10587 Berlin, Germany}

\author[1]{Maximilian~J.~Kramer\footnote{Corresponding author: \href{mailto:m.kramer@fu-berlin.de}{m.kramer@fu-berlin.de}}}
\author[1,2]{Elies~Gil-Fuster}
\author[1]{Benjamin~D.~M.~Jones}
\author[1,2,3]{\\Jens~Eisert}
\author[1]{Franz~J.~Schreiber}
\affil[1]{\dccqs}
\affil[2]{\hhi}
\affil[3]{\hzb}

\date{}

\begin{document}

\maketitle

\begin{abstract}
    Decoded quantum interferometry (DQI) is a novel paradigm for tackling approximate optimization problems on quantum computers. This framework comes with strong performance guarantees and exploits a well-established duality between optimization and coding theory. A central question, however, is whether DQI can actually provably outperform all polynomial-time classical algorithms. In this work, we establish such an advantage in an oracle setting: we consider an optimization task called folded optimal polynomial intersection (folded OPI), where the acceptance sets are chosen randomly and accessed through membership oracles. We establish a strict gap between the approximation ratio achievable by any polynomial-time classical algorithm and the approximation ratio achieved by the DQI algorithm. Our proof builds on Jordan et al.'s DQI framework for approximate optimization and extends the classical lower-bound method underlying Yamakawa and Zhandry's exact-search oracle separation to approximation. Building on recent developments by Sun and Wootters, Horinaga and Yamakawa, and Jo, we further show that a modified version of the DQI algorithm achieves a strictly higher score guarantee on typical sampled folded OPI instances. As a concrete example, for code rate $0.3$, DQI achieves expected scores of approximately $0.85$. In contrast, exceeding the classical threshold of $0.65$ by any fixed amount with constant probability on sampled instances requires super-polynomially many classical membership queries.
\end{abstract}

\newpage
\tableofcontents

\newpage

\section{Introduction}
\label{sec:introduction}

Combinatorial optimization is a prominent prospective application of quantum computing, yet rigorous quantum advantages remain difficult to establish. Optimization problems arise throughout science and industry, where even modest improvements in solution quality or computational cost can have substantial impact. The central challenge is to determine whether quantum algorithms can offer such improvements over arbitrary classical algorithms~\cite{Abbas_2024,MindTheGaps,VastWorld}.
\emph{Decoded quantum interferometry} (DQI) is a recently proposed quantum algorithmic framework that turns a coherent decoder for the dual of a linear code into a quantum algorithm for optimizing local predicates over the corresponding primal code~\cite{JordanEtAl2025DQI}. DQI connects these optimization and decoding tasks through a Fourier transform over the underlying finite field, building on the approach pioneered in Regev's quantum reductions~\cite{ATS03,Regev2004,AR2005,Regev2009}. Most prominent is its application to the \emph{optimal polynomial intersection} (OPI) problem, where the primal code is a \emph{Reed--Solomon code} and the dual is a \emph{generalized Reed--Solomon code}. In certain parameter regimes, DQI achieves better approximation guarantees than any \emph{known} polynomial time classical algorithm~\cite{JordanEtAl2025DQI}. Optimized quantum circuits and concrete resource estimates have further strengthened OPI as a candidate application for demonstrating quantum advantages via DQI~\cite{khattar2025verifiablequantumadvantageoptimized,rosmanis2026nearlylineartimedecodedquantum}.

Whether these approximation guarantees reflect a \emph{genuine quantum advantage}, in the sense of a \emph{provable separation} from \emph{arbitrary} classical algorithms, remains a central open question~\cite{Aaronson2024DQI,MarwahaEtAl2026} and is at the same time a highly contested challenge in the field. We address the following oracle formulation of this question:
\begin{quote}
    \emph{Can the approximation performance of DQI be separated in query complexity from that of arbitrary classical algorithms in a common oracle model?}
\end{quote}

We answer this question affirmatively by constructing a family of \emph{folded OPI instances} based on \emph{folded Reed--Solomon codes}. Each instance consists of a code from this family and a collection of acceptance sets, one per block, each containing exactly half of the block alphabet. These acceptance sets are accessed through separately callable block-membership oracles.
At every fixed rational code rate, this family exhibits a sharp classical threshold: a polynomial-time implementation of Prange's information-set strategy~\cite{Prange1962} reaches this threshold using linearly many membership queries, whereas achieving any fixed improvement with constant success probability requires super-polynomially many \emph{classical} membership queries.
DQI converges to an expected score strictly above the classical threshold at every fixed rational code rate $R\in (0,1)$.
This requires only conventional Reed--Solomon unique decoding based on the Berlekamp--Massey algorithm~\cite{Berlekamp1968,Massey1969}, correcting errors strictly below half the minimum distance of the unfolded dual code.
The score guarantee holds uniformly over all promised instances, and the algorithm uses one coherent query to each block-membership oracle.
Verification and a constant number of repetitions then yield a bounded-error quantum algorithm attaining a score above the classical threshold.
Together with the classical query lower bound, this establishes a query separation for approximate optimization in this oracle model via DQI.
The corresponding question of proving hardness for scalar OPI, based on standard computational assumptions, as studied in Refs.~\cite{JordanEtAl2025DQI,SunWootters2026,HorinagaYamakawa2026,Jo2026}, remains open.

Our classical query lower bound adapts the exact-search argument of Yamakawa and Zhandry~\cite{YamakawaZhandry2024} to approximation in our balanced membership-oracle model. Their original argument by itself does not rule out a candidate satisfying sufficiently many of the remaining blocks to exceed a certain score. We supplement their framework with a standard conditional binomial-tail estimate, which provides the additional control needed to extend their exact-search query lower bound to approximation.

\Cref{fig:advantage_region} schematically places the present result relative to prior scalar OPI analyses for DQI and beyond~\cite{JordanEtAl2025DQI,SunWootters2026,HorinagaYamakawa2026,Jo2026} and the Yamakawa--Zhandry exact-search separation~\cite{YamakawaZhandry2024}.

\begin{figure}[t]
\centering
\includegraphics[width=0.85\linewidth]{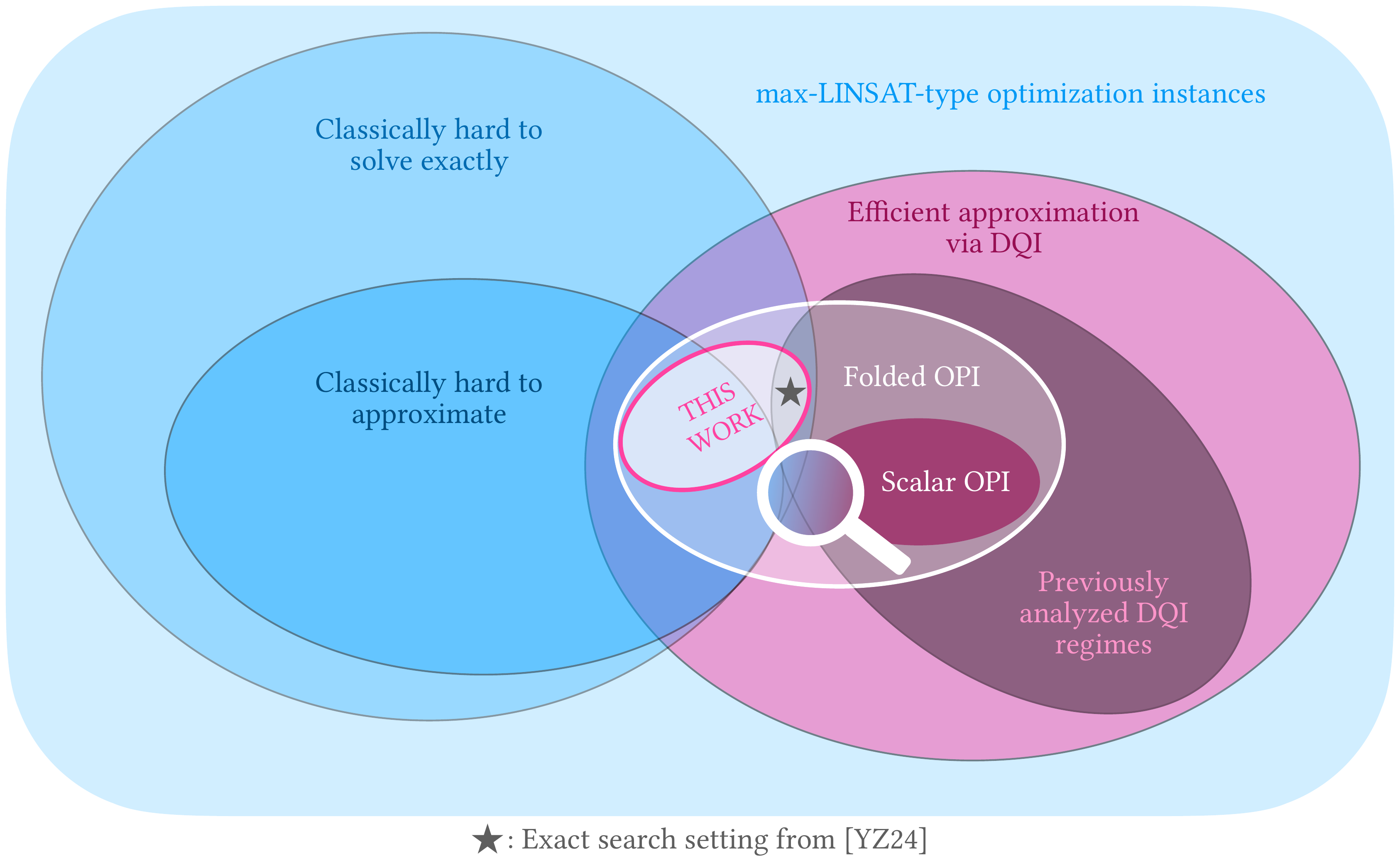}
\caption{Schematic context for the present results. The magnifying-glass lens marks the unresolved status of quantum advantage for scalar OPI: existing DQI analyses outperform the best \emph{known} classical algorithms, but do not establish a lower bound against \emph{arbitrary} classical algorithms. The star marks Yamakawa and Zhandry's exact-search oracle separation on folded Reed--Solomon instances. The present work establishes an approximation separation for DQI on folded OPI. The regions are conceptual and do not denote literal set inclusions across computational models.}
\label{fig:advantage_region}
\end{figure}

\subsection{Main result}

We prove the separation on an explicit family of balanced folded-OPI instances. Here, $M$ is the number of block constraints and the growing instance-size parameter, while $R\in(0,1)$ is the fixed target rate of the underlying folded Reed--Solomon code family. For each candidate polynomial, every block tests whether its folded evaluation symbol belongs to an oracle-defined acceptance set, and the score is the fraction of blocks that accept. In our concrete instantiation, every block accepts exactly half of the block alphabet. We focus on this balanced setting because it makes the separation particularly transparent and clean. For the distributional classical statements, the acceptance sets are sampled independently and uniformly from the half-size subsets of the block alphabet.

The two relevant score levels are
\begin{align}
  \alpha_{\rm C}(R)&\coloneq \frac{1+R}{2},
  &
  \alpha_{\rm Q}(R)&\coloneq \frac{1+\sqrt{R(2-R)}}2.
  \label{eq:intro-score-levels}
\end{align}
Prange's information-set strategy~\cite{Prange1962} attains the classical threshold $\alpha_{\rm C}(R)$ using linearly many membership queries in $M$. Exceeding this threshold by any fixed positive amount with constant probability requires super-polynomially many classical queries in $M$. DQI converges to the strictly higher expected score $\alpha_{\rm Q}(R)$~\cite{JordanEtAl2025DQI}.
Below we give an informal version of \cref{thm:frs-separation}.

\begin{inftheorem}[Balanced DQI separation]
\label{thm:informal-frs-separation}
For every fixed rational rate $R\in(0,1)$, there is an explicit balanced folded-OPI family indexed by block counts $M$, with $M+1$ a power of two, such that for all sufficiently large $M$:
\begin{enumerate}
  \item DQI runs in quantum polynomial time, uses $M$ coherent membership queries, and achieves an expected score $\alpha_{\rm Q}(R)+o(1)$ uniformly over every promised instance.
  \item Prange's information-set algorithm runs in classical polynomial time, uses $\calO(M)$ membership queries, and achieves a score converging in probability to $\alpha_{\rm C}(R)$ under the sampled-oracle distribution.
  \item Any classical algorithm that exceeds $\alpha_{\rm C}(R)$ by a fixed positive amount with constant probability on sampled instances requires super-polynomially many membership queries in $M$, even with unlimited computation between queries.
\end{enumerate}
\end{inftheorem}

Since $\alpha_{\rm Q}(R)>\alpha_{\rm C}(R)$ for every $R\in(0,1)$, verification and a constant number of repetitions of the DQI algorithm give a bounded-error quantum algorithm for every fixed rational threshold $r\in(\alpha_{\rm C}(R),\alpha_{\rm Q}(R))$, using $\calO_{R,r}(M)$ queries on every promised instance. For example, at rate $R=3/10$, $\alpha_{\rm C}(3/10)=0.65$ and $\alpha_{\rm Q}(3/10) \approx0.8571$.
Thus choosing for example $r=0.8$ gives a bounded-error quantum algorithm attaining score at least $0.8$. Since the optimum is at most one, this also gives a $0.8$-approximation.

The block alphabet has size $\exp(\Theta(M\log M))$, with each symbol represented using $\Theta(M\log M)$ bits. The acceptance sets are accessed through membership oracles, and all running-time bounds are measured in the number of blocks $M$.

\paragraph{Beyond DQI.}

Motivated by the beyond-DQI algorithms of Horinaga and Yamakawa~\cite{HorinagaYamakawa2026} and, independently, Jo~\cite{Jo2026}, we ask a secondary question: is the DQI score $\alpha_{\rm Q}(R)$ the best quantum guarantee efficiently attainable in our setting? We discover that it is not: We constructively exceed this score within the same Regev-style framework by replacing \emph{unique decoding} with \emph{complete list decoding} of the folded dual and adapting Jo's coherent fiber-summation method to combine all retained errors with the same syndrome.

For every fixed rational code rate $R\in(0,1)$, define the score envelope
\begin{align}
\alpha_{\rm fib}(R)
&\coloneq
\begin{cases}
\displaystyle \frac12+\sqrt{R(1-R)},&0<R\leq\frac12,\\[1ex]
1,&\frac12<R<1.
\end{cases}
\label{eq:intro-fiber-envelope}
\end{align}
For every fixed rational threshold $0<\beta<\alpha_{\rm fib}(R)$, our construction gives a quantum algorithm running in polynomial time and using linearly many membership queries. For a $1-o(1)$ fraction of sampled instances, this algorithm returns a polynomial with score at least $\beta$ with probability at least $2/3$. These stronger guarantees therefore hold on \emph{typical} sampled instances, whereas the DQI guarantee holds for \emph{every} promised instance.

For fixed $R$ and $\beta$, the complete decoder returns lists of constant size, independent of $M$. This allows all retained errors with the same syndrome to be combined coherently with polynomial overhead. The algorithm and its complexity bounds may depend on the fixed rate and threshold.

For $R>1/2$, we also obtain an \emph{exact-search} guarantee: a separate construction finds a polynomial accepted in every block, with bounded-error success on typical sampled instances and linearly many membership queries. This is stronger than attaining every fixed score threshold below one. The formal statements are given in \cref{thm:sampled-score-hierarchy}.

\Cref{fig:certified_advantage_regimes} summarizes the resulting certified score regimes.

\begin{figure}[!t]
\centering
    \includegraphics[width=0.7\linewidth]{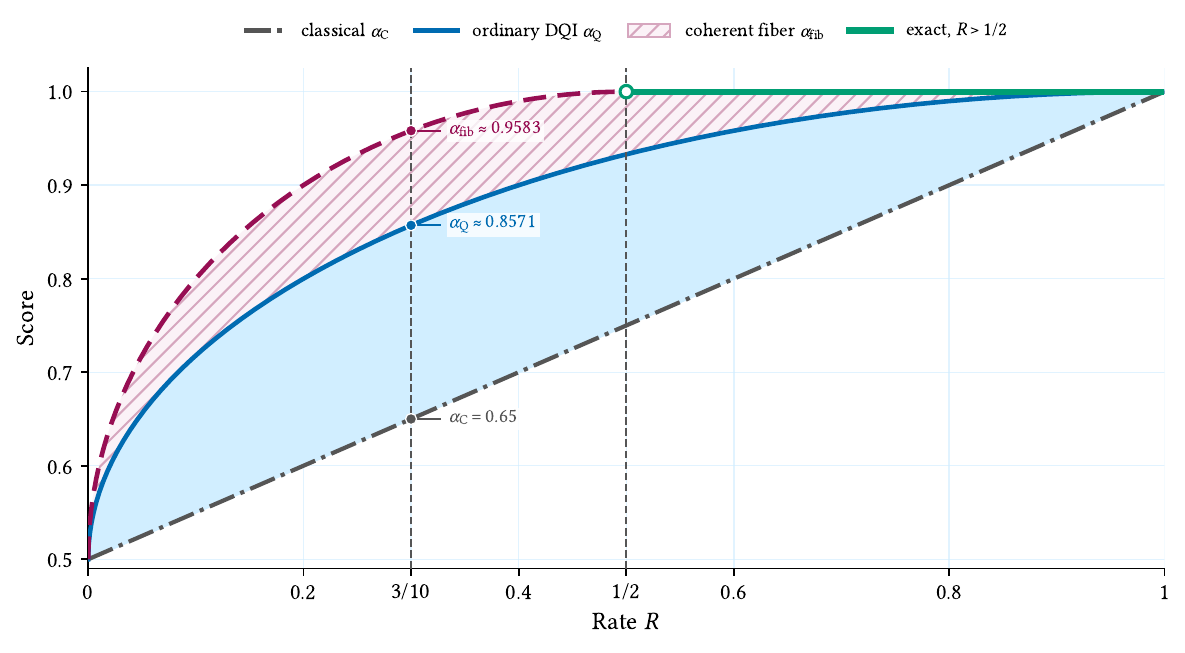}
    \caption{Provable score guarantees for our balanced folded OPI instances. DQI's guarantee holds for every promised oracle, whereas the stronger guarantees from coherent fiber summation hold only on typical sampled oracles. The gray dash-dotted curve is the classical threshold $\alpha_{\rm C}$ attained by Prange's algorithm under the distributional quantifiers of \cref{thm:informal-frs-separation}. The solid blue curve marks $\alpha_{\rm Q}$, and the light-blue region indicates the thresholds above $\alpha_{\rm C}$ and below $\alpha_{\rm Q}$ that DQI achieves on every promised oracle. The pale-pink hatched region marks the additional thresholds below $\alpha_{\rm fib}$ achieved by coherent fiber summation on typical sampled oracles. The solid green segment gives typical-oracle exact search for $R>1/2$, with an open endpoint at $R=1/2$; the vertical dashed line marks this rate. These quantum boundaries delimit provably achievable regimes, not upper bounds on arbitrary quantum algorithms. For $R\leq1/2$, attainment of the fiber endpoint is not asserted. At the marked rate $R=3/10$, the three levels are $\alpha_{\rm C}=0.65$, $\alpha_{\rm Q}\approx0.8571$, and $\alpha_{\rm fib}\approx0.9583$. Only the differences between $\alpha_{\mathrm{Q}}$ and $\alpha_{\mathrm{C}}$ constitutes a proven separation, because $\alpha_{\mathrm{fib}}$ is over typical instances only and not a worst-case statement.}
    \label{fig:certified_advantage_regimes}
\end{figure}

\subsection{Proof ideas}

The separation in \cref{thm:informal-frs-separation} rests on three ingredients: the DQI score guarantee~\cite{JordanEtAl2025DQI}, Prange's classical benchmark~\cite{Prange1962}, and a classical query lower bound. The novelty lies primarily in the resulting separation, rather than in a new coding-theoretic or quantum primitive. We do not claim a new code family, decoding paradigm, or DQI mechanism: these ingredients are inherited from earlier work~\cite{YamakawaZhandry2024,Berlekamp1968,Massey1969,JordanEtAl2025DQI}. We establish a common membership-oracle setting in which they yield a rigorous quantum advantage for approximation, together with a classical query lower bound identifying the sharp classical threshold. This requires verifying the folded DQI score guarantee and extending the Yamakawa--Zhandry exact-search query lower bound to approximation using a standard conditional binomial-tail estimate.

\paragraph{DQI on folded codes.}
DQI turns a coherent decoder for the dual code into an optimization algorithm over the primal code~\cite{JordanEtAl2025DQI}. Building on the folded construction of Jordan et al., we verify that the semicircle score guarantee extends to balanced folded predicates under membership-oracle access. For our folded Reed--Solomon family, the required decoder operates strictly below half the unfolded dual code's minimum distance, so conventional Reed--Solomon unique decoding suffices~\cite{Berlekamp1968,Massey1969}. This yields the claimed DQI score uniformly over every promised instance.

\paragraph{Prange's algorithm on folded codes.}
A block version of Prange's algorithm~\cite{Prange1962} constructs a solution by selecting an accepted symbol in each of a suitable number of blocks and interpolating a low-degree polynomial through their evaluations. These blocks are satisfied by construction. The remaining acceptance sets are independent of the interpolated polynomial, so each remaining block accepts with probability one half. This yields the classical score claimed above.

\paragraph{From exact search to approximation hardness.} 
Our classical query lower bound adapts the classical hardness proof of Yamakawa and Zhandry~\cite{YamakawaZhandry2024}. Their \emph{exact-search argument} exploits the requirement that a successful candidate must be accepted on every still-unqueried block. 
For \emph{approximation}, we instead need to control the probability that sufficiently many of these blocks accept.
We therefore augment their proof with a standard probabilistic estimate. List recoverability bounds the number of candidate codewords for which, after a polynomial number of queries, membership is known on a sufficiently large fraction of blocks. Then, a conditional binomial-tail bound makes it exponentially unlikely that any one of these candidates is accepted on sufficiently many of its unqueried blocks, thus leading to a score above a certain threshold. A union bound over the list-recovery candidates then rules out every constant improvement over Prange's algorithm within polynomially many queries.

\subsection{Related work}

\paragraph{Provable super-polynomial quantum advantages in optimization.}

Conditional quantum advantages for approximate optimization are known in the standard model. Assuming the classical hardness of RSA inversion, Pirnay et al.\ use cryptographic and computational-learning reductions to obtain a super-polynomial quantum advantage for approximating specially constructed integer-programming instances~\cite{Pirnay2024}. Their quantum algorithm extracts 
the public RSA parameters, uses Shor's algorithm~\cite{Shor1994} to reconstruct the secret key, and converts it into a solution with the claimed approximation guarantee. Szegedy argues that the PCP theorem and Shor's algorithm yield analogous conditional approximation separations~\cite{szegedy2022}, even though this path is yet to be formulated in full. These routes transport a Shor-based speedup through classical reductions, whereas the present separation arises directly from coherent decoding and Fourier interference.
In an oracle model, Leng et al.\ establish unconditional super-polynomial quantum--classical separations for discrete and continuous optimization tasks using adiabatic quantum optimization and quantum Hamiltonian descent, respectively~\cite{leng2025subexponentialquantumspeedupoptimization}.

\paragraph{The power and limitations of DQI and related Regev-style algorithms.}
The Fourier--decoding mechanism of DQI~\cite{JordanEtAl2025DQI} belongs to the broader lineage of Regev-style quantum reductions~\cite{ATS03,Regev2004,AR2005,Regev2009}. Garrido and Chailloux study whether it can yield a non-oracular advantage for Reed--Solomon decoding problems related to discrete logarithms, although the available efficient decoders do not reach the required parameter regime~\cite{garrido2026}. Yamakawa and Zhandry establish a quantum--classical random-oracle separation for exact search on folded Reed--Solomon instances~\cite{YamakawaZhandry2024}. Briaud et al.\ give an expected polynomial-time quantum algorithm for exact search in succinct random low-degree multivariate polynomial systems~\cite{BriaudEtAl2025Multivariate}; for degrees greater than two, the corresponding classical hardness remains conjectural.

Jordan et al.~\cite{JordanEtAl2025DQI} discuss the connection between DQI and Yamakawa--Zhandry~\cite{YamakawaZhandry2024} and develop a folded-DQI construction with an oracle implementation in their Supplementary Section~11. However, they do not explicitly establish the corresponding semicircle score guarantee for folded codes. Building on this construction, we verify that guarantee within the strict distance regime and extend the Yamakawa--Zhandry classical hardness argument from exact search to approximation. \Cref{sec:yz-dictionary} further clarifies this connection and compares the two oracle models.

Scores above the DQI semicircle value are known for OPI, both existentially and through efficient quantum algorithms. Sun and Wootters prove an existential improvement for balanced OPI over prime fields~\cite{SunWootters2026}. Horinaga and Yamakawa and, independently, Jo subsequently give constructive worst-case quantum algorithms beyond the DQI regime ~\cite{HorinagaYamakawa2026,Jo2026}. Although their constructions differ, both go beyond unique decoding by using a list decoder for the dual code. Following this line, our stronger quantum results replace the unique decoder by a complete list decoder for the folded dual and adapt Jo's coherent fiber-summation method to combine all returned labels with the same syndrome.

A related line of work replaces unique decoding or list decoding by \emph{soft decoding}. Chailloux and Tillich develop Regev-style quantum reductions based on soft information from a decoder ~\cite{chailloux2024softdecoders}, and Chailloux subsequently connects this framework directly to scalar OPI in an average-case setting ~\cite{chailloux2025opixsoftdecoders}.
Other recent developments include optimization using locally quantum decoders~\cite{shutty2026optimizationusinglocallyquantumdecoders}, extensions using algebraic geometry codes~\cite{GuJordan2025HOPI,matthews2026cofidqicurvebasedoptimalfunction}, and multivariate extensions of DQI~\cite{bu2026multivariatedecodedquantuminterferometry}. Hamiltonian DQI extends the approach to Gibbs-state preparation and Hamiltonian optimization~\cite{schmidhuber2025hamiltoniandecodedquantuminterferometry,bu2026hamiltoniandecodedquantuminterferometry}. Shang and Stilck Fran\c{c}a develop a common relative-decoding framework encompassing DQI and its Hamiltonian extension~\cite{ShangFranca2026}.

Several results delimit where a DQI advantage can occur. PCP-based inapproximability establishes strong $\mathsf{NP}$-hardness thresholds for general and bounded-degree max-LINSAT~\cite{KramerSchubertEisert2026Tight,KramerSchubertEisert2026Bounded}. These thresholds constrain quantum as well as classical polynomial-time algorithms under the corresponding complexity assumptions, but do not preclude an advantage on structured families. Marwaha et al.\ show that binary DQI constructively realizes Tietavainen's covering-radius bound~\cite{Tietavainen1990} and that DQI's output distributions can be approximately sampled in randomized polynomial time with an $\mathsf{NP}$ oracle~\cite{MarwahaEtAl2026}. Parekh rules out an asymptotic DQI advantage for the relevant MaxCut families~\cite{parekh2025DQI_maxcut}, while Anschuetz, Gamarnik, and Lu derive obstructions for random local max-$k$-XORSAT under their stated ensemble and decoder hypotheses~\cite{anschuetz2025DQI}. On related random-regular instances, structure-aware classical heuristics can match the locally quantum decoder studied by Shutty et al.~\cite{shutty2026optimizationusinglocallyquantumdecoders}. Classical Markov-chain Monte Carlo experiments also reproduce DQI approximation ratios at substantial finite sizes, with data consistent with exponential scaling on the tested OPI instances~\cite{gilfuster2026mcmc}. These works provide important evidence about the power and limitations of DQI.

\subsection{Organization}

After introducing notation and the general problem definitions in \cref{sec:preliminaries}, we define the balanced folded Reed--Solomon family and state the concrete separation in \cref{sec:concrete-problem}. We prove the separation in \cref{sec:concrete-proof}, and show in \cref{sec:sampled-score-hierarchy} that stronger quantum algorithmic guarantees exist. We conclude in \cref{sec:discussion}.

The appendices provide the technical foundations supporting the main results. \Cref{app:dqi} gives the complete score analysis and implementation of DQI. \Cref{app:ordinary-dqi-consequences} proves that sampled instances typically admit perfect solutions and uses this to transfer the classical query lower bound for achieving any fixed score above $\alpha_{\rm C}(R)$ to approximation ratios, completing the DQI ratio separation. \Cref{sec:stronger-sampled-algorithms} proves the stronger guarantees on typical sampled oracles, including exact search, and relates these constructions to the Yamakawa--Zhandry setting.

\section{Preliminaries}
\label{sec:preliminaries}

This section fixes notation, reviews linear and folded codes, and gives a folded code formulation of the optimization problem and access conventions used below.

\subsection{Notation}
Throughout, let $q$ be a prime power and let $\F_q$ be the field of order $q$ with multiplicative group $\F_q^\times\coloneq\F_q\setminus\{0\}$. For a positive integer $L$, write $[L]\coloneq\{0,\ldots,L-1\}$. We write $\mathbf 0$ and $\mathbf 1$ for the all-zero and all-one vectors, respectively, with their dimensions determined by context. For any statement $E$, let $\mathds 1_{E}$ denote its indicator function.

\subsection{Codes and folding}
\label{sec:codes-and-folding}
We recall standard coding-theoretic definitions and facts; see, e.g., Ref.~\cite{Lindell2010CodingTheory} for background on linear and Reed--Solomon codes. An $[N,K]_q$ linear code is a $K$-dimensional $\F_q$-linear subspace $C\subseteq\F_q^N$. The rate of an $[N,K]_q$ code is $K/N$. Equivalently, if $B\in\F_q^{N\times K}$ has full column rank, then
\begin{align}
 C &= \operatorname{im}B = \{Bx:x\in\F_q^K\}
\end{align}
is an $[N,K]_q$ code. With respect to the standard inner product on $\F_q^N$, its dual code is
\begin{align}
 C^\perp &\coloneq{} \{y\in\F_q^N:y^Tc=0\text{ for every }c\in C\}
 = \ker B^T.
 \label{eq:prelim-dual-code}
\end{align}

\begin{definition}[Folding]
\label{def:folding}
Choose $N$ such that $N=hM$ with $h,M \in \mathbb{N}$. For a word $c=(c_t)_{t\in[N]}\in\F_q^N$, define its $h$-folding by
\begin{align}
 \operatorname{Fold}_h(c)
 &\coloneq
 \left(
  (c_{ih+j})_{j\in[h]}
 \right)_{i\in[M]}
 \in
 (\F_q^h)^M
 \label{eq:folding-map} \\
 &= \left[ (c_0, \dots, c_{h-1}), \dots, (c_{N-h}, \dots, c_{N-1})\right].
\nonumber
\end{align}
For a linear code $C\subseteq\F_q^N$, its $h$-folded version is
\begin{align}
 C_{(h)}
 &\coloneq
 \operatorname{Fold}_h(C)
 =
 \{\operatorname{Fold}_h(c):c\in C\}
 \subseteq
 (\F_q^h)^M.
 \label{eq:folded-code}
\end{align}

When $h=1$, we call the code \emph{scalar}, and use this term for related concepts such as the distance or decoder defined at block size one. Explicitly, $C_{(1)}=C$. Since $\operatorname{Fold}_h$ only regroups coordinates into blocks, $C_{(h)}$ contains the same data as $C$, and in particular has the same dimension and rate.
\end{definition}

We next define suitable notions of block support and block weight for any block word $y=(y_i)_{i\in[M]}\in(\F_q^h)^M$:
\begin{align}
 \operatorname{supp}_{\rm blk}(y)
 &\coloneq{}
 \{i\in[M]:y_i\neq \mathbf{0}\},
 &
 \operatorname{wt}_{\rm blk}(y)
 &\coloneq{}
 \abs{\operatorname{supp}_{\rm blk}(y)}.
 \label{eq:prelim-block-support-weight}
\end{align}
We further introduce the following notion of block distance and block inner pai for any pair of block words $y,y'\in(\F_q^h)^M$:
\begin{align}
 d_{\rm blk}(y,y')
 &\coloneq\operatorname{wt}_{\rm blk}(y-y')
 =\abs{\{i\in[M]:y_i\ne y'_i\}}.
 \label{eq:prelim-block-word-distance}\\
 \langle y,y'\rangle_{\rm blk}
 &\coloneq{}
 \sum_{i\in[M]}y_i^Ty_i'.
 \label{eq:folded-block-pairing}
\end{align}
We note that the block inner product corresponds to the regular inner product after unfolding.

For a linear block code $C_{(h)}\subseteq(\F_q^h)^M$, its minimum block distance is
\begin{align}
 d_{\rm blk}(C_{(h)})
 &\coloneq{}
 \min_{\mathbf{0}\neq y\in C_{(h)}}\operatorname{wt}_{\rm blk}(y),
 \label{eq:prelim-block-distance}
\end{align}
with the convention $d_{\rm blk}(\{0\})=+\infty$. Again we highlight that we use the term ``scalar'' to refer to the $h=1$ case, and this definition recovers the usual notion of distance~\cite{Lindell2010CodingTheory}.

Throughout, we use \emph{unique decoding} to mean bounded-distance decoding within a prescribed radius strictly below half the relevant minimum distance. Error weights and distances are measured in scalar coordinates or folded blocks, as specified.

In the problems below, one whole $h$-tuple---rather than each scalar coordinate separately---is the unit on which acceptance is counted. Folding only regroups the coordinates, the field $\mathbb{F}_q$ we work over and the code rate $K/(hM)=K/N$ are preserved.

We write $\F_q[Y]$ for the set of polynomials in the variable $Y$ with coefficients in $\F_q$. Thus every $Q\in\F_q[Y]$ with $\deg Q<K$ can be written uniquely as
\begin{align} 
Q(Y) &= \sum_{r\in[K]}x_rY^r, & x_r&\in\F_q.
\end{align}

\begin{definition}[Generalized Reed--Solomon code]
\label{def:grs-evaluation-code}
Let $\boldsymbol\alpha=(\alpha_t)_{t\in[N]}$ consist of pairwise distinct elements of $\F_q$, and let $\boldsymbol v=(v_t)_{t\in[N]}\in(\F_q^\times)^N$ be nonzero coordinate multipliers. For $0\leq K\leq N\leq q$, the corresponding generalized Reed--Solomon code is
\begin{align}
\operatorname{GRS}_{\boldsymbol\alpha,\boldsymbol v,K}
&\coloneq
\left\{
 \bigl(v_tQ(\alpha_t)\bigr)_{t\in[N]}:
 Q\in\F_q[Y],\ \deg Q<K
\right\}
\subseteq\F_q^N,
\label{eq:grs-evaluation-code}
\end{align}
with the convention $\operatorname{GRS}_{\boldsymbol\alpha,\boldsymbol v,0}\coloneq\{0\}$. The ordinary \emph{Reed--Solomon code} $\operatorname{RS}_{\boldsymbol\alpha,K}$ is the special case $\boldsymbol v=\boldsymbol 1$.
\end{definition}

Because the evaluation points are distinct and $N\geq K$, evaluation is injective on polynomials of degree less than $K$. Hence $\operatorname{GRS}_{\boldsymbol\alpha,\boldsymbol v,K}$ is an $[N,K]_q$ linear code, and in particular so is $\operatorname{RS}_{\boldsymbol\alpha,K}$. The generalized form is needed because the dual of an ordinary Reed--Solomon code generally has nontrivial coordinate multipliers, as we will see below.

With respect to the standard inner product on $\F_q^N$, the dual of a Reed--Solomon code is a generalized Reed--Solomon code on the same evaluation points
\begin{align}
\bigl(\operatorname{RS}_{\boldsymbol\alpha,K}\bigr)^\perp
&=
\operatorname{GRS}_{\boldsymbol\alpha,\boldsymbol\nu,N-K}.
\label{eq:rs-dual-grs}
\end{align}
For background on this duality, see Ref.~\cite[Section~6]{Lindell2010CodingTheory} and Ref.~\cite[Section~4.3.1]{YamakawaZhandry2024}. For arbitrary pairwise distinct evaluation points, the dual multipliers may be chosen as
\begin{align}
\nu_t
&\coloneq
\left(
\prod_{\substack{s\in[N]\\s\neq t}}
(\alpha_t-\alpha_s)
\right)^{-1},
&
\boldsymbol\nu
&\coloneq(\nu_t)_{t\in[N]}.
\label{eq:rs-dual-multipliers}
\end{align}
The distinctness of the evaluation points makes every multiplier nonzero, including when one evaluation point is zero.

This dual has dimension $N-K$ and, when $K<N$, scalar minimum distance $K+1$. Since all $\nu_t$ are nonzero, coordinatewise division by $\nu_t$ maps it to $\operatorname{RS}_{\boldsymbol\alpha,N-K}$ and preserves error support. Thus decoding the generalized Reed--Solomon dual reduces to ordinary Reed--Solomon decoding after a known coordinatewise rescaling; the decoded word is then mapped back by restoring the multipliers.

For the full nonzero evaluation set used below, $N=q-1$ and $\alpha_t=\gamma^t$ for a primitive $\gamma\in\F_q^\times$ and $t\in[N]$. These points enumerate $\F_q^\times$. In characteristic two, the multipliers therefore simplify to
\begin{align}
\nu_t&=\alpha_t.
\label{eq:dqi-full-length-dual-multipliers}
\end{align}
Indeed, differentiating $\prod_{s\in[N]}(Y-\alpha_s)=Y^{q-1}-1$ at $Y=\alpha_t$ gives $\prod_{s\neq t}(\alpha_t-\alpha_s)=\alpha_t^{-1}$.

\begin{definition}[Folded Reed--Solomon code]
\label{def:folded-rs-code}
Let $h,M$ be positive integers, put $N=hM$, and assume $0<K\leq N\leq q$. Let $\boldsymbol\alpha=(\alpha_t)_{t\in[N]}$ consist of pairwise distinct elements of $\F_q$. The $h$-folded Reed--Solomon code is
\begin{align}
\operatorname{FRS}_{\boldsymbol\alpha,K,h}
\coloneq
\operatorname{Fold}_h
\bigl(\operatorname{RS}_{\boldsymbol\alpha,K}\bigr)
=
\left\{
 \left(
  \bigl(Q(\alpha_{ih+j})\bigr)_{j\in[h]}
 \right)_{i\in[M]}:
 Q\in\F_q[Y],\ \deg Q<K
\right\}
\subseteq(\F_q^h)^M.
\label{eq:folded-rs-code}
\end{align}
It is an $\F_q$-linear code of dimension $K$.
\end{definition}

In this work, we only consider folded Reed--Solomon codes with $N=hM\leq q-1$ whose evaluation points are consecutive powers of a primitive element $\gamma\in\F_q^\times$, namely $\alpha_{ih+j}\coloneqq\gamma^{ih+j}$ for $i\in[M]$ and $j\in[h]$.

Under our definition for block inner product, folding commutes with taking duals~\cite[Section~4.1]{YamakawaZhandry2024}. Combining this observation with \eqref{eq:rs-dual-grs} gives
\begin{align}
\bigl(\operatorname{FRS}_{\boldsymbol\alpha,K,h}\bigr)^\perp
&=
\operatorname{Fold}_h
\bigl(
 \operatorname{GRS}_{\boldsymbol\alpha,\boldsymbol\nu,N-K}
\bigr).
\label{eq:folded-rs-dual-grs}
\end{align}
The coordinatewise rescaling above also preserves block error support.

\subsection{Optimization problems and oracle access}
\label{sec:optimization-problems}

We formulate the objective once over a block-linear code. Folded max-LINSAT permits an arbitrary linear encoder, i.e., a matrix $B$, while folded \emph{optimal polynomial intersection} (OPI) specializes the encoder to low-degree polynomial evaluation.

\paragraph{Folded max-LINSAT.} 

We start by defining folded max-LINSAT over $\F_q$.

\begin{definition}[Folded max-LINSAT over $\F_q$]
\label{def:folded-max-linsat}
Fix integers $h,M,K>0$, put $N\coloneq hM$, assume $0<K<N$, and let $\Sigma\coloneq\F_q^h$. An instance consists of a full-column-rank matrix
\begin{align}
 B
 &\coloneq
 \begin{bmatrix}
 B_0\\
 \vdots\\
 B_{M-1}
 \end{bmatrix}
 \in\F_q^{N\times K},
 &
 B_i
 &\in\F_q^{h\times K},
 \label{eq:folded-max-linsat-matrix}
\end{align}
and acceptance sets $A_i\subseteq\Sigma$ for $i\in[M]$. Its feasible block words form the code
\begin{align}
 C
 &\coloneq
 \left\{
 (B_ix)_{i\in[M]}:
 x\in\F_q^K
 \right\}
 =
 \operatorname{Fold}_h(\operatorname{im}B)
 \subseteq\Sigma^M.
 \label{eq:max-linsat-feasible-code}
\end{align}
Because $B$ has full column rank, every feasible word $c \in C$ has a unique message $x\in\F_q^K$. For $c=(c_i)_{i\in[M]}\in C$, define
\begin{align}
 \scor(c)
 &\coloneq
 \frac1M\sum_{i\in[M]}\mathds{1}_{\{c_i\in A_i\}}.
 \label{eq:folded-max-linsat-score}
\end{align}
The optimization task is to find $c\in C$, equivalently $x\in\F_q^K$, maximizing this score. For a threshold $\beta\in[0,1]$, the threshold-search task asks for a feasible $c$ satisfying $M\scor(c)\geq\ceil{\beta M}$. If no such word exists, the relation has no valid output.

For a fixed instance, define $\operatorname{OPT}\coloneq\max_{c\in C}\scor(c)$. The multiplicative approximation ratio of a feasible word is
\begin{align}
 \operatorname{ratio}(c)
 &\coloneq
 \begin{cases}
 \displaystyle
 \frac{\scor(c)}
   {\operatorname{OPT}},
 &\operatorname{OPT}>0,\\[2ex]
 1,
 &\operatorname{OPT}=0.
 \end{cases}
 \label{eq:general-ratio-definition}
\end{align}
Thus $c$ is an $r$-approximation exactly when $\scor(c)\geq r\operatorname{OPT}$.
\end{definition}

We assume without loss of generality that the encoder has full column rank, since deleting redundant columns preserves its image. When $h=1$, this is scalar max-LINSAT over $\F_q$.

\paragraph{Folded OPI.}
\emph{Optimal polynomial intersection} (OPI) asks for a low-degree polynomial whose evaluations belong to specified acceptance sets at as many evaluation points as possible. We use \emph{scalar OPI} to refer to the formulation studied in Refs.~\cite{JordanEtAl2025DQI,SunWootters2026,HorinagaYamakawa2026,Jo2026}, where each acceptance test concerns a single polynomial evaluation in $\F_q$. In \emph{folded OPI}, each test concerns a tuple of $h$ evaluations of the same polynomial, tested jointly against a specified acceptance set in $\F_q^h$.

Under the hypotheses of \cref{def:folded-max-linsat}, additionally assume $N=hM\leq q$ and choose pairwise distinct evaluation points $\boldsymbol\alpha=(\alpha_t)_{t\in[N]}$. Represent a polynomial of degree less than $K$ by its coefficient vector:
\begin{align}
Q(Y) &\coloneq \sum_{r\in[K]}x_rY^r, &
x &\coloneq (x_r)_{r\in[K]} \in\F_q^K.
\end{align}
Polynomial evaluation is linear in the coefficients: multiplying the row $(1,\alpha,\ldots,\alpha^{K-1})$ by the coefficient vector $x$ gives $Q(\alpha)$. Write $\alpha_{i,j}\coloneq\alpha_{ih+j}$ for $i\in[M]$ and $j\in[h]$. For block $i$, we collect the $h$ evaluation rows into $B_i$; stacking these matrices gives the full encoder:
\begin{align}
B_i &\coloneq
\begin{bmatrix}
1&\alpha_{i,0}&\cdots&\alpha_{i,0}^{K-1}\\
\vdots&\vdots&&\vdots\\
1&\alpha_{i,h-1}&\cdots&\alpha_{i,h-1}^{K-1}
\end{bmatrix}
\in\F_q^{h\times K}, \qquad i\in[M], \\
B &\coloneq
\begin{bmatrix}
B_0\\
\vdots\\
B_{M-1}
\end{bmatrix}
\in\F_q^{hM\times K}.
\end{align}
Thus $B_i x=c_i(Q)$ is the evaluation block of $Q$ at the $i$th group of points, and $Bx$ collects all $M$ blocks. Since the evaluation points are distinct and $N\geq K$, this encoder has full column rank. Thus the coefficient vector, the polynomial, and its folded evaluation word $c(Q)\coloneq(c_i(Q))_{i\in[M]}$ are equivalent descriptions of a feasible solution. The feasible code is precisely $\operatorname{FRS}_{\boldsymbol\alpha,K,h}$ from \cref{def:folded-rs-code}.

Given acceptance sets $A_i\subseteq\F_q^h$, the objective is to find a polynomial $Q\in\F_q[Y]$ of degree less than $K$ that maximizes the accepted-block fraction $\scor(Q)\coloneq\frac1M\sum_{i\in[M]} \mathds{1}_{\{c_i(Q)\in A_i\}}$. We refer to this specialization of folded max-LINSAT as \emph{folded OPI}. Each accepted block contributes one unit to the unnormalized objective. The acceptance sets may impose arbitrary joint conditions on the $h$ evaluations within a block; they need not be Cartesian products of scalar acceptance sets. Folding therefore preserves the polynomial space while allowing a different objective and unit of oracle access: one membership query tests one complete block value.
For $h=1$, one has $M=N$ and recovers scalar OPI. The folded OPI formulation above fits the extension of DQI to block predicates developed in Ref.~\cite{JordanEtAl2025DQI}.

The score and approximation-ratio conventions are inherited from \cref{def:folded-max-linsat}, with
\begin{align}
\scor(Q) &\coloneq \scor(c(Q)),
& \operatorname{ratio}(Q) &\coloneq \operatorname{ratio}(c(Q)).
\end{align}

\subsection{Acceptance sets and membership-oracle access}
\label{sec:balanced-oracle}

The encoder $B$ and the block alphabet $\Sigma$ are specified explicitly and independently of the oracle. The oracle-dependent part of an instance is the collection of acceptance sets $\mathcal A=(A_i)_{i\in[M]}$. For each block $i\in[M]$, define its Boolean membership function by
\begin{align}
O_{A_i}(z)
&\coloneq \mathds{1}_{\{z\in A_i\}},
\qquad z\in\Sigma.
\label{eq:membership-table}
\end{align}
The instance provides the family of membership oracles
\begin{align}
\calO_{A_i}\ket{z,t}
&\coloneq
\ket{z,t\oplus O_{A_i}(z)},
\qquad i\in[M],\quad t\in\F_2.
\label{eq:general-membership-oracle}
\end{align} 
Throughout, one query means one invocation of one fixed oracle $\mathcal O_{A_i}$, with $i$ chosen classically. The symbol and answer registers may be in arbitrary superposition, but the interface provides neither a coherent block-index register nor a multiplexed oracle at unit cost. Thus one query tests whether one complete block symbol $z$ is accepted in one chosen block.

When $|\Sigma|$ is even, we call the oracle family \emph{balanced} if $|A_i| =|\Sigma|/2$ for every $i \in [M]$.
The preceding optimization definitions also apply when the acceptance sets are given explicitly; the separation below uses only the membership interface above.

\section{Statement of the concrete separation}
\label{sec:concrete-problem}

We first define the explicit balanced folded-OPI instance family and then state the concrete separation theorem for DQI.

\subsection{The instance family and oracle model}
\label{sec:density-oracle-model}

For the folded Reed-Solomon code, fix a rational target rate $R\in(0,1)$. For a positive integer $M$ such that $M+1$ is a power of two, set
\begin{align}
     h&\coloneq M+2,
     &
     N&\coloneq hM,
     &
     q&\coloneq N+1=(M+1)^2,
     &
     K&\coloneq \floor{RN}.
     \label{eq:frs-parameters}
\end{align}
We also write
\begin{align}
    J &\coloneq \floor{K/h} = \floor{RM}.
\label{eq:frs-complete-block-count}
\end{align}
We can choose an arbitrary target symbol in each of $J$ blocks and interpolate a polynomial of degree less than $K$ matching all these targets simultaneously. Each target specifies $h$ evaluation values, so $J$ is the largest number of complete blocks for which this is always possible: $hJ\leq K<h(J+1)$. Prange's algorithm in \cref{sec:proof-prange} uses this freedom to make the selected blocks accept.
We call $M$ \emph{admissible for $R$} when, in addition, $K\geq h$. For every fixed $R>0$, this excludes only finitely many values of $M$. These choices make $q$ a power of two and give $hM=q-1$, so the nonzero field elements can be partitioned into exactly $M$ blocks of $h$ evaluation points each. The actual finite-length code rate is $\frac KN=R+\calO(1/N)$, so $R$ denotes the limiting target rate rather than the exact finite-length rate. Use the canonical representation of $\F_q$ and primitive element $\gamma\in\F_q^\times$. Define $\alpha_{i,j} \coloneq  \gamma^{ih+j}$ for $i\in[M]$ and $j \in[h]$.
As $i$ and $j$ vary, these points enumerate $1,\gamma,\ldots,\gamma^{N-1}$, and hence all nonzero elements of $\F_q$ exactly once.

\begin{figure}[!t]
\centering
\resizebox{0.9\linewidth}{!}{
\begingroup
\definecolor{fmInk}{HTML}{004D80}
\definecolor{fmBlue}{HTML}{006BB1}
\definecolor{fmBlueLight}{HTML}{D1EEFF}
\definecolor{fmBlueMid}{HTML}{99D9FF}
\definecolor{fmOracle}{HTML}{970E53}
\definecolor{fmOracleLight}{HTML}{E69DD3}
\definecolor{fmMuted}{HTML}{004D80}
\definecolor{fmRule}{HTML}{99D9FF}
\begin{tikzpicture}[
  x=1cm,y=1cm,
  font=\rmfamily\fontsize{9}{11}\selectfont,
  text=fmInk,
  line cap=round,line join=round,
  every node/.style={inner sep=2pt},
  heading/.style={font=\rmfamily\bfseries\fontsize{9}{11}\selectfont},
  note/.style={font=\rmfamily\fontsize{8}{10}\selectfont,text=fmMuted,align=center},
  dim/.style={font=\rmfamily\fontsize{8}{10}\selectfont,text=fmBlue,align=center},
  block/.style={draw=fmBlue!65,line width=.7pt,fill=fmBlueLight,
    rounded corners=2pt},
  set/.style={draw=fmOracle,line width=.8pt,fill=fmOracleLight,
    rounded corners=5pt},
  brace/.style={decorate,decoration={brace,amplitude=4pt},
    draw=fmBlue,line width=.65pt},
  dimension/.style={Latex-Latex,draw=fmBlue,line width=.6pt}
]
\path[fill=fmBlueLight,rounded corners=4pt]
  (.20,6.68) rectangle (6.05,7.72);
\node[heading] at (3.125,7.40) {Block max-LINSAT};
\node[note] at (3.125,6.98) {linear encoder $B$; tests $B_i x\in A_i$};
\draw[-{Latex[length=2mm]},fmBlue,line width=.85pt]
  (6.35,7.15)--(8.90,7.15);
\node[note] at (7.625,7.49) {choose Vandermonde $B$};
\path[fill=fmBlueLight,rounded corners=4pt]
  (9.20,6.68) rectangle (15.05,7.72);
\node[heading] at (12.125,7.40) {Balanced folded OPI};
\node[note] at (12.125,6.98) {polynomial evaluations; $\deg Q<K$};

\node[heading] at (2.85,5.98) {Vandermonde encoder};
\node[heading] at (6.10,5.98) {Coefficients of $Q$};
\node[heading] at (9.07,5.98) {Evaluations};
\node[heading,text=fmOracle] at (13.35,5.98) {Oracle acceptance sets};

\draw[dimension] (1.15,5.32) -- (4.55,5.32)
  node[midway,above=3pt,dim] {$K=\lfloor RhM\rfloor$ columns};

\draw[block] (1.15,3.95) rectangle (4.55,4.94);
\draw[block] (1.15,2.72) rectangle (4.55,3.71);
\draw[block] (1.15,.95) rectangle (4.55,1.94);
\foreach \yy/\idx in {4.445/0,3.215/1,1.445/{M-1}} {
  \node[font=\fontsize{12}{14}\selectfont] at (2.25,\yy) {$B_{\idx}$};
  \node[dim] at (3.60,\yy) {$h\times K$};
}
\node[font=\fontsize{14.4}{17}\selectfont,text=fmBlue] at (2.85,2.32) {$\vdots$};
\draw[fmBlue,line width=1.2pt]
  (1.04,5.03)--(.86,5.03)--(.86,.86)--(1.04,.86);
\draw[fmBlue,line width=1.2pt]
  (4.66,5.03)--(4.84,5.03)--(4.84,.86)--(4.66,.86);
\draw[brace] (.48,.88)--(.48,5.01)
  node[midway,xshift=-.30cm,dim,rotate=90] {$M$ blocks; $N=hM$ rows};
\draw[dimension] (4.99,3.95)--(4.99,4.94)
  node[midway,right=2pt,yshift=3pt,dim] {$h$};

\draw[fill=fmBlueMid,draw=none,rounded corners=2pt]
  (5.50,1.64) rectangle (6.70,4.25);
\draw[fmBlue,line width=1.05pt]
  (5.42,4.32)--(5.27,4.32)--(5.27,1.57)--(5.42,1.57);
\draw[fmBlue,line width=1.05pt]
  (6.78,4.32)--(6.93,4.32)--(6.93,1.57)--(6.78,1.57);
\node at (6.10,3.96) {$x_0$};
\node at (6.10,3.41) {$x_1$};
\node[font=\fontsize{12}{14}\selectfont] at (6.10,2.91) {$\vdots$};
\node at (6.10,1.97) {$x_{K-1}$};
\node[note] at (6.10,4.74) {$K$ entries};
\node[font=\fontsize{17.28}{21}\selectfont] at (7.36,2.95) {$=$};

\foreach \lo/\hi/\yy/\idx in {
  3.95/4.94/4.445/0,2.72/3.71/3.215/1,.95/1.94/1.445/{M-1}} {
  \draw[block] (8.10,\lo) rectangle (10.04,\hi);
  \node[font=\fontsize{10}{12}\selectfont] at (9.07,{\yy+.16}) {$B_{\idx}x$};
  \node[dim] at (9.07,{\yy-.23}) {$h\times 1$};
}
\node[font=\fontsize{14.4}{17}\selectfont,text=fmBlue] at (9.07,2.32) {$\vdots$};
\draw[fmBlue,line width=1.2pt]
  (7.99,5.03)--(7.81,5.03)--(7.81,.86)--(7.99,.86);
\draw[fmBlue,line width=1.2pt]
  (10.15,5.03)--(10.33,5.03)--(10.33,.86)--(10.15,.86);
\node[note] at (9.07,5.48) {$M$ symbols in $\Sigma=\mathbb F_q^h$};

\foreach \yy/\idx in {4.445/0,3.215/1,1.445/{M-1}} {
  \node[font=\fontsize{14.4}{17}\selectfont,text=fmOracle] at (11.03,\yy)
    {$\overset{?}{\in}$};
  \draw[set] (11.78,{\yy-.495}) rectangle (14.92,{\yy+.495});
  \node[font=\fontsize{12}{14}\selectfont,text=fmOracle] at (13.35,\yy)
    {$A_{\idx}\subseteq\Sigma$};
}
\node[font=\fontsize{14.4}{17}\selectfont,text=fmOracle] at (13.35,2.32) {$\vdots$};
\node[note,text=fmOracle] at (13.35,5.48) {one query per membership test};

\node[align=center] at (2.85,.36)
  {$B\in\mathbb F_q^{N\times K}$};
\node[align=center] at (6.10,.36)
  {$x\in\mathbb F_q^K$};
\node[align=center] at (9.07,.36)
  {$c(Q)=Bx\in\Sigma^M$};
\node[align=center,text=fmOracle] at (13.35,.36)
  {$|A_i|=|\Sigma|/2=q^h/2$};

\draw[fmRule,line width=.6pt] (.20,-.10)--(15.05,-.10);
\node[heading] at (7.625,-.47)
  {Vandermonde multiplication is polynomial evaluation};
\node at (3.14,-1.40) {
  $B_i=\begin{bmatrix}
  1&\alpha_{i,0}&\cdots&\alpha_{i,0}^{K-1}\\
  \vdots&\vdots&&\vdots\\
  1&\alpha_{i,h-1}&\cdots&\alpha_{i,h-1}^{K-1}
  \end{bmatrix}$};
\node at (7.74,-1.40)
  {$\displaystyle Q(Y)=\sum_{r=0}^{K-1}x_rY^r$};
\node at (12.33,-1.40) {
  $B_i x=c_i(Q)=\begin{bmatrix}
  Q(\alpha_{i,0})\\\vdots\\Q(\alpha_{i,h-1})
  \end{bmatrix}$};
\node[note] at (3.14,-2.44)
  {distinct public evaluation points};
\node[note] at (7.74,-2.44)
  {coefficients $x_0,\ldots,x_{K-1}$};
\node[note] at (12.33,-2.44)
  {one block contains $h$ evaluations};

\draw[fmRule,line width=.6pt] (.20,-2.91)--(15.05,-2.91);
\node[heading] at (7.625,-3.27) {Parameters of our folded-OPI family};
\node[anchor=north west,align=left,font=\rmfamily\fontsize{8}{10}\selectfont]
  at (.48,-3.64)
  {\textbf{Block count and fold width}\\[4pt]
   $M\ \text{blocks},\qquad h=M+2$\\[3pt]
   $N=hM,\qquad q=(M+1)^2$};
\node[anchor=north west,align=left,font=\rmfamily\fontsize{8}{10}\selectfont]
  at (5.50,-3.64)
  {\textbf{Rate and evaluation points}\\[4pt]
   $K=\lfloor RN\rfloor,\qquad J=\lfloor K/h\rfloor=\lfloor RM\rfloor$\\[3pt]
$\alpha_{i,j}=\gamma^{ih+j},\quad\gamma\ \text{primitive}$};
\node[anchor=north west,align=left,
      font=\rmfamily\fontsize{8}{10}\selectfont,text=fmOracle]
  at (10.85,-3.64)
  {\textbf{Alphabet and acceptance sets}\\[4pt]
   $|\Sigma|=q^h=\exp(\Theta(M\log M))$\\[3pt]
   $|A_i|=|\Sigma|/2$};

\path[fill=fmBlueLight,rounded corners=4pt]
  (.20,-6.24) rectangle (15.05,-5.22);
\node at (7.63,-5.73)
  {\textbf{Task:} Maximize the fraction of accepted blocks:
   $\displaystyle\quad\operatorname{score}_{\mathcal A}(Q)
       =\frac1M\sum_{i=0}^{M-1}\mathds 1_{\{B_ix\in A_i\}}$.};
\end{tikzpicture}
\endgroup}
\caption{From block max-LINSAT to folded OPI. For a general linear encoder $B$, the objective counts the accepted blocks $B_ix\in A_i$. Choosing Vandermonde rows at distinct evaluation points identifies $x$ with the coefficients of $Q(Y)=\sum_{r=0}^{K-1}x_rY^r$, so $B_ix=(Q(\alpha_{i,j}))_{j=0}^{h-1}$. This gives folded polynomial intersection: one membership query tests an entire $h$-vector of polynomial evaluations. The lower panel records our concrete family parameters for admissible $M$. Each acceptance set contains exactly half of the block symbols; the score is the fraction of blocks accepted by the chosen polynomial. The output is shown with its folded grouping, and dimensions are schematic.}
\label{fig:folded-matrix-shapes}
\end{figure}

Put $\Sigma\coloneq \F_q^h$. For a polynomial $Q\in\F_q[Y]$ with $\deg Q<K$, its $i$th folded evaluation is
\begin{align}
 c_i(Q)
 &\coloneq 
 \bigl(Q(\gamma^{ih+j})\bigr)_{j\in[h]}
 \in
 \Sigma.
\end{align}
The resulting folded Reed--Solomon code is
\begin{align}
 C_{\rm FRS}
 \coloneq 
 \operatorname{FRS}_{\boldsymbol\alpha,K,h} = 
 \left\{
 (c_i(Q))_{i\in[M]}:
 Q\in\F_q[Y],\ \deg Q<K
 \right\}
 \subseteq
 \Sigma^M,
 \label{eq:concrete-frs-code}
\end{align}
where $\boldsymbol\alpha=(\gamma^t)_{t\in[N]}$.

The block alphabet has even cardinality $|\Sigma|=q^h=\exp(\Theta(M\log M))$, with each symbol represented using $h\log_2 q=\Theta(M\log M)$ bits. The acceptance sets are accessed through membership oracles, and all running-time bounds are measured in $M$.
Every block accepts exactly $|\Sigma|/2$ symbols. Thus balance is exact at every admissible size, with no rounding of the acceptance density.

\Cref{fig:folded-matrix-shapes} illustrates the specialization of block max-LINSAT to folded OPI and the parameters of our balanced membership-oracle model. Formally, we define balanced folded OPI as follows:

\begin{definition}[Balanced folded OPI]
\label{def:folded-opi}
For the folded Reed--Solomon data above, an instance is a tuple $\mathcal A=(A_i)_{i\in[M]}$ with $A_i\subseteq\Sigma$ and $\abs{A_i}=|\Sigma|/2$ for every $i$, accessed through the oracle defined in \eqref{eq:general-membership-oracle}.

A feasible output is a polynomial $Q\in\F_q[Y]$ of degree less than $K$, represented by its canonical length-$K$ coefficient vector.
Its score is the fraction of blocks whose evaluation symbols belong to their acceptance sets:
\begin{align}
\scor(Q) &\coloneq \scor(c(Q)) =\frac1M\sum_{i\in[M]} \mathds{1}_{\{c_i(Q)\in A_i\}}.
\end{align}
The threshold and multiplicative-approximation tasks use \cref{def:folded-max-linsat}. In particular, a score threshold $\beta$ requires at least $\ceil{\beta M}$ accepted blocks.
\end{definition}

\subsection{Concrete separation theorem}
For the balanced folded-OPI family and parameter choices above, the following theorem proves that DQI exceeds the asymptotically optimal score attainable with polynomially many classical queries under the sampled-oracle distribution. This establishes a provable quantum advantage in this membership-oracle setting.

\begin{theorem}[Quantum--classical query separation for folded OPI]
\label{thm:frs-separation}
Fix a rational rate $R\in(0,1)$ and the balanced folded-OPI family above. Define
\begin{align}
\alpha_{\rm C}(R)
&\coloneq\frac{1+R}{2},
&
\alpha_{\rm Q}(R)
&\coloneq\frac{1+\sqrt{R(2-R)}}2.
\label{eq:concrete-score-values}
\end{align}

We identify every valid coefficient-vector output with its polynomial. We write $A(\omega)$ for the polynomial output of a classical algorithm $A$ with internal random tape $\omega$, suppressing dependence on $M$ and $\mathcal A$.
In parts \textup{(ii)} and \textup{(iii)}, the sampled-oracle distribution chooses the sets $A_i$ independently and uniformly among the half-size subsets of $\Sigma$.
The following statements hold for all sufficiently large admissible $M$, with limits taken through admissible values.
\begin{enumerate}
    \item There is an explicit DQI algorithm running in quantum polynomial time and using one quantum query to each fixed-block membership oracle $\mathcal O_{A_i}$, hence at most $M$ queries in total. For every promised $\mathcal A$, its output satisfies
    \begin{align}
    \E\bigl[\scor(Q_{\rm DQI})\bigr]
    &=\alpha_{\rm Q}(R)+o(1),
    \label{eq:frs-dqi-value}
    \end{align}
    uniformly over $\mathcal A$. It uses a conventional generalized Reed--Solomon unique decoder strictly below half the minimum distance of the unfolded scalar dual code.
    
    \item There is a randomized polynomial-time implementation of Prange's information-set algorithm using $\calO(M)$ classical membership queries in the worst case such that
    \begin{align}
    \scor(Q_{\rm Prange})
    &\xrightarrow{\Prb}\alpha_{\rm C}(R).
    \label{eq:frs-prange}
    \end{align}
    The convergence is in probability over the sampled instance and the algorithm's randomness.
    
    \item Under the same sampled-oracle distribution, for every fixed $0<\eta<1-\alpha_{\rm C}(R)$, there are constants $\theta_{R,\eta},c_{R,\eta}>0$ and $M_0=M_0(R,\eta)$ such that, for every admissible $M\geq M_0$, every randomized classical algorithm $A$ making at most
    $T_M\leq\exp(\theta_{R,\eta}M/\log M)$ queries in the worst case satisfies
    \begin{align}
    \Prb\left[
    \scor\bigl(A(\omega)\bigr)
    \geq\alpha_{\rm C}(R)+\eta
    \right]&\leq e^{-c_{R,\eta}M}.
    \label{eq:frs-classical-lower}
    \end{align}
    The algorithm may use unlimited computation and arbitrary finite advice, provided its description and advice are fixed before $\mathcal A$ is sampled. The probability is over the sampled oracle $\mathcal A$ and the algorithm's random tape $\omega$, which is independent of $\mathcal A$.
\end{enumerate}
\end{theorem}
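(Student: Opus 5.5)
We prove the three parts separately; only part (iii) needs a genuinely new argument.

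\textbf{Part (i).} We instantiate the folded DQI construction of Jordan et al.\ for the code $C_{\rm FRS}$ with balanced predicates. For each block we define the centred function $g_i(z)=\mathds 1_{\{z\in A_i\}}-\tfrac12$. Its Fourier transform over $\Sigma=\F_q^h$ vanishes at $0$ and has $\ell_2$ mass fixed by balance, so every block has the same normalised weight and nothing depends on $\mathcal A$. This is why the guarantee is uniform over promised instances.

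We prepare the Dicke-type superposition over error weights $k\le\ell$ with the optimal semicircle amplitudes. The block Fourier states $\sum_z \hat g_i(z)\ket z$ are prepared using one coherent query to each $\mathcal O_{A_i}$ (compute, Fourier transform, uncompute). We then compute the syndrome $B^T y$ and uncompute $y$ with a decoder for $C_{\rm FRS}^\perp=\operatorname{Fold}_h(\operatorname{GRS}_{\boldsymbol\alpha,\boldsymbol\nu,N-K})$.

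A block error of weight $k$ has scalar weight at most $hk$. The scalar dual distance is $K+1$, so unique decoding needs $hk<(K+1)/2$. We therefore choose $\ell=\lfloor (K-1)/(2h)\rfloor$, which gives $\ell/M\to R/2$. Berlekamp--Massey after the rescaling by $\nu_t$ decodes within this radius.

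The standard DQI semicircle analysis then applies verbatim with $p/r=1/2$ and $\ell/M\to\mu=R/2$:
\[
\E[\scor]\to \tfrac12+\sqrt{\mu(1-\mu)}=\frac{1+\sqrt{R(2-R)}}{2}.
\]
The $o(1)$ error comes from the finite-$M$ rounding of $\ell$.

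\textbf{Part (ii).} The algorithm picks $J$ blocks uniformly at random. For each chosen block it samples uniform symbols and queries until one is accepted; each trial succeeds with probability $1/2$. To make the query count worst-case $\calO(M)$, we cap the total number of queries at $cM$ and output arbitrary symbols if the cap is hit. We interpolate $Q$ through the $hJ\le K$ chosen values, filling the remaining degrees of freedom with fresh randomness.

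Under the sampled distribution, the acceptance sets on the other $M-J$ blocks are independent of $Q$. Each of these blocks therefore accepts with probability exactly $1/2$, independently. Chebyshev's inequality gives $\scor\to J/M+(1-J/M)/2\to(1+R)/2$ in probability. Exceeding the query cap is exponentially unlikely.

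\textbf{Part (iii), the main obstacle.} We follow Yamakawa--Zhandry and use lazy sampling of the oracles. After $T$ queries, call a block $i$ \emph{known} for a codeword $c$ if $c_i$ has been queried at block $i$. The output $Q$ scores at least $\alpha_{\rm C}+\eta$ only if one of two things happens. Either $Q$ is known on at least $(R+\eta')M$ blocks, or $Q$ is accepted on many of its unqueried blocks.

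\emph{Known blocks.} The folded RS code with $h=M+2$ is list-recoverable from input lists of size $T$ on $(R+\eta')$ fractions of blocks. More precisely, codewords that agree with the queried-symbol lists on at least $(R+\eta')M$ blocks number at most $T^{\calO(1/\eta')}$. The key condition is that agreement above rate $R$ with lists of size $T\le\exp(\theta M/\log M)$ forces a polynomial-size list; this is where the $M/\log M$ exponent arises. We also need that the accepted known positions, rather than just queried ones, are what count, which may require restricting the lists to accepted symbols. We expect to need only the Guruswami--Rudra-type bound; obtaining the exact constants is the delicate part.

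\emph{Unknown blocks.} Since $\alpha_{\rm C}+\eta=(1+R)/2+\eta$, a codeword known on fewer than $(R+\eta')M$ blocks must be accepted on more than a $\tfrac12+\eta''$ fraction of its unqueried blocks. Conditioned on the transcript, the unqueried values are nearly uniform. The only deviation comes from the removed queried symbols, a negligible correction since $T\ll|\Sigma|$, so these values behave like independent Bernoulli$(1/2)$ variables. A Chernoff bound makes this event $e^{-cM}$ for a fixed $Q$.

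The algorithm's output, however, is a function of the transcript, and a union bound over all $q^K$ codewords is too costly. We handle this with the standard trick: the output is a single codeword determined by the transcript, so conditioned on the transcript the Chernoff bound applies directly to that codeword. This disposes of the unknown-blocks case for codewords known on few blocks.

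\emph{Known-blocks case.} Here we must bound the probability that, after $T$ adaptive queries, some codeword has at least $(R+\eta')M$ accepted queried blocks. We argue inductively over queries, as in Yamakawa--Zhandry. Each new accepted query extends a candidate only if it lies in that candidate's list. Combining list recovery with the $1/2$ acceptance probability bounds the number of such codewords, and a binomial tail over queries gives $e^{-cM}$ when $T\le\exp(\theta M/\log M)$. We then take a union bound over polynomially many list candidates times $T$, and choose $\theta$ small relative to $c$.
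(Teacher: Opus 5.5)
Parts (ii) and the outline of (i) follow the paper's route. The genuine gap is in part (iii), in your known-blocks case.

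\textbf{Why the known-blocks case does not close.} You split according to the \emph{final} transcript. Your unknown-blocks branch is sound: conditioned on the whole transcript, the output is a fixed codeword whose unqueried blocks are independent, nearly fair coins. This branch gives a linear excess only when $\eta'<2\eta$. The event you must then bound, namely that some codeword has at least $(R+\eta')M$ accepted queried blocks, is not rare. Run Prange's algorithm and afterwards query the interpolated codeword on its $M-J$ remaining blocks. With $\calO(M)$ queries this produces a codeword with about $\frac{1+R}{2}M$ accepted queried blocks. That exceeds $(R+\eta')M$ with probability $1-o(1)$ whenever $\eta'<\frac{1-R}{2}$, which your constraint $\eta'<2\eta$ forces for every $\eta\le\frac{1-R}{4}$. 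For larger $\eta'$, bounding your event is essentially the theorem itself. So the decomposition does not reduce the problem, and the inductive sketch (accepted queries extending list candidates, a binomial tail over queries) does not yield a bound. The underlying obstruction is that blocks of a codeword queried \emph{after} it became a candidate are exposed in the final transcript. Conditioning at the end of the run is therefore unavailable for exactly the codewords that matter.

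\textbf{The missing idea: condition at a stopping time.} Append $M$ verification queries, so that the output is queried on every block. For each codeword $c$, let $E_c$ be the event that at some time $\lceil\tau M\rceil$ of its blocks have been queried, with $\tau=R+\eta$. Note that this threshold is on \emph{queried}, not accepted, blocks. Condition on the history at the first such time. Then the memberships of $c$ in its other $M-\lceil\tau M\rceil$ blocks are independent with bias at most $\frac12+o(1)$ (since $T\ll|\Sigma|$), whatever is queried later. Hence $\Prb[E_c\cap\{\scor(c)\geq\beta\}]\leq p\,\Prb[E_c]$, where
$p=\Prb\bigl[\lceil\tau M\rceil+\operatorname{Bin}(M-\lceil\tau M\rceil,\tfrac12+o(1))\geq\beta M\bigr]\leq e^{-2\eta^2M/9}$.
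The tail bound holds because the mean $\frac{1+R+\eta}{2}M$ lies $\frac{\eta}{2}M$ below $\beta M$. Summing over $c$ gives $\Prb[\scor(c_{\rm out})\geq\beta]\leq p\,\E\bigl[|\mathcal B_\tau|\bigr]\leq p\,\LR_C(\tau,T+M)$. This single estimate covers both of your cases at once.

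\textbf{Balancing the exponents.} The available folded Reed--Solomon list-recovery bound is Rudra's, via Yamakawa--Zhandry's Lemma~4.3 with $s\approx\frac{\eta^2}{100}\frac{M}{\log M}$. It gives $\LR_C\leq q^s\leq e^{3\eta^2M/100}$. This is exponential in $M$, not polynomial in $T$, and the argument closes only because $\frac{3}{100}<\frac29$. The query budget $T\leq e^{\theta M/\log M}$ is chosen to keep $T^{1/(s+1)}$ bounded. Your claimed $T^{\calO(1/\eta')}$ list size, and the union bound over polynomially many candidates, would need separate justification.

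\textbf{A minor point in part (i).} With $\ell=\lfloor (K-1)/(2h)\rfloor$ one can get $2\ell=J$; for example, $R=3/10$ and $M=63$ give $K=1228$, $J=18$, $\ell=9$. Because the dual is MDS, $d_{\rm blk}(C^\perp)=J+1$ exactly. The strict condition $2\ell+1<d_{\rm blk}(C^\perp)$ used in the semicircle calculation then fails by one, even though unique decoding still works. Taking $\ell=\lfloor (J-1)/2\rfloor$ restores the condition with the same limit $R/2$.
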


\begin{proof}
Part~\textup{(i)} is proved in \cref{sec:proof-quantum}, with full details in \cref{app:dqi}; parts~\textup{(ii)} and \textup{(iii)} are proved in \cref{sec:proof-prange,sec:proof-classical}.
\end{proof}

For every rate $R\in(0,1)$, the quantum score $\alpha_{\rm Q}(R)$ strictly exceeds the classical benchmark $\alpha_{\rm C}(R)$.

\paragraph{From expected score to verified threshold search.}
\Cref{thm:frs-separation}\textup{(i)} guarantees an expected score. To obtain an algorithm that reaches a certain score with bounded error, we first lower-bound the probability that a single DQI run reaches that score.
For any random variable $X\in[0,1]$ and threshold $\beta<\E[X]$, $\Prb[X\geq\beta] \geq \frac{\E[X]-\beta}{1-\beta}$. Fix a rational $\beta\in(\alpha_{\rm C}(R),\alpha_{\rm Q}(R))$. By \eqref{eq:frs-dqi-value}, for all sufficiently large admissible $M$,
\begin{align}
\Prb\left[
\scor(Q_{\rm DQI})\geq\beta
\right]
&\geq
\frac{\alpha_{\rm Q}(R)-\beta}{2(1-\beta)}.
\label{eq:expectation-to-search-concrete}
\end{align}
uniformly over every promised $\mathcal A$. Thus one DQI run succeeds with positive constant probability. A candidate can be verified using $M$ membership queries, so a constant number of verified repetitions raises the success probability to at least $2/3$ using $\calO_{R,\beta}(M)$ queries. If no candidate passes verification within the prescribed number of attempts, the algorithm reports failure.

\paragraph{Score and multiplicative approximation.}
To express the separation in terms of multiplicative approximation, we must relate the achieved scores to the optimum of each instance.
Since the optimum is at most one, the verified quantum threshold guarantee above also gives a $\beta$-approximation on every promised oracle. For the classical direction, a large approximation ratio need not imply a large score when the instance optimum is small. We therefore use \cref{lem:typical-perfect-satisfiability}, which shows that $\operatorname{OPT}=1$ except with doubly exponentially small probability under the sampled-oracle distribution. On the event $\operatorname{OPT}=1$, approximation ratio and score coincide, so the classical score query lower bound transfers to approximation.

\begin{corollary}[Ratio separation for DQI]
\label{cor:typical-dqi-ratio-separation}
Fix a rational rate $R\in(0,1)$ and a rational $r$ satisfying $\alpha_{\rm C}(R)<r<\alpha_{\rm Q}(R)$.
There is a DQI algorithm using $\calO_{R,r}(M)$ queries and quantum polynomial time. For all sufficiently large admissible $M$ and every promised oracle $\mathcal A$, the algorithm returns with probability $\geq 2/3$ a polynomial $Q_{{\rm DQI},r}$ with $\operatorname{ratio}(Q_{{\rm DQI},r})\geq r$. There also exist constants $\theta_{R,r},c_{R,r}>0$ such that, for all sufficiently large admissible $M$, every classical algorithm $A$ satisfying the hypotheses from \cref{thm:frs-separation}\textup{(iii)} with $T_M\leq\exp(\theta_{R,r}M/\log M)$ satisfies
\begin{align}
\Prb\left[
\operatorname{ratio}\bigl(A(\omega)\bigr)\geq r
\right]
&\leq e^{-c_{R,r}M}.
\label{eq:typical-dqi-classical-ratio-bound}
\end{align}
Prange's algorithm from \cref{thm:frs-separation}\textup{(ii)} achieves an approximation ratio converging in probability to $\alpha_{\rm C}(R)$.
The quantum success probability is over the algorithm's randomness for each fixed promised oracle. The classical probability and the Prange convergence are over the sampled oracle and the algorithm's independent randomness.
\end{corollary}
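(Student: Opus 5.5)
The plan has three parts, matching the three claims of \cref{cor:typical-dqi-ratio-separation}: the quantum ratio guarantee, the classical ratio lower bound, and Prange's ratio convergence. Each reduces to \cref{thm:frs-separation} together with \cref{lem:typical-perfect-satisfiability}.

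\textbf{Quantum direction.} Choose a rational $\beta$ with $r\le\beta<\alpha_{\rm Q}(R)$; taking $\beta=r$ is allowed since $r>\alpha_{\rm C}(R)$. Apply the verified-repetition argument after \cref{thm:frs-separation}. By \eqref{eq:expectation-to-search-concrete}, one DQI run reaches score at least $\beta$ with probability at least $p\coloneq(\alpha_{\rm Q}(R)-\beta)/(2(1-\beta))>0$, uniformly over promised $\mathcal A$, once $M$ is large enough. Verify each candidate with $M$ classical membership queries, counting accepted blocks against $\lceil\beta M\rceil$. Repeat $k=\lceil\ln 3/p\rceil$ times, so the failure probability is at most $(1-p)^k\le 1/3$. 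The query count is $k(M+M)=\calO_{R,r}(M)$ and the running time is polynomial. Since $\operatorname{OPT}\le1$, a verified output satisfies $\operatorname{ratio}\ge\scor\ge\beta\ge r$. The degenerate case $\operatorname{OPT}=0$ cannot occur once a verified output exists, because then $\scor>0$.

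\textbf{Classical direction.} Set $\eta\coloneq r-\alpha_{\rm C}(R)$. This lies in $(0,1-\alpha_{\rm C}(R))$ because $r<\alpha_{\rm Q}(R)<1$. Take $\theta_{R,r}\coloneq\theta_{R,\eta}$ from \cref{thm:frs-separation}\textup{(iii)}. Let $E$ be the event $\operatorname{OPT}=1$. On $E$ we have $\operatorname{ratio}=\scor$, so
\[
\Prb[\operatorname{ratio}(A(\omega))\ge r]\le \Prb[\scor(A(\omega))\ge \alpha_{\rm C}(R)+\eta]+\Prb[E^c].
\]
The first term is at most $e^{-c_{R,\eta}M}$ by \eqref{eq:frs-classical-lower}. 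By \cref{lem:typical-perfect-satisfiability}, the second term is doubly exponentially small, in particular at most $e^{-c_{R,\eta}M}$ for large $M$. Hence the bound holds with $c_{R,r}\coloneq c_{R,\eta}/2$ for all sufficiently large admissible $M$. Note that $E$ depends only on $\mathcal A$, so its independence from $\omega$ is not even needed for the union bound.

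\textbf{Prange.} On $E$, $\operatorname{ratio}(Q_{\rm Prange})=\scor(Q_{\rm Prange})$. Because $\Prb[E^c]\to0$, the two quantities agree with probability $1-o(1)$. Convergence in probability from \eqref{eq:frs-prange} therefore transfers directly to the ratio.

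The only substantive input is \cref{lem:typical-perfect-satisfiability}, which I take as given. If it had to be proved here, I would use a second-moment argument on the number of perfectly accepted codewords. Each of the $q^K$ codewords is fully accepted with probability $2^{-M}$, and pairs of distinct codewords agree on fewer than $M$ blocks, which makes acceptances nearly independent. Since $q^K=\exp(\Theta(M^2\log M))\gg 2^M$, this should give a tiny failure probability, although a second-moment bound alone yields only polynomial decay. Reaching doubly exponential decay would require a Janson-type inequality or block-by-block interpolation arguments. This is the step I expect to be the main obstacle. The remaining steps are routine bookkeeping of constants and of the ``sufficiently large $M$'' thresholds.
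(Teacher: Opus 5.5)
Your proposal is correct and follows the paper's proof essentially step for step. On the quantum side you use verified DQI repetition at $\beta=r$ together with $\operatorname{OPT}\leq1$, and for both the classical ratio bound and Prange's ratio convergence you use a union bound with the complement of $\{\operatorname{OPT}=1\}$ from \cref{lem:typical-perfect-satisfiability}.

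Your closing worry does not affect the argument. The corollary only needs $\Prb[\operatorname{OPT}<1]\leq e^{-cM}$, and the paper proves that lemma separately via McDiarmid's inequality on an independent-bit auxiliary oracle. Even your second-moment route would give decay $\exp(-\Theta(M\log M))$ here rather than only polynomial decay, because the relative variance of $N_{\rm perf}$ is at most $1/\E[N_{\rm perf}]+(1+2/|\Sigma|)^M-1$.
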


\begin{proof}
See \cref{app:ordinary-dqi-consequences}.
\end{proof}

\section{Proof of the concrete separation}
\label{sec:concrete-proof}

The three parts of \cref{thm:frs-separation} use complementary properties of the same code. First, we show that DQI achieves the limiting expected score $\alpha_{\rm Q}(R)$ using conventional unique decoding of the unfolded scalar dual (\cref{sec:proof-quantum}). Next, Prange's algorithm gives a classical algorithm whose score converges to $\alpha_{\rm C}(R)$ on sampled instances (\cref{sec:proof-prange}). Finally, we combine list recoverability with an analysis of adaptive queries to prove that exceeding $\alpha_{\rm C}(R)$ by any fixed amount with constant probability requires super-polynomially many queries (\cref{sec:proof-classical}).

\subsection{Part \textup{(i)}: quantum guarantee}
\label{sec:proof-quantum}

The central idea of DQI is to reduce the problem of approximating a max-LINSAT instance to coherent syndrome decoding for the dual linear code. Jordan et al.~\cite[Supplementary Theorem~1.1]{JordanEtAl2025DQI} prove the resulting semicircle score law for scalar max-LINSAT. Motivated by the connection to Yamakawa and Zhandry~\cite{YamakawaZhandry2024}, their Supplementary Sections~11.2--11.4 construct the corresponding folded DQI states and the implementation of the required oracle. They do not, however, state the corresponding semicircle score law for folded predicates. 

For our concrete application, two points therefore remain to be verified. First, we show that the semicircle calculation extends to balanced folded predicates and that the required states can be prepared directly from our membership oracles. Second, we show that the resulting decoding problem can be solved efficiently for our folded Reed--Solomon family.

\paragraph{Extension of the semicircle law to folded codes.}
Fix an integer $M$ and consider a balanced folded max-LINSAT instance specified by an encoder $B$ with block rows $B_i$, $i\in[M]$, and acceptance sets $\mathcal A=(A_i)_{i\in[M]}$. Define the signed objective
\begin{align}
f(x)
&\coloneq
\sum_{i\in[M]}
\left(2\cdot \mathds 1_{\{B_i x\in A_i\}}-1\right).
\label{eq:folded-signed-objective}
\end{align}
Thus $f(x)$ is the number of accepted blocks minus the number of rejected blocks.\footnote{Since $\scor(Bx)=1/2+f(x)/(2M)$, maximizing the score is equivalent to maximizing $f(x)$.} The exact-balance promise implies that each local sign predicate has zero average over the block alphabet.

In this language, DQI reduces the task of finding a high-scoring candidate to unique decoding for the dual code $C^\perp=\ker B^T$, in syndrome form. More precisely, suppose that every label $e$ of block weight at most $\ell$ can be recovered from its syndrome $B^Te$. This decoding must be implemented coherently, since DQI applies it to a superposition of syndromes. Given an efficient coherent decoder and efficiently preparable weights, DQI prepares an approximation to the normalized state
\begin{align}
\ket{P(f)}
&\propto
\sum_{x\in\F_q^K}P(f(x))\ket{x},
\label{eq:main-dqi-filter-state}
\end{align}
where $P$ is a nonzero real polynomial of degree at most $\ell$. Appendix~\ref{app:dqi} supplies an explicit asymptotically-optimal choice for that polynomial and proves preparation to inverse-polynomial accuracy. Measuring this state in the computational basis returns a candidate $x$ with probability proportional to $\abs{P(f(x))}^2$. The filter polynomial $P$ can therefore be chosen to bias the output toward candidates of large score.

The same parameter $\ell$ controls both the strength of this bias and the required decoding radius: a degree-$\ell$ filter produces errors supported on at most $\ell$ blocks. Increasing $\ell$ permits a stronger bias toward high scores, but requires correcting more block errors and therefore makes the decoding problem harder.

\begin{proposition}[Balanced folded-DQI semicircle law]
\label{prop:folded-dqi-from-bdd}
Consider a family of balanced folded max-LINSAT objectives $f$ as above, indexed by the number $M$ of blocks, and define $C^\perp \coloneq\ker B^T$. Given a polynomial filter $P$ of degree at most $\ell$, let $\langle\scor\rangle_P$ denote the expected score of the candidate obtained by measuring the corresponding DQI state $\ket{P(f)}$ in the computational basis. Suppose
\begin{align}
2\ell+1
&<d_{\rm blk}(C^\perp).
\label{eq:folded-dqi-collision-condition}
\end{align}
In the limit $M\to\infty$, with $\ell/M\to\mu\in[0,1/2]$, the optimal choice of a degree-at-most-$\ell$ filter $P$ yields
\begin{align}
\langle\scor\rangle_P
&=
\frac12+\sqrt{\mu(1-\mu)}+o(1)
\label{eq:folded-dqi-semicircle}
\end{align}
where the $o(1)$ is uniform over all balanced acceptance-set collections.

\end{proposition}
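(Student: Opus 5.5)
We will follow the scalar semicircle analysis of Jordan et al.\ and check that each step goes through at the level of blocks. For each block $i$ set $g_i(z)\coloneq 2\cdot\mathds 1_{\{z\in A_i\}}-1$ for $z\in\Sigma$. Balance gives $\sum_{z\in\Sigma}g_i(z)=0$ and $|g_i|=1$. Let $\hat g_i(y)\coloneq|\Sigma|^{-1/2}\sum_{z}\omega^{\operatorname{tr}(y^Tz)}g_i(z)$ be the Fourier transform over $\F_q^h$. Then $\hat g_i(\mathbf 0)=0$, and Parseval gives $\sum_y|\hat g_i(y)|^2=|\Sigma|$. Write $f(x)=\sum_i g_i(B_ix)$ and expand $P(f)$ in the elementary symmetric polynomials $P^{(k)}(x)=\sum_{|S|=k}\prod_{i\in S}g_i(B_ix)$ for $k\le\ell$. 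This is always possible because $g_i^2=1$, so any polynomial of degree $\le\ell$ in $f$ reduces to a combination of $P^{(0)},\dots,P^{(\ell)}$. The Fourier transform of $P^{(k)}$ over $x\in\F_q^K$ is supported on syndromes $B^Te$, where $e\in\Sigma^M$ has block support exactly $S$ and the $i$th block is weighted by $\hat g_i(e_i)$. The condition $\hat g_i(\mathbf 0)=0$ ensures that only nonzero block entries appear.

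The first key step is orthogonality. Suppose $e\ne e'$ both have block weight at most $\ell$ and $B^Te=B^Te'$. Then $e-e'\in C^\perp$ is nonzero with block weight at most $2\ell<d_{\rm blk}(C^\perp)$, which is a contradiction. So the map $e\mapsto B^Te$ is injective on the relevant labels, and the states $\ket{\widetilde{P^{(k)}}}$ are orthogonal with norms depending only on $k$, $M$ and $|\Sigma|$. By Parseval and $\sum_{y\ne0}|\hat g_i(y)|^2=|\Sigma|$, these norms are independent of the $A_i$. This step uses only $2\ell<d_{\rm blk}$, so it is weaker than the stated condition $2\ell+1<d_{\rm blk}$. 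The extra margin will be needed for the expected-score computation below.

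The second step computes $\langle f\rangle=\langle P(f)|f|P(f)\rangle/\langle P(f)|P(f)\rangle$ in the normalized basis $\ket{P^{(k)}}$. Multiplying $P^{(k)}$ by $f$ and using $g_i^2=1$ gives
\begin{align*}
fP^{(k)}=(k+1)P^{(k+1)}+(M-k+1)P^{(k-1)}.
\end{align*}
This identity holds pointwise in $x$, so it needs no collision argument. What does need one is that $\langle P^{(j)}|P^{(k)}\rangle$ vanishes for $j\ne k$ and that the normalizations are as claimed, for all indices up to $\ell+1$. This is where $2\ell+1<d_{\rm blk}$ enters: the matrix element pairs a weight-$(\ell+1)$ label with a weight-$\ell$ label. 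After normalizing, $f$ acts as a symmetric tridiagonal matrix with zero diagonal, which is exactly the balanced case $p/r=1/2$, and off-diagonal entries $a_k=\sqrt{k(M-k+1)}$. The expected score is then $\tfrac12+\lambda_{\max}(A^{(\ell)})/(2M)$, where $A^{(\ell)}$ is the $(\ell+1)\times(\ell+1)$ truncation, and the optimal filter $P$ is the top eigenvector.

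The last step is asymptotic and can be imported from the eigenvalue analysis in Jordan et al.'s scalar theorem, since the matrix is identical. For $\ell/M\to\mu\le1/2$ one gets $\lambda_{\max}/M\to2\sqrt{\mu(1-\mu)}$, with an upper bound from Gershgorin-type or row-sum bounds and a lower bound from a test vector concentrated near index $\ell$. Uniformity in $\mathcal A$ is automatic, because the matrix does not depend on the acceptance sets. I expect the main obstacle to be the Fourier-side bookkeeping for folded blocks: checking that the block inner product pairs $x$ with $B^Te$ correctly, that the $\hat g_i$ factors cancel in the normalization, and that no cross terms survive.
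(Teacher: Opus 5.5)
Your proposal is correct, but the central computation takes a different route from the paper.

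\textbf{The paper's route.} The paper conjugates the score observable by $\operatorname{QFT}_V$ and evaluates its matrix elements in the syndrome basis. Each local term shifts one block of the Fourier label, and the condition $2\ell+1<d_{\rm blk}(C^\perp)$ forces $B^Ty=B^T(y'+e_i\eta)$ to mean $y=y'+e_i\eta$. The adjacent-layer entries then come from Parseval. The same-layer term is killed by a separate triple-convolution identity, $\sum_{\zeta,\eta}\widetilde\phi_i(\zeta)\widetilde\phi_i(\eta)\widetilde\phi_i(\zeta+\eta)=|\Sigma|^{-1}\sum_z f_i(z)^3=0$.

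\textbf{Your route.} You stay in the computational basis and use the pointwise recurrence $fP^{(k)}=(k+1)P^{(k+1)}+(M-k+1)P^{(k-1)}$. The paper itself only uses this identity later, in \eqref{eq:cl-three-term-recurrence}, for auxiliary expectations. With it, the zero diagonal follows from $g_i^2=1$ and layer orthogonality alone, so no cubic Fourier identity is needed. Nothing depends on the characteristic-two conventions. The role of the strict margin is also isolated cleanly: it gives orthogonality of the weight-$(\ell+1)$ layer to layers $0,\dots,\ell$.

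\textbf{Two small corrections.}
\begin{enumerate}
\item For index $\ell+1$ you need only orthogonality, not normalization, and the hypothesis guarantees only orthogonality. Normalization would require injectivity on weight-$(\ell+1)$ labels, i.e.\ $2\ell+2<d_{\rm blk}(C^\perp)$. So restrict your normalization claim to indices at most $\ell$.
\item For the lower bound you need the converse of your reduction: every weight vector must be realized by a degree-$\le\ell$ filter. This follows from your recurrence by induction, since it exhibits $P^{(k)}$ as a degree-$k$ polynomial in $f$. The paper states this as \eqref{eq:app-layer-counting-polynomial}.
\end{enumerate}

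\textbf{What each route buys.} Yours is shorter and characteristic-free, and it localizes exactly where the distance condition is used. The paper's operator-level route makes the one-block-shift structure explicit. That matches the circuit picture and shows where a nonzero diagonal would enter without balance.
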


The condition $2\ell+1<d_{\rm blk}(C^\perp)$ ensures that $C^\perp$ can in principle be decoded from up to $\ell$ worst-case block errors. If this unique decoding of up to $\ell$ errors can be implemented efficiently and coherently, then the corresponding DQI algorithm is efficient as well.
We prove the proposition in \cref{app:dqi}, where we also give an explicit asymptotically optimal filter and implement the folded DQI state from membership-oracle access. The proposition may therefore be treated as a black box here: the only remaining task is to verify the distance condition and supply an efficient coherent decoder.

\paragraph{Decoding the folded Reed--Solomon dual.}
Recall from \eqref{eq:frs-complete-block-count} that $J\coloneq\floor{K/h}=\floor{RM}$, and set $\ell_M\coloneq\floor{(J-1)/2}$. We verify separately that this choice satisfies the distance condition in the proposition and that the corresponding decoder is efficient.

For our encoder $B$ with folded-Vandermonde structure, the primal folded code is $C=C_{\rm{FRS}} = \operatorname{Fold}_h(\operatorname{RS}_{\boldsymbol\alpha,K})$. After unfolding, the dual of $\operatorname{RS}_{\boldsymbol\alpha,K}$ is the generalized Reed--Solomon code $\operatorname{GRS}_{\boldsymbol\alpha,\boldsymbol\nu,N-K}$, with $\boldsymbol\nu$ as defined in \eqref{eq:rs-dual-multipliers}. Therefore, by \eqref{eq:rs-dual-grs} and \eqref{eq:folded-rs-dual-grs},
\begin{align}
{C}^\perp
&=
\operatorname{Fold}_h\left(
\bigl(\operatorname{RS}_{\boldsymbol\alpha,K}\bigr)^\perp
\right)
=
\operatorname{Fold}_h\left(
\operatorname{GRS}_{\boldsymbol\alpha,\boldsymbol\nu,N-K}
\right).
\end{align}
The unfolded generalized Reed--Solomon code has minimum distance $K+1$. A codeword supported on $b$ folded blocks has unfolded scalar weight at most $hb$. Since every nonzero word in the unfolded dual has scalar weight at least $K+1$, every nonzero word in $C^\perp$ must have block weight at least
\begin{align}
\ceil{\frac{K+1}{h}}
&=
\floor{\frac Kh}+1
=
J+1.
\end{align}
Hence $d_{\rm blk}(C^\perp)\geq J+1$. Moreover, our choice $\ell_M=\floor{(J-1)/2}$ gives $2\ell_M+1\leq J$. Combining these two bounds,
\begin{align}
2\ell_M+1
&\leq J
<J+1
\leq d_{\rm blk}(C^\perp).
\label{eq:main-folded-collision-check}
\end{align}
This verifies the distance condition \eqref{eq:folded-dqi-collision-condition} in \cref{prop:folded-dqi-from-bdd}.

It remains to implement the decoder. An error pattern affecting at most $\ell_M$ folded blocks produces at most $h\ell_M$ scalar errors after unfolding. Since
$2\ell_M\leq J-1$,
\begin{align}
h\ell_M
&\leq
\frac{h(J-1)}2
<
\frac{K+1}{2}.
\end{align}
The unfolded generalized Reed--Solomon code has minimum distance $K+1$, so the required number of scalar errors is strictly below half the minimum distance. It can therefore be corrected efficiently by conventional generalized Reed--Solomon unique decoding~\cite{Berlekamp1968,Massey1969}. Thus the required decoder can be implemented efficiently and applied coherently in superposition, making the corresponding DQI algorithm efficient.

Since $J=\floor{RM}$ and $\ell_M=\floor{(J-1)/2}$, we have
\begin{align}
\frac{\ell_M}{M}
&=
\frac R2+\calO\left(\frac1M\right)
\rightarrow\frac R2.
\end{align} 
For our folded Reed--Solomon instance, the measured candidate is the coefficient vector of a polynomial; denote this output polynomial by $Q_{\rm DQI}$. Applying \cref{prop:folded-dqi-from-bdd} with $\mu=R/2$ gives
\begin{align}
\E\left[
\scor\bigl(Q_{\rm DQI}\bigr)
\right]
&=
\frac{1+\sqrt{R(2-R)}}2+o(1)
=
\alpha_{\rm Q}(R)+o(1),
\end{align}
uniformly over every balanced instance. Together with the membership-oracle implementation from \cref{app:dqi}, this gives a quantum polynomial-time algorithm using one quantum query to each block-membership oracle in $(\mathcal O_{A_i})_{i\in [M]}$ and hence at most $M$ queries in total. This concludes the main argument underlying \cref{thm:frs-separation}\textup{(i)}, full details are given in \cref{app:dqi}.

\subsection{Part \textup{(ii)}: Prange's algorithm}
\label{sec:proof-prange}

We use the max-LINSAT adaptation of Prange's algorithm~\cite{Prange1962} described by Jordan et al.~\cite[Supplementary Section~8.3]{JordanEtAl2025DQI}. The idea is to satisfy as many complete blocks as we can by interpolation. For each of the selected blocks, we sample symbols from $\Sigma$ independently and uniformly, with replacement, and test them using the corresponding block-membership oracle until we find an accepted target. For every fixed balanced instance, conditional on the entire preceding history, each trial succeeds with probability exactly $1/2$. We then interpolate a polynomial whose selected evaluation blocks equal these targets. Under the sampled-instance distribution, each remaining block accepts independently with probability $1/2$. Since we can fit exactly $J$ blocks, which amounts to roughly an $R$ fraction of the blocks, the resulting score approaches $R\cdot 1 +(1-R)\cdot \frac{1}{2}=(1+R)/2$.

\begin{proof}[Proof of \cref{thm:frs-separation}\textup{(ii)}]
With $J$ as defined in \eqref{eq:frs-complete-block-count}, select the first $J$ blocks of the folded OPI instance. Each selected block contributes $h$ target values, so the polynomial must match $hJ\leq K$ values in total. The evaluation points are distinct, and therefore any such values can be realized by a polynomial of degree less than $K$.

First find an accepted target $z_i\in A_i$ for each selected block $i \in [J]$. In the current block, sample independent uniform symbols from $\Sigma$, with replacement, until a membership query accepts; then move to the next block. For every fixed balanced instance, conditional on the entire preceding history, each new trial succeeds with probability exactly $1/2$. Thus the trial outcomes are independent Bernoulli random variables with parameter $1/2$. Cap the total number of queries at
\begin{align}
\tau_M&\coloneq 4M.
\label{eq:prange-query-cap}
\end{align}
If the cap is reached before all $J$ targets are found, return the zero polynomial.

For the probability calculation only, append independent dummy trials after all targets have been found. Failure then means that fewer than $J$ successes occur in $\tau_M$ independent Bernoulli trials. Since their mean is $2M$ and $J\leq M$, a Chernoff bound gives
\begin{align}
\Prb[\text{search fails}]
&=\Prb[\operatorname{Bin}(\tau_M,1/2)<J]
\leq e^{-M/4}.
\label{eq:prange-search-failure}
\end{align}

Once an accepted target $z_i=(z_{i,j})_{j\in[h]}\in A_i$ has been found for every selected block $i\in[J]$, choose $Q$ to satisfy
\begin{align}
Q(\gamma^{ih+j})
&=z_{i,j},
\qquad i\in[J],\quad j\in[h].
\label{eq:prange-prescribed-values}
\end{align}
These are $hJ$ target values at $hJ$ distinct evaluation points. Polynomial interpolation therefore gives a unique polynomial $Q$ of degree less than $hJ$ satisfying all of them. Since $hJ\leq K$, this polynomial also has degree less than $K$ and is therefore a valid output. By construction, its $i$th evaluation block equals $z_i\in A_i$ for every $i\in[J]$, so all $J$ selected blocks accept. Interpolation over $\F_q$ can be performed in polynomial time.

Denote the algorithm's output, including the zero-polynomial fallback on search failure, by $Q_{\rm Prange}$. The algorithm uses at most $4M$ membership queries, and both sampling and polynomial interpolation take polynomial time.

It remains to determine the score on the untouched blocks. Condition on a successful target search and on the accepted target chosen in each of the $J$ selected blocks. This fixes the interpolated polynomial. The remaining acceptance sets are still independent uniform half-size subsets of $\Sigma$, because none of them was queried. Each remaining evaluation block therefore accepts independently with probability $1/2$.

Consequently, conditional on search success, the number of accepted blocks has distribution
\begin{align}
J+\operatorname{Bin}(M-J,1/2).
\label{eq:prange-conditional-law}
\end{align}
Here $J$ blocks accept with certainty because we fitted them, and the binomial term counts the additional acceptances among the untouched blocks. This distribution does not depend on which accepted targets were found.

The conditional expected score is $(1+J/M)/2$, which converges to $(1+R)/2=\alpha_{\rm C}(R)$ because $J/M\to R$. Its variance is $(M-J)/(4M^2)\leq1/(4M)$, so Chebyshev's inequality gives convergence in probability to this limit. Finally, the search-failure probability tends to zero, so the same convergence holds for the unconditional output.
\end{proof}

\subsection{Part \textup{(iii)}: classical query lower bound}
\label{sec:proof-classical}
We prove that any classical algorithm achieving a fixed improvement over the score $\alpha_{\rm C}(R)=(1+R)/2$ with high probability requires a super-polynomial number of queries. In principle, our argument works for any code with sufficiently strong so-called \emph{list recoverability} properties. Here, we exploit that for folded Reed-Solomon codes, particularly strong guarantees are available.

We start by developing a proof strategy for general codes. The general argument proceeds in two stages. First, using list recoverability, we upper-bound the number of codewords that could have been uncovered within $L$ queries. Then, in a second step, we bound the probability that any of them exceeds a given target score. With this framework in place, we then insert concrete guarantees for folded Reed-Solomon codes to obtain the final classical query lower bound.

Throughout this subsection, fix a nonempty code $C\subseteq\Sigma^M$ independently of the oracle. The sets $A_i$ are sampled independently and uniformly among the $|\Sigma  |/2$-element subsets of $\Sigma$. The output of an algorithm is a codeword $c_{\rm out}\in C$. Throughout this section, we assume that the number of queries $L=L(M)$ is small compared to the alphabet size, formally $L(M)=o(|\Sigma|)$. Prior queries then change the acceptance probability of any unqueried symbol only negligibly: conditioned on the answers so far, it is at most $1/2 + \epsilon$, where $\epsilon = o(1)$ is independent of the queries made and their answers.

\paragraph{A parametrized upper bound on the number of possible code words.}
Making at most $L$ queries, we learn the membership of at most $L$ distinct addressed symbols across the blocks $i=0, \dots, M-1$. Sorting these $L$ queried symbols into lists $S_0, \dots S_{M-1}$, where $S_i$ contains the distinct symbols queried in block $i$, we obtain a collection of $M$ lists such that
\begin{align}
  \sum_{i=0}^{M-1} |S_i| \leq L.
\end{align}
For now, consider a fixed set of $L$ queries. For a given threshold $\tau \in [0,1]$, consider the set of codewords for which membership is known on at least a $\tau$-fraction of the blocks as a result of the $L$ queries. That is, codewords $c =(c_0, \dots c_{M-1})$ such that for at least $\lceil \tau M\rceil$ of the $c_i$, membership has been queried within the $L$ queries. Formally, we write 
\begin{align}
  \mathcal{B}_\tau(S_0, \dots, S_{M-1}) \coloneqq \left\{c \in C : \sum_{i=0}^{M-1} \mathds{1}_{\{c_i \in S_i\}} \geq \tau M \right\}.
\end{align}
\begin{remark}
  The natural quantity that one would like to control is the set of codewords that are known to accept on at least a $\tau$-fraction of blocks, that is
  \begin{align}
    \mathcal{B}_\tau^{\mathrm{ideal}}(S_0, \dots, S_{M-1}) \coloneqq \left\{c \in C: \sum_{i=0}^{M-1} \mathds{1}_{\{c_i \in S_i \cap A_i\}} \geq \tau M \right\}.
  \end{align}
  Note that since $\{c_i \in S_i\}$ is the more permissive condition, we have $\mathcal{B}_\tau^{\mathrm{ideal}}(S_0, \dots, S_{M-1}) \subseteq \mathcal{B}_\tau(S_0, \dots, S_{M-1})$.
\end{remark}

For the analysis, we can assume w.l.o.g. that we may append $M$ verification queries to an algorithm using at most $T$ queries, giving a total budget $L=T+M$. These \emph{verification queries} are used to determine on how many blocks the output $c_{\mathrm{out}}$ in fact accepts. These $M$ additional queries do not change the output produced by the algorithm, but simplify the analysis: They ensure that the final output $c_{\text{out}}$ is contained in $\mathcal{B}_\tau(S_0, \dots, S_{M-1})$ for all values $\tau \in [0,1]$.
Note that this does not place a restriction on the algorithms we consider. Any algorithm without verification queries maps to a corresponding algorithm with verification queries that exactly reproduces the original algorithms input-output relation.

So far, we have fixed a sequence of $L$ queries. Now, we generalize to the setting of arbitrary query sequences of length at most $L$. In particular, we are interested in the maximal size that $\mathcal{B}_\tau(S_0, \dots, S_{M-1})$ can take, a quantity which we will denote with $\LR_C(\tau,L)$. We define
\begin{align}
  \LR_C(\tau,L) \coloneqq \max_{\substack{S_i\subseteq\Sigma\ (i\in[M])\\ |S_i|\leq L}} \left|\mathcal B_\tau(S_0,\dots,S_{M-1})\right|.
  \label{eq:LR}
\end{align}

The $L$ queries give lists satisfying $\sum_i|S_i|\leq L$, and hence $|S_i|\leq L$ for every block. This form lets us apply the list-recovery bounds for folded Reed--Solomon codes.

\paragraph{Binomial tail estimate.}

We defined the quantity $\LR_C(\tau,L)$, which bounds the number of codewords for which we could have determined membership on $\lceil \tau M \rceil$ blocks within $L$ queries. We can also view this as an upper bound on the number of codewords which, after $L$ queries, we know to be accepted on at least $\lceil \tau M \rceil$ of the blocks. Now, for some fixed target $\beta$ and threshold $\tau$, we ask for the probability that any codeword from our candidate pool is accepted on at least $\ceil{\beta M}$ blocks.

\begin{lemma} \label{thrm:lower-bound-framework}
  Let $\tau, \beta \in [0,1]$. Let a randomized classical algorithm make at most $T$ queries and output $c_{\text{out}} \in C$. Set 
  \begin{align}
    p(\tau, \beta, M) \coloneqq \mathbb{P} \left[ \lceil \tau M \rceil + \mathrm{Bin}\left( M-\lceil \tau M \rceil, 1/2 + \epsilon \right) \geq \beta M \right].
  \end{align}
  Then, the output satisfies
  \begin{align}
    \mathbb{P} \left[ \mathrm{score}(c_{\text{out}}) \geq \beta \right] \leq \LR_C(\tau, T+M) p(\tau,\beta, M).
  \end{align}
\end{lemma}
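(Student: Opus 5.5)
Augment the algorithm with the $M$ verification queries, giving total budget $L=T+M$. Then, on every run, the output $c_{\rm out}$ has all $M$ blocks queried. The problem is that $\mathcal B_\tau$ depends on the random query history, so a union bound cannot be taken over a fixed list. To handle this, I would use a ``first-time'' argument. Consider the (random) transcript of the augmented algorithm, and for each codeword $c$ define the stopping time $\sigma_c$: the first moment at which $c$ has at least $\lceil\tau M\rceil$ blocks whose symbol $c_i$ has been queried. Every $c$ with $\sigma_c<\infty$ lies in $\mathcal B_\tau(S_0,\dots,S_{M-1})$ for the final lists. Its size is at most $\LR_C(\tau,L)$, because each final list has size at most $L$. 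Since $c_{\rm out}$ has $\sigma_{c_{\rm out}}<\infty$, the event $\{\scor(c_{\rm out})\ge\beta\}$ is contained in the event that some $c$ with $\sigma_c<\infty$ accepts on at least $\lceil\beta M\rceil$ blocks.

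The key step is to bound, for each such $c$, the probability that it accepts on at least $\beta M$ blocks. At time $\sigma_c$, exactly the blocks in some set $I_c$, $|I_c|\ge\lceil\tau M\rceil$, have $c_i$ queried. Of these, at most $|I_c|$ accept. For simplicity, I would pick the first $\lceil\tau M\rceil$ of them and treat the rest as ``remaining''. The other $M-\lceil\tau M\rceil$ blocks have $c_i$ not yet queried. Conditioned on the full history up to $\sigma_c$, the memberships $\mathds 1_{\{c_i\in A_i\}}$ for these unqueried symbols are independent across blocks, since the $A_i$ are independent. Each has conditional probability at most $1/2+\epsilon$, by the standing assumption $L=o(|\Sigma|)$; a block with $m$ queried symbols of which $a$ accepted gives probability $(|\Sigma|/2-a)/(|\Sigma|-m)\le 1/2+\epsilon$. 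Later queries by the algorithm only reveal these values and do not change the joint law. So, conditionally, the number of accepted blocks of $c$ is stochastically dominated by $\lceil\tau M\rceil+\mathrm{Bin}(M-\lceil\tau M\rceil,1/2+\epsilon)$. Its probability of reaching $\beta M$ is therefore at most $p(\tau,\beta,M)$, using monotone coupling of independent Bernoullis with parameters at most $1/2+\epsilon$.

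Finally, I would combine the two steps by a union bound that is valid despite the adaptively defined candidate set. Write $\Prb[\exists c:\sigma_c<\infty,\ c\text{ accepts}\ge\beta M]\le\E\sum_{c}\mathds 1_{\{\sigma_c<\infty\}}\mathds 1_{\{c \text{ good}\}}$. For each $c$, condition on $\mathcal F_{\sigma_c}$ to get $\E[\mathds 1_{\{\sigma_c<\infty\}}\Prb[\text{good}\mid\mathcal F_{\sigma_c}]]\le p\,\Prb[\sigma_c<\infty]$. Summing gives $p\,\E|\{c:\sigma_c<\infty\}|\le p\,\LR_C(\tau,T+M)$. The algorithm's random tape is independent of $\mathcal A$, so the argument holds conditionally on $\omega$ and then averages out. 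The main obstacle is the conditional-independence/domination step at the stopping time: one must argue carefully that the unqueried coordinates remain independent with bias at most $1/2+\epsilon$ given an adaptive history. I would handle this by exposing the oracle lazily, i.e., by sampling without replacement per block as queries arrive.
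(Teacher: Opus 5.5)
Your proposal is correct and follows the paper's proof. Both append the $M$ verification queries and take a union bound over codewords. Both condition at the first time each codeword $c$ has $\lceil\tau M\rceil$ queried blocks to obtain the factor $p(\tau,\beta,M)$, and bound the expected number of such codewords by $\LR_C(\tau,T+M)$. Your stopping-time formulation, conditioning on $\mathcal F_{\sigma_c}$ rather than on the event $E_c^{(\tau)}$, makes rigorous the paper's informal ``first time'' justification.

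One small tightening: at the first-hitting time exactly $\lceil\tau M\rceil$ blocks of $c$ are exposed, because a single query raises this count by at most one. So there are never surplus queried blocks to ``treat as remaining''. Doing so would in any case be invalid, since their answers are already known.
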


\begin{proof}
Append the $M$ verification queries, so the algorithm makes at most $T+M$ queries and its output is unchanged. Denote with $E^{(\tau)}_c$ the event that the codeword $c$ is queried on at least $\lceil \tau M \rceil$ of its symbols, that is 
\begin{align}
    E_c^{(\tau)} \coloneqq \left\{c \in \mathcal{B}_\tau(S_0, \dots, S_{M-1}) \right\}
\end{align}

The verification queries ensure $E_{c_{\mathrm{out}}}^{(\tau)}$ occurs for all $\tau \in [0,1]$ Then, we have 
\begin{align}
    \mathbb{P}\left[\mathrm{score}(c_{\mathrm{out}}) \geq \beta\right] &\leq \sum_{c \in C} \mathbb{P} \left[ E_c^{(\tau)} \cap \{ \mathrm{score}(c) \geq \beta \}\right] \\
    &= \sum_{c \in C} \mathbb{P} \left[ \{ \mathrm{score}(c) \geq \beta \} \middle| E_c^{(\tau)}\right] \mathbb{P} \left[ E_c^{(\tau)}\right] \\
    &\leq \sum_{c \in C} p(\tau, \beta, M) \mathbb{P} \left[ E_c^{(\tau)}\right] \\
    &= p(\tau, \beta, M) \mathbb{E} \left[ \sum_{c \in C} \mathds{1}_{E_c^{(\tau)}}\right] \\
    &\leq p(\tau, \beta, M) \LR_C(\tau, T+M)
\end{align}
The first inequality follows because the $M$ verification queries ensure that $c_{\mathrm{out}} \in E_{c_{\mathrm{out}}}^{(\tau)}$ for all $\tau \in [0,1]$. For the second inequality, consider the first time a given codeword $c$ has been queried on $\lceil \tau M \rceil$ blocks, i.e., the moment at which $E_c^{(\tau)}$ occurs. Since the randomness of the algorithm is independent from the oracle, we can consider the remaining $M-\lceil \tau M \rceil$ blocks independent random variables, each with acceptance probability at most $1/2 + \epsilon$. Even if codeword $c$ was accepted on all its $\lceil \tau M \rceil$ queried symbols, its probability of achieving a score at least $\beta$ is upper bounded by $p(\tau, \beta, M)$. For the last inequality, note that 
\begin{align}
    \mathbb{E} \left[ \sum_{c \in C} \mathds{1}_{E_c^{(\tau)}}\right] &= \mathbb{E} \left[ \left| \mathcal{B}_\tau(S_0, \dots, S_{M-1}) \right|\right] \\
    &\leq \LR_C(\tau, T+M).
\end{align}
\end{proof}

\begin{remark}
  \cref{thrm:lower-bound-framework} gives a framework for proving classical query lower bounds, requiring as an ingredient only sufficiently strong upper bounds on $\LR_C(\tau, L)$. In particular, note that for fixed $\beta$, any choice of $\tau$ gives rise to a valid, although potentially vacuous, upper bound on the success probability $P[\mathrm{score}(c_{\text{out}})\geq \beta]$. Therefore, it is sufficient to identify a single suitable parameter choice for $\tau$ to obtain the query lower bound we desire.
\end{remark}
\paragraph{Query lower bound for folded Reed-Solomon codes.}
For the rest of this subsection, fix a rational $R\in(0,1)$ and $0<\eta<1-\alpha_{\rm C}(R)$, and take $C=C_{\rm FRS}$ from \cref{sec:density-oracle-model}. To obtain a super-polynomial classical query lower bound for a score at least $\beta$, we bound both $\LR_C(\tau,L)$ and $p(\tau,\beta,M)$ such that the probability of exceeding score $\beta$ vanishes super-polynomially in $M$. For folded Reed-Solomon codes, strong bounds on $\LR_C(\tau,L)$ are known by way of list recoverability guarantees. 

\begin{lemma}[List recoverability bound {\cite[Sec.~3.6]{Rudra2007}} via {\cite[Lemma~4.3]{YamakawaZhandry2024}}] \label{lem:list-recoverability-bound}
Consider a folded Reed-Solomon code with parameters $q=(M+1)^2$, $N=q-1$, $h=M+2$ and $K=\lfloor RN \rfloor$. Fix $\tau \coloneqq R+\eta$ and $\theta_{R,\eta} \coloneqq \frac{\eta^2}{800} \log \left( \frac{R+\eta}{R} \right)$. Then, whenever
\begin{align}
  T \leq \exp \left(\theta_{R,\eta} \frac{M}{\log(M)} \right),
\end{align}
it holds that for sufficiently large admissible $M$,
\begin{align}
  \LR_C(R+\eta, T+M) \leq \exp \left( \frac{3 \eta^2}{100} M \right).
\end{align}
\end{lemma}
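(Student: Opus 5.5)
The plan is to reduce the bound on $\LR_C(R+\eta,T+M)$ to the list-recovery theorem for folded Reed--Solomon codes (Guruswami--Rudra, as presented in \cite[Sec.~3.6]{Rudra2007}), in the form quoted by \cite[Lemma~4.3]{YamakawaZhandry2024}. That form roughly reads as follows. For folding parameter $h$, degree bound $K$, input lists $S_i\subseteq\F_q^h$ with $|S_i|\le \ell_{\rm in}$, and an integer parameter $s\le h$ with agreement fraction exceeding roughly $(1+s/r)\,(\ell_{\rm in} K^s/(N^s\cdots))^{1/(s+1)}$, the number of codewords agreeing with the lists on at least $\tau M$ blocks is at most $q^{s}$. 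Our first step is to write down that statement in our parameters: $N=hM$, $K=\lfloor RN\rfloor$, $q=(M+1)^2$, $h=M+2$. We then check that the lists produced by $L=T+M$ queries satisfy $|S_i|\le L$, so that $\LR_C(\tau,L)$ is bounded by the maximal output-list size with input-list size $\ell_{\rm in}=L$.

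The second step is to choose the multiplicity/interpolation parameter $s$ so that the agreement condition holds at $\tau=R+\eta$ and the output list size $q^{s}=\exp(2s\log(M+1))$ is at most $\exp(3\eta^2M/100)$.

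\begin{itemize}
\item We take $s\approx \frac{3\eta^2}{200}\frac{M}{\log M}$, up to a constant margin. Then $q^s\le \exp(3\eta^2 M/100)$ for large $M$.
\item The agreement threshold behaves like $R\cdot \ell_{\rm in}^{1/(s+1)}\cdot(1+O(s/h))$, since $(K/N)^{s/(s+1)}\to R$ and the folding loss is $O(s/h)=O(\eta^2/\log M)\to0$.
\item We therefore need $\ell_{\rm in}^{1/(s+1)}\le (R+\eta/2)/R$. Equivalently, $\log L\le (s+1)\log\!\big(\tfrac{R+\eta/2}{R}\big)$.
\item Using $\log\frac{R+\eta/2}{R}\ge\tfrac12\log\frac{R+\eta}{R}$ (by concavity), this holds as soon as $\log(T+M)\le \frac{3\eta^2}{400}\frac{M}{\log M}\log\frac{R+\eta}{R}$.
\item This is implied by $T\le\exp(\theta_{R,\eta}M/\log M)$ with $\theta_{R,\eta}=\frac{\eta^2}{800}\log\frac{R+\eta}{R}$. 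The factor-2 slack absorbs the $+M$ term, the ceiling $\lceil\tau M\rceil$, the rounding in $K$, and the requirement $s\le h$, all for sufficiently large admissible $M$.
\end{itemize}

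The main obstacle is matching the exact form of the Guruswami--Rudra agreement bound (the product $\prod_j (1-j/h)$-type losses and the $(1+s/r)$ factor from the multiplicity $r$) to our parameter regime. Here the list size $L$ is super-polynomial and the alphabet exponent $s$ grows like $M/\log M$. I would first take $r=s\cdot\lceil 4/\eta\rceil$ so that the $(1+s/r)$ factor is at most $1+\eta/4$. I would then verify that the remaining losses fit inside the budget of $\eta/2$ in the agreement fraction. If the quoted lemma already packages these into a clean condition of the form $\tau > R\,\ell_{\rm in}^{1/(s+1)}+\eta/\text{const}$, the proof reduces to the parameter bookkeeping of the second step. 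A final sanity check is that $q^s$ is indeed the output list bound, since the solution space is an $\F_q$-affine space of dimension at most $s-1$ and so contributes at most $q^{s-1}$ codewords.
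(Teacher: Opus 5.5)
Your proposal is correct and follows the paper's proof. Both instantiate \cite[Lemma~4.3]{YamakawaZhandry2024} with input list size $T+M$ and $s\asymp\eta^2M/\log M$, verify the agreement condition because $(T+M)^{1/(s+1)}$ stays a constant factor below $(R+\eta)/R$, and bound the output list by $q^s=\exp(2s\log(M+1))$. The paper takes $s=\lceil\frac{\eta^2}{100}\frac{M}{\log M}\rceil$ and $r=\lceil\sqrt{sM}\,\rceil$, so the $(1+s/r)$ loss vanishes instead of being absorbed into the $\eta$ budget as with your $r=s\lceil 4/\eta\rceil$. The paper also checks the lemma's second hypothesis (left-hand side $O(M)<q$), which you omit but which holds for your parameters too. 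Finally, your closing affine-space remark belongs to the linear-algebraic decoder, whose list-recovery condition grows linearly in the input list size and would be useless here; rely only on the $q^s$ bound stated in the lemma.
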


\begin{proof}
Apply \mbox{\cite[Lemma 4.3]{YamakawaZhandry2024}} with $\zeta=1-\tau$, $m = h$, $k = K-1$, $\ell = T+M$,
$s \coloneqq \lceil \frac{\eta^2}{100}\frac{M}{\log M}\rceil$ and $r \coloneqq \lceil\sqrt{sM}\,\rceil$. Its code
equals $C_{\mathrm{FRS}}$ via $Q(Y)\longmapsto Q(\gamma Y)$, and $0<\zeta<1$, $s\le m$. Since $s/r\to0$, $s/h\to0$
and $(N/k)^{1/(s+1)}\to1$, while $\ell^{1/(s+1)} \le (1+o(1))\,e^{100\theta_{R,\eta}/\eta^2} = (1+o(1))\big(\frac{R+\eta}{R}\big)^{1/8}$,
the right-hand side of condition (8) is at most $(1+o(1))R^{7/8}(R+\eta)^{1/8}M < (R+\eta)M$, its left-hand side.
Condition (9) holds since its left-hand side is $O(M) < q$. Finally,
$q^s = \exp(2s\log(M+1)) \le \exp\big((1+o(1))\frac{\eta^2}{50}M\big) \le \exp\big(\frac{3\eta^2}{100}M\big)$.
\end{proof}

\begin{lemma}[Binomial tail bound] \label{lem:binomial-tail-bound}
  Define $p(\tau, \beta, M)$ as in \cref{thrm:lower-bound-framework}. Let $\tau \coloneqq R+\eta$ and $\beta \coloneqq (1+R)/2 + \eta$. Furthermore, let $T+M = o(|\Sigma|)$, that is the number of queries we make is negligible compared to the alphabet size. Then, for sufficiently large $M$, 
  \begin{align}
    p\left(\tau,\beta,M\right) \leq \exp \left(- \frac{2 \eta^2}{9}M \right).
  \end{align}
\end{lemma}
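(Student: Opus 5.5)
We bound $p(\tau,\beta,M)$ by a Hoeffding bound on the binomial part. Set $m\coloneq\lceil\tau M\rceil$ and let $X\sim\mathrm{Bin}(M-m,1/2+\epsilon)$. The event in $p$ is $X\geq \beta M-m$, so the first step is to compare $\beta M-m$ with the mean of $X$, namely $(M-m)(1/2+\epsilon)$.

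With $\tau=R+\eta$ and $\beta=(1+R)/2+\eta$ we compute
\begin{align}
\beta M-m-\tfrac12(M-m)
&=\beta M-\tfrac12 M-\tfrac12 m
\geq \Bigl(\tfrac R2+\eta\Bigr)M-\tfrac12\bigl((R+\eta)M+1\bigr)
=\tfrac{\eta}{2}M-\tfrac12 .
\end{align}
The binomial excess is therefore $\epsilon(M-m)\leq\epsilon M$. Since $T+M=o(|\Sigma|)$, the perturbation $\epsilon$ is $o(1)$ uniformly in the query history. For sufficiently large $M$ we therefore have $\epsilon M+\tfrac12\leq \tfrac{\eta}{6}M$, and the required deviation above the mean of $X$ is at least $t\coloneq\tfrac{\eta}{3}M$.

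Next we apply Hoeffding's inequality to the $n\coloneq M-m\leq M$ independent Bernoulli variables:
\begin{align}
\Prb\bigl[X-\E X\geq t\bigr]\leq \exp\!\left(-\frac{2t^2}{n}\right)\leq \exp\!\left(-\frac{2\eta^2M^2/9}{M}\right)=\exp\!\left(-\frac{2\eta^2}{9}M\right).
\end{align}
This is exactly the claimed bound. The hypothesis $\eta<1-\alpha_{\rm C}(R)$ ensures $\beta<1$, so the statement is not vacuous. If $n=0$, we instead have $\tau\geq1$ and $\beta M>m$ is impossible only in degenerate cases, in which $p=0$ trivially. Indeed, $\beta M-m\geq \tfrac\eta2 M-\tfrac12>0$, so with $n=0$ the event cannot occur.

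The one point requiring care is the independence claim underlying the binomial with parameter $1/2+\epsilon$. This is taken from the setup preceding \cref{thrm:lower-bound-framework}: conditional on the history, each unqueried symbol accepts with probability at most $1/2+\epsilon$. One needs $\epsilon$ to be an explicit $o(1)$, for example $\epsilon\leq (T+M)/(|\Sigma|-T-M)$ from sampling half-size subsets without replacement. If the blocks are only stochastically dominated rather than independent, the argument goes through by coupling each block with an independent Bernoulli$(1/2+\epsilon)$ variable. The coupling exists because the conditional acceptance probability of each remaining block is at most $1/2+\epsilon$ given any outcome of the others, since different blocks have independent acceptance sets. I expect this bookkeeping of $\epsilon$, rather than the tail bound itself, to be the main subtlety.
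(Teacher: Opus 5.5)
Your proposal is correct and follows the paper's proof: you show that the required deviation above $\E[X]=(M-\lceil\tau M\rceil)(1/2+\epsilon)$ is at least $\eta M/3$ for large $M$, tracking the rounding term and the $\epsilon M$ term explicitly where the paper absorbs them into an $o(M)$, and you then apply Hoeffding with $n=M-\lceil\tau M\rceil\leq M$. Two remarks are unnecessary here. The $n=0$ case cannot arise, since $\tau=R+\eta<(1+R)/2<1$ gives $n>0$ for large $M$. The independence and coupling discussion belongs to the framework lemma, because $p$ is by definition the probability of an event for a genuine binomial random variable.
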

\begin{proof}
Set $X\sim\operatorname{Bin}(M-\ceil{\tau M},1/2 + \epsilon)$, where $T+M=o(|\Sigma|)$ ensures that the acceptance probability is in fact at most $1/2+\epsilon$ with $\epsilon=o(1)$. Also note that $M-\ceil{\tau M}>0$ for sufficiently large $M$, since $\tau<1$. Then, $\mathbb{E}[X] = (1-R-\eta)M/2 + o(M)$, such that $\beta M - \lceil \tau M \rceil - \mathbb{E}[X] \geq (\eta M)/3$. Then, applying Hoeffding's inequality for the $M - \lceil \tau M \rceil$ independent Bernoulli variables gives 
\begin{align}
    p(\tau,\beta,M)= \mathbb{P} \left[ X \geq \beta M - \lceil \tau M \rceil \right] \leq \exp\left( - \frac{2 (\eta M/3)^2}{M- \lceil \tau M \rceil} \right) \leq \exp \left( -\frac{2 \eta^2}{9} M\right).
\end{align}
\end{proof}

\begin{lemma}[Classical hardness]
\label{thm:classical-lower-bound}
Fix a rational $R\in(0,1)$ and $0<\eta<1-\alpha_{\rm C}(R)$. For all sufficiently large admissible $M$, every classical randomized algorithm making at most $T\leq\exp(\theta_{R,\eta}M/\log M)$ queries in the worst case satisfies
\begin{align}
\Prb\left[\scor(c_{\text{out}})\geq\frac{1+R}{2}+\eta\right]
&\leq\exp\left(-\frac{\eta^2M}{10}\right).
\end{align}
Here $\theta_{R,\eta}$ is defined in \cref{lem:list-recoverability-bound}. Probability is over the sampled oracle and the algorithm's independent randomness, with its description and advice fixed before sampling.
\end{lemma}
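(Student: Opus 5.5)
The plan is to combine \cref{thrm:lower-bound-framework} with the two concrete estimates, \cref{lem:list-recoverability-bound,lem:binomial-tail-bound}, at the common parameter choice $\tau\coloneqq R+\eta$ and $\beta\coloneqq(1+R)/2+\eta$. The framework lemma holds for every $\tau\in[0,1]$, so we only need to check that this particular $\tau$ is admissible. We have $\tau>0$. For $\tau<1$, note that $\eta<1-\alpha_{\rm C}(R)=(1-R)/2$, which gives $R+\eta<(1+R)/2<1$. With this choice, the framework lemma gives
\begin{align}
\Prb\left[\scor(c_{\rm out})\geq\beta\right]\leq \LR_C(R+\eta,T+M)\,p(R+\eta,\beta,M).
\end{align}

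The first step is to verify the side condition behind the $\epsilon=o(1)$ in the framework lemma and in \cref{lem:binomial-tail-bound}, namely $T+M=o(|\Sigma|)$. Since $T\leq\exp(\theta_{R,\eta}M/\log M)$, we have $T+M=\exp(o(M))$. On the other hand, $|\Sigma|=q^h=\exp(\Theta(M\log M))$, so the condition holds. It then follows that, conditional on any history of at most $T+M$ queries, an unqueried symbol is accepted with probability at most $1/2+\epsilon$. This is a sampling-without-replacement estimate: the acceptance probability is at most $(|\Sigma|/2)/(|\Sigma|-L)$, and this is independent across blocks because the sets $A_i$ are sampled independently. The remaining hypotheses carry over directly. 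The advice and the algorithm's description are fixed before sampling, and the random tape is independent of the oracle, so the adaptive-query reasoning in the proof of \cref{thrm:lower-bound-framework} applies unchanged. In that argument we first condition on the tape and then average over it.

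Next we insert the bounds. \Cref{lem:list-recoverability-bound} gives $\LR_C(R+\eta,T+M)\leq\exp(3\eta^2M/100)$, and \cref{lem:binomial-tail-bound} gives $p\leq\exp(-2\eta^2M/9)$. Multiplying,
\begin{align}
\Prb\left[\scor(c_{\rm out})\geq\tfrac{1+R}{2}+\eta\right]\leq\exp\left(-\left(\tfrac{2}{9}-\tfrac{3}{100}\right)\eta^2M\right)\leq\exp\left(-\tfrac{\eta^2M}{10}\right),
\end{align}
where the last inequality uses $2/9-3/100\approx0.192>1/10$. Both lemmas are stated only for sufficiently large admissible $M$, so we take $M$ larger than the maximum of their two thresholds. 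This also ensures $M-\lceil\tau M\rceil>0$.

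The main obstacle is justifying the uniform $1/2+\epsilon$ conditional acceptance bound for the blocks of a codeword that remain unqueried. This bound must hold at the stopping time at which $E_c^{(\tau)}$ first occurs, and that time is determined adaptively. I would handle it by revealing the answers lazily. Fix the tape. Each unqueried symbol's membership is then a fresh draw from a hypergeometric-type distribution whose conditional mean is at most $1/2+\epsilon$. We then dominate the number of accepted unqueried blocks of $c$ by $\operatorname{Bin}(M-\lceil\tau M\rceil,1/2+\epsilon)$, using the independence of the blocks. The verification queries may later reveal these blocks, but they cannot change their distribution, so the domination holds.
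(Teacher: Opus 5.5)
Your proposal is correct and follows the paper's proof essentially step for step. You use the same choice $\tau=R+\eta$, $\beta=\alpha_{\rm C}(R)+\eta$ in \cref{thrm:lower-bound-framework}, multiply the bounds $\exp(3\eta^2M/100)$ and $\exp(-2\eta^2M/9)$ from \cref{lem:list-recoverability-bound,lem:binomial-tail-bound} to get $\exp(-\eta^2M/10)$, and check $T+M=o(|\Sigma|)$ from $|\Sigma|=\exp(\Theta(M\log M))$. Your extra checks, namely $\tau<1$ and the lazy-revelation justification of the $1/2+\epsilon$ conditional bound at the adaptive stopping time, are sound; they spell out what the paper already handles inside the proof of \cref{thrm:lower-bound-framework}.
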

\begin{proof}
  Set $\tau \coloneqq R + \eta$ and $\beta = \alpha_C(R) + \eta$. Then, following \cref{thrm:lower-bound-framework}, we have
  \begin{align}
    \mathbb{P} \left[\text{score}(c_{\text{out}}) \geq \beta \right] &\leq \LR_{C_{\rm FRS}}(\tau,T+M)\,p(\tau,\beta,M) \\
    & \leq \exp\left(\frac{3 \eta^2}{100}M\right) \exp \left( -\frac{2 \eta^2}{9} M \right) \\
    &\leq \exp \left( - \frac{\eta^2}{10} M\right),
  \end{align}
where we used \cref{lem:list-recoverability-bound} and \cref{lem:binomial-tail-bound} to bound the quantities for the second inequality. Since $T+M\leq\exp(\theta_{R,\eta}M/\log M)+M$ and $|\Sigma|=M^{\Theta(M)}$, we have $T+M = o(|\Sigma|)$. Thus the tail bound also holds uniformly for all sufficiently large admissible $M$, with the size threshold depending only on $R,\eta$.
\end{proof}

\begin{proof}[Proof of \cref{thm:frs-separation}\textup{(iii)}]
    This is immediate from \cref{thm:classical-lower-bound}.
\end{proof}

\section{Stronger quantum guarantees on typical oracles}
\label{sec:sampled-score-hierarchy}

We obtain stronger quantum guarantees for the same balanced folded OPI family by combining complete list decoding of the folded dual with Jo's coherent fiber-summation method~\cite{Jo2026}. Recall the balanced semicircle law from
\cref{prop:folded-dqi-from-bdd}. Write
\begin{align}
\alpha_{\rm sc}(\mu)
&\coloneq
\frac12+\sqrt{\mu(1-\mu)},
\qquad 0\leq\mu\leq\frac12,
\label{eq:semicircle-score}
\end{align}
for its limiting score at relative Fourier-layer cutoff $\mu$. DQI uses unique decoding, which permits a cutoff approaching $R/2$ and gives $\alpha_{\rm Q}(R)=\alpha_{\rm sc}(R/2)$.

Beyond this cutoff, a syndrome may correspond to several retained errors. These labels form a \emph{syndrome fiber}. Following Jo~\cite{Jo2026}, we use a complete list decoder to find every retained label in a fiber and combine their amplitudes coherently before the final Fourier transform.

The decoding ingredient differs from Jo's scalar Reed--Solomon instantiation, which uses the deterministic list decoder of Chatterjee, Harsha, and Kumar~\cite{ChatterjeeHarshaKumar2026} within the scalar Johnson radius. We instead invoke Guruswami's linear-algebraic list decoder specifically for \emph{folded} Reed--Solomon codes~\cite{Guruswami2011Linear}, together with the list-size bound of Chen and Zhang~\cite{ChenZhang2025Relaxed}. Applied to our folded dual, these imported results yield the complete decoder summarized in \cref{lem:frs-complete-list-decoder}: 
for every fixed rational cutoff $0<\mu<R$, it runs in polynomial time and returns a list of constant size for all sufficiently large $M$. The list-size constant may depend on the fixed gap $R-\mu$ between the cutoff and the decoding radius.

Small lists alone do not ensure appreciable preparation probability, because amplitudes within a fiber can cancel. For $0<\mu\leq1/2$ with $\mu<R$, we prove that, outside a doubly exponentially small set of sampled instances, fiber summation succeeds with constant probability and, conditioned on this success, the output has limiting expected score $\alpha_{\rm sc}(\mu)$. Thus these stronger guarantees hold on typical sampled instances, whereas DQI works for every promised instance. Optimizing over the allowed cutoffs gives the envelope in the following theorem. When $R>1/2$, a separate construction also achieves exact score one; this does not follow merely from a limiting expected score of one.

\begin{theorem}[Stronger quantum algorithms on typical oracles]
\label{thm:sampled-score-hierarchy}
Fix a rational rate $R\in(0,1)$ and put
\begin{align}
    \alpha_{\rm fib}(R) &\coloneq
    \begin{cases}
        \frac12+\sqrt{R(1-R)},&0<R\leq\frac12,\\
        1,&\frac12<R<1.
    \end{cases}
    \label{eq:headline-fiber-envelope}
\end{align}
The first branch is the semicircle curve $\alpha_{\rm sc}(R)$. For the balanced sampled-oracle family, the following hold. The success probabilities below are over the algorithm's randomness for each fixed instance.

\begin{enumerate}
    \item \emph{Improved score thresholds.}
    For every fixed rational $0<\beta<\alpha_{\rm fib}(R)$, there is a quantum algorithm using $\calO_{R,\beta}(M)$ queries and polynomially many gates. For all sufficiently large admissible $M$, except on a set of sampled instances of probability at most $\exp\{-\exp(\Omega_{R,\beta}(M\log M))\}$, the algorithm returns with probability $\geq 2/3$ a polynomial $Q_{{\rm fib},\beta}$ with $\scor(Q_{{\rm fib},\beta})\geq\beta$.
    
    \item \emph{Exact search for $R>1/2$.}
    If $R>1/2$, there is a quantum algorithm using $\calO_R(M)$ queries and polynomially many gates. For all sufficiently large admissible $M$, except on a set of sampled instances of probability at most $\exp\{-\exp(\Omega_R(M\log M))\}$, the algorithm returns with probability $\geq 2/3$ a polynomial $Q_{\rm exact}$ with $\scor(Q_{\rm exact})=1$.
    Every returned polynomial satisfies every block and therefore has approximation ratio one.
\end{enumerate}

The function $\alpha_{\rm fib}$ is an achievable envelope, not an upper bound on arbitrary quantum algorithms. Part~\textup{(i)} attains only thresholds strictly below it; exact score one is obtained by part~\textup{(ii)} only when $R>1/2$. No exact-search claim is made at $R=1/2$.
\end{theorem}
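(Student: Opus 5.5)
The plan is to run the Regev-style DQI pipeline of \cref{prop:folded-dqi-from-bdd} with a Fourier-layer cutoff $\ell=\floor{\mu M}$ for a fixed rational $\mu$ with $R/2<\mu<R$ and $\mu\le 1/2$. This cutoff lies beyond unique decoding, so the syndrome step is replaced by complete list decoding of the folded dual together with Jo's coherent fiber summation~\cite{Jo2026}.

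\textbf{Decoder.} I would first establish \cref{lem:frs-complete-list-decoder}. The folded dual is, after the coordinatewise rescaling by $\boldsymbol\nu$, a folded Reed--Solomon code of rate $1-R$. Guruswami's linear-algebraic decoder~\cite{Guruswami2011Linear} corrects a block-error fraction approaching $1-(1-R)=R$ as the folding parameter $h=M+2$ grows. The Chen--Zhang bound~\cite{ChenZhang2025Relaxed} then gives a list size $\calO_{R-\mu}(1)$. Pruning the list to labels of block weight at most $\ell$ yields, for every syndrome, the complete fiber of retained labels in polynomial time.

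\textbf{State preparation.} Next I would prepare $\sum_{e}w_{|e|}\prod_i\hat g_i(e_i)\ket{e}\ket{B^Te}$, with the semicircle-optimal weights $w_k$ from \cref{app:dqi}. To disentangle $e$ from the syndrome, I would follow Jo. Coherently compute the fiber list $F(s)$ from $s$, and replace the label register by a normalized fiber sum. This is done by applying a unitary that maps the fiber's amplitude vector to a single flag basis state, followed by postselection.

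The success probability equals the ratio of $\sum_s\|\sum_{e\in F(s)}a_e\|^2$-type terms to the total norm, counted with the fiber multiplicities. After the final quantum Fourier transform, the resulting state is exactly $\ket{P(f)}$ restricted appropriately. The semicircle computation then applies verbatim, because the cross terms among distinct labels $e\ne e'$ of the same syndrome are precisely what the fiber sum now includes. This gives limiting expected score $\alpha_{\rm sc}(\mu)$ conditioned on success. Verification with $M$ queries and a constant number of repetitions then gives part (i), by taking $\mu$ close to $\min(R,1/2)$ and noting that $\alpha_{\rm sc}$ is increasing on $[0,1/2]$. For $R>1/2$ this handles every $\beta<1$.

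\textbf{Main obstacle: ruling out fiber cancellation.} The hard step is showing that, on typical sampled oracles, the postselection probability is bounded below by a constant. The amplitudes $\hat g_i(e_i)$ are Fourier coefficients of random balanced sign functions. For two labels $e\ne e'$ in a common fiber, the cross term $\sum\overline{\hat g(e)}\hat g(e')$ should concentrate near zero. I would express the deviation of the success probability from its ``diagonal'' value as a sum over colliding pairs. This sum is a low-degree polynomial in the independent random acceptance sets. I would then apply a concentration inequality to it, for instance a bounded-differences or hypercontractive bound over the $|\Sigma|=\exp(\Theta(M\log M))$ independent coordinates. That concentration should produce the $\exp\{-\exp(\Omega(M\log M))\}$ exceptional probability, and the constant list size keeps the number of pairs per syndrome bounded.

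\textbf{Part (ii): exact search.} For part (ii) with $R>1/2$, I would instead mimic Yamakawa--Zhandry. Take $\mu$ above $1/2$ but below $R$, and use as the filter the full product $\prod_i(1+g_i)$, truncated at weight $\ell$. Alternatively, prepare the state $\sum_{x:\,c(x)\in\prod A_i}\ket{x}$ directly via Fourier-transformed indicator functions. In that case the weight-above-$\ell$ tail is negligible for typical oracles, because Fourier mass on high-weight blocks concentrates. Measuring then gives a fully accepted polynomial with constant probability, which verification certifies.
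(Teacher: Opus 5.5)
The architecture of your part (i) matches the paper's: complete folded-dual list decoding (Guruswami plus Chen--Zhang), Jo's fiber summation, concentration on typical oracles, and verified repetition. Two steps nevertheless fail as written.

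\textbf{The score computation in part (i).} The semicircle computation does not apply ``verbatim'' to $\ket{P(f)}$ once $\ell=\floor{\mu M}$ with $\mu>R/2$. The Jacobi-matrix formula of \cref{prop:dqi} uses $2\ell+1<d_{\rm blk}(C^\perp)$ twice: for orthonormality of the layer states, and for the no-collision identity under one-block shifts. Neither is available here. The conditional expected score is $N/Z$, with $Z=\E_{x\in V}[P(f(x))^2]$ and $N=\E_{x\in V}[P(f(x))^2\scor(Q_x)]$, and both are oracle-dependent. So you must concentrate the score numerator $N$, not only the normalization $Z$. The semicircle value arises only as $\E_{\rm aux}[N]$, via $\E_{\rm aux}[P_jP_k]=\mathds 1_{\{j=k\}}$ and the recurrence $fP_j=\sqrt{(j+1)(M-j)}\,P_{j+1}+\sqrt{j(M-j+1)}\,P_{j-1}$.

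\textbf{The concentration step in part (i).} Your plan is set up in the wrong picture. Under the balanced distribution, the membership bits within a block are dependent. In an independent-bit model, the label-side description breaks, since $\widetilde\phi_i(0)\neq0$. The bounded differences are small not because lists are short, but because of the message picture. Flipping the bit at $(i,z)$ only changes the terms of messages with $B_ix=z$, a $1/|\Sigma|$ fraction, and each term lies in $[0,2^M]$. Hence $Z$ and $N$ move by at most $2^M/|\Sigma|$. McDiarmid in the auxiliary independent-bit model, plus the conditioning cost $(|\Sigma|+1)^M$, then gives the doubly exponential exceptional set. A hypercontractive bound at degree $\Theta(M)$ would give only $\exp(-M^{\calO(1)})$.

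\textbf{The erasure step.} Erasing the label cannot be done by a unitary adapted to the oracle-dependent amplitude vector of a fiber. It must be the fixed projection of the list-index register onto its uniform superposition (Hadamards, accept $0$). That is what produces $\sum_{e\in F(s)}a_e$, with success probability $Z/L_{\rm pad}$.

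\textbf{Part (ii).} This part has a genuine gap. ``The weight-above-$\ell$ tail is negligible'' is a statement about the label register, and it holds for \emph{every} balanced oracle. The Fourier block weight of the accepted-symbol product state is exactly $\operatorname{Bin}(M,1/2)$. So for $\mu\in(1/2,R)$ the discarded mass is $e^{-\Omega(M)}$ on every promised instance. That includes balanced instances with no perfect codeword: take a nonzero $y\in C^\perp$ and set $A_i=\{z:\langle z,y_i\rangle_\Sigma=a_i\}$ on its support, with $\sum_i a_i=1$. Hence this fact cannot imply a constant probability of a perfect output.

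Nor does it bound the distance to the Yamakawa--Zhandry target. Summing amplitudes over full, exponentially large fibers is not a contraction, and the message-side discrepancy $\E_x[F^{>k}(x)^2]$ is oracle-dependent.

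What is missing is the following:
\begin{enumerate}
\item[(a)] A direct computation of the output. Let $k=\floor{\mu M}$ and $F^{\leq k}(x)=2^{-M/2}\sum_{|T|\leq k}\prod_{i\in T}f_i(B_ix)$. After fiber summation over the retained labels and $\operatorname{QFT}_V^{-1}$, the successful amplitude at $x$ is $F^{\leq k}(x)/\sqrt{L_{\rm pad}|V|}$. This equals $2^{M/2}b_M/\sqrt{L_{\rm pad}|V|}$ at \emph{every} perfect $x$, where $b_M=\Prb[\operatorname{Bin}(M,1/2)\leq k]$. So one attempt succeeds and outputs a perfect polynomial with probability exactly $b_M^2Z_{\rm perf}/L_{\rm pad}$.
\item[(b)] The concentration $Z_{\rm perf}\geq 1/2$ of the number of perfect codewords, from \cref{lem:typical-perfect-satisfiability}. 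This is where typicality actually enters.
\end{enumerate}
With (a) and (b) in place, your choice $\mu\in(1/2,R)$ also works, with $b_M\to1$ in place of the paper's $b_M=1/2$ at radius $1/2$.
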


\begin{proof}
    Part~\textup{(i)} is proved in \cref{sec:proof-sampled-coherent-fib}. Part~\textup{(ii)} is proved in \cref{sec:proof-sampled-exact}. Both parts use complete list decoding of the folded dual, obtained from the folded Reed--Solomon list-decoding algorithm of Guruswami~\cite{Guruswami2011Linear} and the list-size bound of Chen and Zhang~\cite{ChenZhang2025Relaxed}. Part~\textup{(ii)} also uses the concentration of perfect solutions from \cref{lem:typical-perfect-satisfiability}.
\end{proof}

\paragraph{Proof mechanism.}
For approximation, we retain Fourier layers up to a fixed rational relative cutoff $\mu$ satisfying $0<\mu\leq1/2$ and $\mu<R$. Complete list decoding finds all retained Fourier labels compatible with each syndrome, and Jo's coherent fiber-summation method~\cite{Jo2026} combines their amplitudes. A bounded list size makes this procedure efficient, but does not by itself ensure a non-negligible success probability, because amplitudes within a fiber may interfere destructively. We show that, on typical sampled instances, both the state normalization and the score numerator concentrate around their expectations. Consequently, fiber summation succeeds with constant probability and, conditioned on success, the output has limiting expected score $\alpha_{\rm sc}(\mu)$. Verification and a constant number of independent attempts then give every fixed threshold below the stated envelope.

For exact search, the goal is to find a block string that is both a codeword and accepted in every block. The starting point, due to Yamakawa and Zhandry~\cite{YamakawaZhandry2024}, is to prepare a superposition over strings accepted in every block and use Fourier decoding to enforce the code constraint. Horinaga and Yamakawa~\cite{HorinagaYamakawa2026} develop this approach for worst-case OPI using list decoding. Because a list decoder may return several errors compatible with the same syndrome, we use Jo's coherent fiber-summation method~\cite{Jo2026} to add their amplitudes before transforming back to candidate polynomials.

The threshold $R>1/2$ is where the strong list-decoding guarantees of folded Reed--Solomon codes enter~\cite{GuruswamiRudra2008,Guruswami2011Linear}. The folded dual has rate $1-R+o(1)$ and can be list decoded from any fixed fraction of block errors below $R$.\footnote{This is where folding improves the exact-search range over the scalar Reed--Solomon setting of Horinaga and Yamakawa~\cite{HorinagaYamakawa2026}. Their analysis uses the scalar Reed--Solomon list-decoding radius $1-\sqrt{1-R}$ for the dual code, whose rate is $1-R$. Reaching radius $1/2$ therefore requires $1-\sqrt{1-R}>1/2$, equivalently $R>3/4$. The stronger capacity-approaching list-decoding guarantee for the folded dual reaches every fixed radius below $R$, reducing the corresponding threshold to $R>1/2$.} 
Hence, when $R>1/2$, the decoder handles every error supported on at most half of the blocks. Balance supplies the complementary fact: in each block, the Fourier probability is split equally between the zero and nonzero frequencies. Consequently, since $M$ is odd, exactly half of the total Fourier probability lies within this decodable range. Together with concentration of the number of perfect solutions, this gives a constant probability of obtaining a verified polynomial accepted in every block.

Separately, there is a useful comparison at the level of decoding parameters. Yamakawa and Zhandry also instantiate their construction with folded Reed--Solomon codes, but their quantum decoding step unfolds the dual code and applies a scalar generalized Reed--Solomon list decoder~\cite{YamakawaZhandry2024}. Together with their random-error analysis, this leads them to assume a primal rate greater than $5/6$. Our exact-search construction instead uses complete list decoding directly in the folded block metric~\cite{Guruswami2011Linear,ChenZhang2025Relaxed}. Since the folded dual can be decoded from every fixed block-error fraction below $R$, this covers the half-block regime whenever $R>1/2$; the accompanying constant list-size bound makes coherent fiber summation efficient. This is a comparison of decoding ingredients, not a direct improvement of their oracle separation. Yamakawa and Zhandry use an i.i.d.\ vector-output random oracle, whereas our model uses separately callable one-bit block predicates subject to exact balance. The precise connection is established in \cref{sec:yz-dictionary}.

\paragraph{Resulting query separations.}
The classical side is unchanged: \cref{thm:frs-separation} places the classical threshold at $\alpha_{\rm C}(R)$, and Prange's algorithm attains it. The stronger quantum algorithms above simply enlarge the separation window. For every fixed rational threshold
\begin{align}
    \alpha_{\rm C}(R)<r<\alpha_{\rm fib}(R),
\end{align}
linearly many quantum queries suffice on typical sampled instances, whereas achieving constant success classically requires $\exp(\Omega_{R,r}(M/\log M))$ queries under the same distributional quantifiers as in \cref{thm:frs-separation}\textup{(iii)}.

The same statement holds for multiplicative approximation. A score of at least $r$ is automatically an $r$-approximation because the optimum is at most one; conversely, typical sampled instances have optimum one by \cref{lem:typical-perfect-satisfiability}, so score and approximation ratio coincide on those instances. For $R>1/2$, part~\textup{(ii)} additionally gives an exact-search separation: every perfect classical output would also cross any fixed threshold $r\in(\alpha_{\rm C}(R),1)$ and is therefore covered by the same classical query lower bound. Thus, compared with DQI, coherent fiber summation replaces the upper endpoint $\alpha_{\rm Q}(R)$ by the strictly larger $\alpha_{\rm fib}(R)$ and, above rate one half, reaches exact search.

\section{Discussion}
\label{sec:discussion}

In this work, we have established a query-complexity separation for approximate optimization by DQI. For our balanced folded-OPI family, Prange's algorithm attains the classical benchmark $\alpha_{\rm C}(R)$, whereas DQI achieves the strictly larger limiting expected score $\alpha_{\rm Q}(R)$ at every fixed rational rate $R$. Here, DQI uses unique decoding of the dual code and at most $M$ coherent membership queries. Conversely, under the sampled-oracle distribution, achieving any fixed improvement over $\alpha_{\rm C}(R)$ with constant probability requires superpolynomially many classical queries in the range stated in \cref{thm:frs-separation}, even for fully adaptive algorithms with unlimited computation between queries.

We further showed that stronger guarantees are available for this problem setup. The key insight here is to replace \emph{unique decoding} with \emph{complete list decoding} of the folded dual and then to coherently combine labels with the same syndrome. This yields higher score guarantees on \emph{typical sampled oracles} and \emph{exact search} for rates above one half; see \cref{thm:sampled-score-hierarchy}. For completeness, we establish separations both in terms of scores and approximation ratios, following \cref{cor:typical-dqi-ratio-separation}.

Our work constitutes an \emph{unconditional approximation separation for DQI} relative to an oracle, against computationally unbounded classical algorithms, thus resolving an open question raised in Refs.~\cite{JordanEtAl2025DQI,Aaronson2024DQI,MarwahaEtAl2026}. The novelty of this work lies primarily in the resulting separation, rather than in the invention of a new coding-theoretic or quantum primitive. We do not claim a new code family, decoding paradigm, or DQI mechanism: many of the individual ingredients are inherited from earlier work. What was previously missing was a common setting in which these ingredients yield a \emph{rigorous quantum advantage for approximation}, together with a classical query lower bound identifying the matching benchmark.

The proof uses the duality of the underlying code $C$: On the primal side, \emph{list recoverability} of the code $C$ limits the number of codewords exposed by the classical queries, while a conditional binomial-tail argument controls the probability that any such codeword receives sufficiently many additional acceptances to exceed a certain score threshold. On the dual side, the available \emph{coherent decoding radius} of $C^\perp$ directly determines the score that the quantum algorithm can achieve. A provable separation results whenever this quantum score exceeds the classical threshold. Identifying further code families that combine sufficiently strong primal list recoverability with efficient coherent decoding of the dual is a natural direction for future work.

One caveat of our work is that it relies on an oracle. That being said, the code, evaluation points, and all algebraic structure are explicit; the oracle only answers whether a proposed block symbol belongs to the corresponding acceptance set, and both the classical and quantum algorithms receive access to this same membership interface. This is a natural model whenever acceptance can be tested without explicitly enumerating an exponentially large set.
The main role of the oracles is to prevent efficient classical algorithms from exploiting any structure in the acceptance sets.
A clear open direction is thus to instantiate the oracle: to provide a similar result to ours conditionally on computational complexity assumptions.
Finally, the exact balance condition is used to simplify the analysis and keep the exposition clear, but we do not expect it to be essential. For any fixed rational acceptance density, the same proof architecture should apply after centering and normalizing the local predicates: the classical analysis would involve biased acceptance trials, while the quantum analysis would produce corresponding density-dependent score guarantees.

\subsection*{Acknowledgements}
We thank Alexander Nietner for useful discussions, and Jean-Pierre Seifert and Carsten Schubert for early discussions about the possibility of proving a rigorous quantum advantage for DQI.
This work has been supported by the BMFTR (PraktiQOM, QuSol, HYBRID++), the Munich Quantum Valley, Berlin Quantum, the QuantERA, the Quantum Flagship (Millenion, PasQuans2), the DFG (CRC 183, SPP 2514), the Clusters of Excellence (MATH+, ML4Q), and the European Research Council (DebuQC).
E.~G.-F. is a recipient of the 2023 Google PhD Fellowship.

\subsection*{AI-disclosure}
Some of the authors had investigated this question extensively before the public release of OpenAI's GPT-5.6 Sol. The authors supplied the research question, mathematical setting, and intended scope to GPT-5.6 Sol. Through an iterative exchange, the model proposed candidate proof strategies and generated preliminary theorem statements and proof drafts. The central proof strategy ultimately used in this manuscript emerged through that exchange.
The authors subsequently used OpenAI's GPT-5.6 Sol and GPT-6 Astra and Anthropic's Claude Opus 5.5 to help check calculations and proof consistency, compare notation and assumptions with cited work, and assist with simplification, exposition, and manuscript restructuring. The authors verified the mathematical claims, corrected and substantially revised the model-generated material, and developed the final arguments. The authors understand, can defend, and clearly explain all parts of the work and take full responsibility for it.

\begingroup
\sloppy
\printbibliography
\endgroup

\appendix
\addtocontents{toc}{\protect\setcounter{tocdepth}{1}}
\crefalias{section}{appendix}

\section{Detailed proof of the DQI guarantee}
\label{app:dqi}

We prove \cref{thm:frs-separation}\textup{(i)} for every fixed collection of balanced acceptance sets. Jordan et al.~\cite[Supplementary Theorem~1.1 and Supplementary Sections~6.2--6.4]{JordanEtAl2025DQI} derive the scalar Jacobi-matrix formula and the resulting semicircle score law for scalar DQI. Their Supplementary Sections~11.2--11.3 develop the circuit for folded instances, while Supplementary Section~11.4 treats its implementation under oracle access to the acceptance sets. They do not, however, derive the corresponding semicircle score law for folded codes. Here we establish this extension in the balanced setting and apply it to our folded Reed--Solomon family. References to sections and lemmas of Jordan et al. below refer to the journal version and its supplementary material.

\subsection{The code and its syndrome decoder}
\label{app:dqi-frs-specialization}

Fix a rational rate $R\in(0,1)$ and an admissible block count $M$ from the construction in \cref{sec:density-oracle-model}. For convenience, we repeat the relevant parameters here. By definition of admissibility for the parameter $M$, $M+1$ is a power of two; write $M+1=2^u$. The associated parameters are
\begin{align}
     h&\coloneq M+2,&
     N&\coloneq hM,&
     q&\coloneq N+1=(M+1)^2,\\
     K&\coloneq\floor{RN},&
     J&\coloneq\floor{K/h}=\floor{RM}.
\end{align}
Admissibility additionally requires $K\geq h$. Here $J$ is the largest number of complete blocks satisfying $hJ\leq K$. Put $V\coloneq\F_q^K$ for the candidate space and $\Sigma\coloneq\F_q^h$ for the block alphabet. For a primitive element $\gamma\in\F_q^\times$, the folded evaluation encoder is
\begin{align}
     Q_x(Y)&\coloneq\sum_{j=0}^{K-1}x_jY^j,&
     Bx&\coloneq
     \left(\bigl(Q_x(\gamma^{ih+t})\bigr)_{t\in[h]}\right)_{i\in[M]}.
\end{align}
Since $K<N$ and the evaluation points are distinct, $B$ is injective, and $C\coloneq\operatorname{im}B=C_{\rm FRS}$. Fix an arbitrary balanced family $\mathcal A=(A_i)_{i\in[M]}$, with oracle access as in \cref{sec:balanced-oracle}. In particular, each oracle call invokes one classically chosen $\calO_{A_i}$. As in \eqref{eq:folded-signed-objective}, define the signed objective
\begin{align}
    f(x) &\coloneq \sum_{i\in[M]} \left(2\cdot \mathds{1}_{\{B_i x\in A_i\}}-1\right).
    \label{eq:app-dqi-signed-objective}
\end{align}
Thus $f(x)$ is the number of accepted blocks minus the number of rejected blocks. Consequently, the score of $Q_x$ is
\begin{align}
    \scor(Q_x) =\frac12+\frac{f(x)}{2M}.
    \label{eq:app-dqi-concrete-score}
\end{align}
For the blockwise analysis below, denote the $i$th local sign predicate by
\begin{align}
    f_i(z) &\coloneq 2\cdot \mathds{1}_{\{z\in A_i\}}-1.
\end{align}
Since $|A_i|=|\Sigma|/2$, each $f_i$ is balanced over the block alphabet $\Sigma$. 
For Fourier analysis when $q=2^m$ for a positive integer $m$, we use the binary-valued pairing $\langle u,v\rangle_W\coloneq\operatorname{Tr}_{\F_q/\F_2}(u^Tv)$ for $u,v\in W$, where $W\in\{V,\Sigma,\Sigma^M\}$, unfolding block vectors when taking the dot product. Here, the field trace is $\operatorname{Tr}_{\F_q/\F_2}(a)=\sum_{j=0}^{m-1}a^{2^j}\in\F_2$, so $\langle u,v\rangle_W$ takes values in ${0,1}$.

\paragraph{The decoding radius.}
\label{app:dqi-frs-geometry}
In the scalar setting, the number of columns of the encoder controls the possible dual distance: for an $N\times K$ encoder with $K<N$, the scalar dual distance is at most $K+1$. Our Reed--Solomon encoder attains this bound. Conventional unique decoding therefore corrects every error pattern of scalar Hamming weight strictly less than $(K+1)/2$.

In the folded setting, an error word consists of $M$ blocks, each containing $h$ field elements. Its block weight counts the nonzero blocks, whereas its scalar Hamming weight counts the nonzero field elements after unfolding. An error supported on at most $\ell_M$ blocks can therefore contain at most $h\ell_M$ scalar errors. Thus the scalar decoder suffices whenever
$2h\ell_M<K+1$. Likewise, every nonzero dual codeword must occupy at least
\begin{align}
    d_{0,M}  \coloneq \ceil{\frac{K+1}{h}} =J+1
\end{align}
blocks, where $J=\floor{K/h}$. Hence $d_{\rm blk}(C^\perp)\geq d_{0,M}$.

For DQI, the cutoff must satisfy both this decoding requirement and the slightly stronger distance condition used by the score calculation. We therefore choose, as in \cref{sec:proof-quantum},
\begin{align}
    \ell_M &\coloneq \floor{\frac{J-1}{2}}.
\end{align}
This gives $2h\ell_M<K+1$ and
$2\ell_M+1\leq J<d_{\rm blk}(C^\perp)$.
The following lemma verifies these bounds and constructs the
coherent syndrome decoder required by DQI.

\begin{lemma}[Folded Reed--Solomon decoding interface]
    \label{lem:dqi-frs-geometry}
    The map $B^T$ is a syndrome map for $C^\perp$. For the above cutoff,
    \begin{align}
         2\ell_M+1&<d_{\rm blk}(C^\perp),& 2h\ell_M&<K+1,&\ell_M/M&\rightarrow R/2.
         \label{eq:dqi-frs-radius-check}
    \end{align}
    A  polynomial-time decoder $\operatorname{Dec}_M:V\to\Sigma^M$ satisfies $\operatorname{Dec}_M(B^Te)=e$ whenever $\operatorname{wt}_{\rm blk}(e)\leq\ell_M$ and admits a reversible implementation with polynomial overhead.
\end{lemma}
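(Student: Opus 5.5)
The proof splits into three parts: the syndrome-map claim, the three numerical bounds in \eqref{eq:dqi-frs-radius-check}, and the construction of a reversible decoder.

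\textbf{Syndrome map.} Since $C=\operatorname{im}B$, we have $C^\perp=\ker B^T$ by \eqref{eq:prelim-dual-code}. This identity survives unfolding because the block pairing $\langle\cdot,\cdot\rangle_{\rm blk}$ agrees with the ordinary unfolded inner product. Hence $B^Te=B^Te'$ exactly when $e-e'\in C^\perp$, so $B^T$ is a syndrome map for $C^\perp$.

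\textbf{Numerical bounds.} These are routine. Every nonzero word of $C^\perp=\operatorname{Fold}_h(\operatorname{GRS}_{\boldsymbol\alpha,\boldsymbol\nu,N-K})$ has scalar weight at least $K+1$, and a word on $b$ blocks has scalar weight at most $hb$. This gives $d_{\rm blk}(C^\perp)\geq\ceil{(K+1)/h}=J+1$, using $hJ\leq K<h(J+1)$. Since $2\ell_M+1\leq J$, the first inequality follows. Next, $2h\ell_M\leq h(J-1)<hJ\leq K<K+1$. Finally, $J=\floor{RM}$ gives $\ell_M/M=R/2+\calO(1/M)$. Admissibility ($K\geq h$, so $J\geq1$) ensures $\ell_M\geq0$.

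\textbf{Decoder construction.} I would build $\operatorname{Dec}_M$ as a composition of four efficient maps.
\begin{enumerate}
\item Given a syndrome $s\in V$, compute by Gaussian elimination some $y_0\in\F_q^N$ with $B^Ty_0=s$. This solution exists because $B$ has full column rank, so $B^T$ is surjective. Then $y_0=e+y^\perp$ for a codeword $y^\perp$ of the unfolded GRS dual.
\item Divide coordinatewise by $\nu_t$, which equals $\alpha_t$ in characteristic two by \eqref{eq:dqi-full-length-dual-multipliers}. This gives a received word for $\operatorname{RS}_{\boldsymbol\alpha,N-K}$ whose error has the same support as $e$, hence scalar weight at most $h\ell_M<(K+1)/2$.
\item Run Berlekamp--Massey (or Berlekamp--Welch) to recover the error. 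It is unique because $2h\ell_M<K+1$ equals the minimum distance of this RS code.
\item Rescale by $\nu_t$ and fold to obtain $e$.
\end{enumerate}
Correctness follows from the uniqueness of the codeword within half the minimum distance.

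\textbf{Reversible implementation.} This is the main point needing care, and I would handle it with the standard Bennett construction. Each classical step is a polynomial-size circuit over $\F_q$ arithmetic with $q=(M+1)^2$, so field elements have $\calO(\log M)$ bits and each step compiles to a reversible circuit with garbage. Compute $\operatorname{Dec}_M(s)$ into a fresh register, copy the output, and uncompute the garbage, giving $\ket{s}\ket{0}\mapsto\ket{s}\ket{\operatorname{Dec}_M(s)}$.

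Two details must be checked. First, Berlekamp--Massey has data-dependent control flow, so I would pad it to a fixed number of iterations ($2\cdot$ its maximum), making the circuit uniform over all inputs. Second, on syndromes outside the decodable set the decoder must still output some well-defined value (for instance, the computed error, or zero on failure), so that the map stays a total function and Bennett's construction applies. The lemma only requires $\operatorname{Dec}_M(B^Te)=e$ on the decodable set, so this choice is harmless. How DQI then uses the decoder to erase $\ket{e}$ via $\ket{e}\ket{B^Te}$ is outside the scope of this lemma.
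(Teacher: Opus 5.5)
Your proposal is correct and follows essentially the same route as the paper. Both arguments derive $d_{\rm blk}(C^\perp)\geq\lceil (K+1)/h\rceil=J+1$ from the scalar distance $K+1$, then build the syndrome decoder by Gaussian elimination to a preimage of the syndrome, rescaling by the dual multipliers, conventional Reed--Solomon unique decoding below $(K+1)/2$ scalar errors, a total map (zero on failure), and compute--copy--uncompute; your remark on padding the control flow of Berlekamp--Massey just makes explicit what the paper leaves inside ``polynomial overhead.''
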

\begin{proof}
The identity $C^\perp=\ker B^T$ shows that $B^T$ is a syndrome map for $C^\perp$. As reviewed in \cref{sec:codes-and-folding}, unfolding $C^\perp$ gives the generalized Reed--Solomon dual of the evaluation code. This code has scalar minimum distance $K+1$. Dividing by its known nonzero coordinate multipliers, applying conventional Reed--Solomon decoding~\cite{Berlekamp1968,Massey1969}, and restoring the multipliers therefore gives an efficient \emph{word decoder} for $C^\perp$: whenever a received word $r$ lies within fewer than $(K+1)/2$ scalar errors of a codeword in $C^\perp$, the decoder recovers that codeword.

A word supported on $b$ nonzero blocks has at most $hb$ nonzero scalar coordinates. Since every nonzero dual codeword has scalar weight at least $K+1$, it follows that
\begin{align}
    d_{\rm blk}(C^\perp) &\geq \ceil{(K+1)/h}  = d_{0,M}  = J+1.
    \label{eq:dqi-frs-distance-lower}
\end{align}
Using $\ell_M=\lfloor(J-1)/2\rfloor$, we obtain
\begin{align}
    2\ell_M+1\leq J<d_{0,M}\leq d_{\rm blk}(C^\perp), \qquad 2h\ell_M\leq h(J-1)<K+1.
\end{align}
Moreover, $J=\lfloor RM\rfloor$ gives $\ell_M=RM/2+\calO(1)$, and hence $\ell_M/M\rightarrow R/2$.

Consequently, every error supported on at most $\ell_M$ blocks unfolds to at most $h\ell_M<(K+1)/2$ scalar errors. Although DQI measures errors in block weight, conventional Reed--Solomon unique decoding therefore suffices at the chosen cutoff; no dedicated unique decoder for folded codes is needed.

DQI, however, requires a syndrome decoder rather than a word decoder. We obtain one as follows. Because $B$ is injective, $B^T$ is surjective. Given a syndrome $\sigma\in V$, Gaussian elimination therefore finds a word $r\in\Sigma^M$ satisfying $B^Tr=\sigma$. If $\sigma=B^Te$ for an error pattern $e$, then $B^T(r-e)=B^Tr-B^Te=0$, so $r-e\in C^\perp$. In other words, $r$ is a dual codeword corrupted by exactly the error $e$. Recovering $e$ from its syndrome therefore reduces to decoding $r$ to that dual codeword and subtracting.

Apply the word decoder to $r$. For every promised error $e$ with $\operatorname{wt}_{\rm blk}(e)\leq\ell_M$, its scalar weight is at most $h\ell_M<(K+1)/2$. Unique decoding therefore returns the dual codeword $\widehat r=r-e$, and hence $e=r-\widehat r$. Return $r-\widehat r$ when decoding succeeds and zero on failure. This defines a total deterministic polynomial-time function $\operatorname{Dec}_M:V\to\Sigma^M$ satisfying $\operatorname{Dec}_M(B^Te)=e$ for every promised error.

Compute--copy--uncompute implements this function coherently as
\begin{align}
    \ket{\sigma}\ket{0}\ket{0} &\longmapsto \ket{\sigma}\ket{\operatorname{Dec}_M(\sigma)}\ket{0},
    \label{eq:decoder-interface}
\end{align}
where the last register is workspace. The reversible implementation has polynomial overhead.
\end{proof}

\subsection{The folded DQI score law}
\label{app:dqi-score-proof}

Jordan et al.~\cite[Supplementary Lemmas~6.2--6.3]{JordanEtAl2025DQI} derive the balanced scalar DQI score law by restricting the score observable to the retained Fourier layers. We follow the same argument with one Fourier symbol per folded block. The only additional point is that the dual block distance must exclude collisions between retained labels. The following score calculation uses no Reed--Solomon structure. It applies to any full-column-rank encoder $B:\F_q^K\to(\F_q^h)^M$ over a field of characteristic two, with balanced local predicates and the trace pairings above. Put $C\coloneq\operatorname{im}B$, so that $C^\perp=\ker B^T$, and let $0\leq\ell\leq M$ satisfy
\begin{align}
    2\ell+1&<d_{\rm blk}(C^\perp).
    \label{eq:strict-radius}
\end{align}
For our folded Reed--Solomon family, we take $\ell=\ell_M$; this condition then follows from \cref{lem:dqi-frs-geometry}.

\paragraph{Fourier layers.}
Use the unitary Fourier transforms
\begin{align}
    \operatorname{QFT}_\Sigma\ket z &\coloneq\frac1{\sqrt{|\Sigma|}}\sum_{\eta\in\Sigma} (-1)^{\langle z,\eta\rangle_\Sigma}\ket\eta, 
    \label{eq:app-block-fourier}\\
    \operatorname{QFT}_V\ket x &\coloneq\frac1{\sqrt{|V|}}\sum_{\xi\in V} (-1)^{\langle x,\xi\rangle_V}\ket\xi.
    \label{eq:app-message-fourier}
\end{align}
Nondegeneracy makes these transforms unitary; symmetry and characteristic two make them self-inverse. For each block $i$, define the Fourier coefficients of its sign predicate $f_i$ by
\begin{align}
    \widetilde\phi_i(\eta) &\coloneq\frac1{|\Sigma|}\sum_{z\in\Sigma} f_i(z)(-1)^{\langle z,\eta\rangle_\Sigma}.
    \label{eq:app-fourier}
\end{align}
Balance gives $\widetilde\phi_i(0)=|\Sigma|^{-1}\sum_{z\in\Sigma}f_i(z)=0$. Parseval's identity, with the normalization in \eqref{eq:app-fourier}, gives
\begin{align}
    \sum_{\eta\in\Sigma}|\widetilde\phi_i(\eta)|^2 =\frac1{|\Sigma|}\sum_{z\in\Sigma}|f_i(z)|^2=1,
\end{align}
because $f_i(z)\in\{\pm1\}$. For $y=(y_i)_{i\in[M]}\in\Sigma^M$, put
\begin{align}
    \widetilde\phi(y) &\coloneq\prod_{i:y_i\ne0}\widetilde\phi_i(y_i).
\end{align}
Equivalently, the block Fourier transform maps the normalized sign state to its Fourier coefficients:
\begin{align}
    \operatorname{QFT}_\Sigma \left( \frac1{\sqrt{|\Sigma|}}\sum_{z\in\Sigma}f_i(z)\ket z \right) &=\sum_{\eta\in\Sigma}\widetilde\phi_i(\eta)\ket\eta.
    \label{eq:app-local-fourier-state}
\end{align}
The zero-frequency amplitude vanishes by the balance condition.

We now follow the order of the folded-layer construction in Jordan et al.~\cite[Supplementary Equations~(234)--(242)]{JordanEtAl2025DQI}. For $0\leq j\leq\ell$, define the degree-$j$ layer polynomial, evaluated at $f(x)$, by
\begin{align}
    P_j(f(x)) &\coloneq\frac1{\sqrt{\binom Mj}} \sum_{\substack{S\subseteq[M]\\|S|=j}} \prod_{i\in S}f_i(B_ix),
    \label{eq:app-polynomial-dqi-layer}
\end{align}
and the corresponding computational-basis layer by
\begin{align}
    \ket{P_j(f)} &\coloneq \frac1{\sqrt{|V|}}\sum_{x\in V}P_j(f(x))\ket x.
    \label{eq:app-polynomial-dqi-layer-state}
\end{align}
Fourier inversion of every selected local predicate gives
\begin{align}
    f_i(B_ix) &=\sum_{\eta\in\Sigma}\widetilde\phi_i(\eta) (-1)^{\langle B_ix,\eta\rangle_\Sigma}.
\end{align}
Because $\widetilde\phi_i(0)=0$, multiplying these expansions and collecting the labels into $y\in\Sigma^M$ yields
\begin{align}
    P_j(f(x)) &=\frac1{\sqrt{\binom Mj}} \sum_{\operatorname{wt}_{\rm blk}(y)=j} \widetilde\phi(y) (-1)^{\langle Bx,y\rangle_{\Sigma^M}} =\frac1{\sqrt{\binom Mj}} \sum_{\operatorname{wt}_{\rm blk}(y)=j} \widetilde\phi(y) (-1)^{\langle x,B^Ty\rangle_V}.
    \label{eq:app-support-character-expansion}
\end{align}
The second equality is the defining adjoint identity for $B^T$. For every $\xi\in V$, the message-space Fourier transform satisfies
\begin{align}
    \operatorname{QFT}_V \left( \frac1{\sqrt{|V|}}\sum_{x\in V} (-1)^{\langle x,\xi\rangle_V}\ket x \right) &=\ket\xi.
    \label{eq:app-character-to-syndrome}
\end{align}
Applying this identity term by term to \eqref{eq:app-support-character-expansion} gives
\begin{align}
    \ket{\widetilde{P}_j} &\coloneq \operatorname{QFT}_V\ket{P_j(f)} =\frac1{\sqrt{\binom Mj}} \sum_{\operatorname{wt}_{\rm blk}(y)=j} \widetilde\phi(y)\ket{B^Ty}.
    \label{eq:app-layer}
\end{align}
Here $y$ is the blockwise error and $B^Ty$ is its syndrome. The states $\ket{\widetilde P_0},\ldots,\ket{\widetilde P_\ell}$ are orthonormal. Indeed, the map $y\longmapsto B^Ty$ is injective on labels of block weight at most $\ell$: equality of two syndromes would give $y-y'\in C^\perp$ with block weight at most $2\ell<d_{\rm blk}(C^\perp)$, and hence $y=y'$. Moreover, the balance condition and Parseval give
\begin{align}
    \norm{\widetilde P_j}^2 &= \frac1{\binom Mj} \sum_{\substack{S\subseteq[M]\\|S|=j}} \prod_{i\in S} \sum_{\eta\ne0}\left|\widetilde\phi_i (\eta)\right|^2 =1.
\end{align}
Injectivity also makes layers of different weights orthogonal. Since $\operatorname{QFT}_V$ is unitary, the computational-basis layers $\ket{P_j(f)}$ are orthonormal as well.

Finally, the layer amplitude depends on $x$ only through $f(x)$. Indeed, exactly $(M+f(x))/2$ of the local signs are positive, and hence
\begin{align}
    P_j(f(x)) &=\frac1{\sqrt{\binom Mj}} \sum_{a=0}^j(-1)^{j-a} \binom{(M+f(x))/2}{a} \binom{(M-f(x))/2}{j-a}.
    \label{eq:app-layer-counting-polynomial}
\end{align}
Viewed as a function of its scalar argument, the right-hand side is a univariate polynomial of degree $j$. Hence $P_0,\ldots,P_\ell$ form a basis for the filters of degree at most $\ell$.

For a real unit vector $\mathbf w=(w_0,\ldots,w_\ell)^T$, the corresponding filter assigns amplitude
\begin{align}
    P_{\mathbf w}(f(x)) &\coloneq \sum_{j=0}^{\ell}w_jP_j(f(x)).
    \label{eq:app-filter-from-layers}
\end{align}
Here $P_{\mathbf w}$ is a real polynomial of degree at most $\ell$, evaluated at the signed objective $f(x)$.

The corresponding normalized quantum state is
\begin{align}
 \ket{P_{\mathbf w}(f)}
 &\coloneq \frac1{\sqrt{|V|}}\sum_{x\in V}P_{\mathbf w}(f(x))\ket x
 =\sum_{j=0}^{\ell}w_j\ket{P_j(f)}.
 \label{eq:app-jordan-dqi-superposition}
\end{align}

\paragraph{The expected-score matrix.}
Define the score observable
\begin{align}
    \mathsf S_f \coloneq\sum_{x\in V}\scor(Bx)\ket x \bra x.
    \label{eq:app-score-observable}
\end{align}
If $X_{\mathbf w}$ denotes the computational-basis measurement outcome of \eqref{eq:app-jordan-dqi-superposition}, then
\begin{align}
    \E\bigl[\scor(BX_{\mathbf w})\bigr]
    &=\bra{P_{\mathbf w}(f)}\mathsf S_f\ket{P_{\mathbf w}(f)}.
    \label{eq:app-score-expectation}
\end{align}
The following is the folded, balanced analogue of \cite[Supplementary Lemma~6.2]{JordanEtAl2025DQI}.

\begin{proposition}[Balanced folded-DQI score law]
\label{prop:dqi}
For every real unit vector $\mathbf w\in\mathbb R^{\ell+1}$,
\begin{align}
 \E\bigl[\scor(BX_{\mathbf w})\bigr]
 &=\frac12+\frac{\mathbf w^TA^{(M,\ell,0)}\mathbf w}{2M},
 \label{eq:finite-dqi-general}
\end{align}
where $A^{(M,\ell,0)}$ is the symmetric tridiagonal matrix with zero diagonal and
\begin{align}
 A^{(M,\ell,0)}_{j+1,j}
 &=A^{(M,\ell,0)}_{j,j+1}
 =\sqrt{(j+1)(M-j)},
 \qquad 0\leq j<\ell.
 \label{eq:jacobi}
\end{align}
Consequently, the optimal expected score over these retained layers is
\begin{align}
 \frac12+\frac{\lambda_{\max}(A^{(M,\ell,0)})}{2M}.
 \label{eq:finite-dqi}
\end{align}
For fixed $\mathbf w$, the expectation is independent of the particular balanced acceptance sets.
\end{proposition}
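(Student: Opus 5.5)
We compute $\bra{P_{\mathbf w}(f)}\mathsf S_f\ket{P_{\mathbf w}(f)}$ by expanding it in the orthonormal layer basis $\ket{P_0(f)},\dots,\ket{P_\ell(f)}$. By \eqref{eq:app-dqi-concrete-score}, $\mathsf S_f=\tfrac12 I+\tfrac1{2M}\mathsf F$, where $\mathsf F\coloneq\sum_x f(x)\ket x\bra x=\sum_{i\in[M]}\mathsf F_i$ and $\mathsf F_i\coloneq\sum_x f_i(B_ix)\ket x\bra x$. Since $\ket{P_{\mathbf w}(f)}$ is a unit vector (the layers being orthonormal), it suffices to prove
\begin{align}
\bra{P_k(f)}\mathsf F\ket{P_j(f)}=A^{(M,\ell,0)}_{k,j},\qquad 0\leq j,k\leq\ell .
\end{align}
Then \eqref{eq:finite-dqi-general} follows by bilinearity, and \eqref{eq:finite-dqi} follows by the Rayleigh quotient over real unit $\mathbf w$. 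The entries contain no acceptance-set data, which gives the final independence claim.

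To compute these matrix elements, conjugate by $\operatorname{QFT}_V$ and work with the syndrome-form layers \eqref{eq:app-layer}. Multiplication by $f_i(B_ix)$ in the computational basis becomes, after $\operatorname{QFT}_V$, a convolution. By Fourier inversion, $f_i(B_ix)=\sum_\eta\widetilde\phi_i(\eta)(-1)^{\langle x,B_i^T\eta\rangle_V}$, so $\operatorname{QFT}_V\mathsf F_i\operatorname{QFT}_V$ maps $\ket{\xi}\mapsto\sum_{\eta\ne0}\widetilde\phi_i(\eta)\ket{\xi+B_i^T\eta}$. Applied to $\ket{\widetilde P_j}$, this produces terms $\widetilde\phi(y)\widetilde\phi_i(\eta)\ket{B^T(y+\eta e_i)}$, where $\eta e_i$ denotes the word with $\eta$ in block $i$. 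The resulting labels $y'=y+\eta e_i$ have block weight $j-1$, $j$, or $j+1$. Pairing with $\bra{\widetilde P_k}$ for $k\leq\ell$, we must match the syndrome $B^Ty'$ with a syndrome $B^Ty''$ of a label of weight $k\le\ell$. Here $\operatorname{wt}_{\rm blk}(y')\leq\ell+1$, so $y'-y''$ has block weight at most $2\ell+1<d_{\rm blk}(C^\perp)$ by \eqref{eq:strict-radius}. Hence $y'=y''$ exactly. This step is why the strict condition $2\ell+1$ rather than $2\ell$ is needed: $\mathsf F$ raises the weight by one beyond the retained layers. Verifying this carefully, including the top layer $j=\ell$, is the main obstacle.

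With syndromes now identified with labels, the three cases can be counted directly. First, take $i\notin\operatorname{supp}_{\rm blk}(y)$, which raises the weight to $j+1$. The coefficient of label $y'$ of weight $j+1$ is $\widetilde\phi(y')$, and it arises from $j+1$ choices of $i$. Pairing with $\ket{\widetilde P_{j+1}}$ gives
\begin{align}
\frac{j+1}{\sqrt{\binom Mj\binom M{j+1}}}\sum_{\operatorname{wt}(y')=j+1}|\widetilde\phi(y')|^2=(j+1)\sqrt{\binom M{j+1}\Big/\binom Mj}=\sqrt{(j+1)(M-j)}.
\end{align}
The last sum equals $\binom M{j+1}$ by Parseval and balance, since $\sum_{\eta\neq0}|\widetilde\phi_i(\eta)|^2=1$. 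Second, take $i\in\operatorname{supp}_{\rm blk}(y)$ with $\eta=y_i$, which lowers the weight. This is the adjoint of the first case, so it gives the same value, consistent with $\mathsf F$ being Hermitian; the coefficients are real since the characters are $\pm1$. Third, take $i\in\operatorname{supp}_{\rm blk}(y)$ with $\eta\ne y_i$, which keeps the weight at $j$. This contributes $\sum_{i}\sum_{\eta\neq y_i}\overline{\widetilde\phi_i(y_i+\eta)}\,\widetilde\phi_i(\eta)\widetilde\phi_i(y_i)\cdots$, and these diagonal terms must be shown to vanish. I would handle this as in Jordan et al.'s balanced case: write the $j\to j$ contribution as $\sum_{y}\sum_{i\in\operatorname{supp}(y)}\overline{\widetilde\phi_i(y_i)}\,(\widetilde\phi_i*\widetilde\phi_i)(y_i)$ times products of $|\widetilde\phi|^2$ over the other blocks. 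Since $f_i^2=1$, we have $\widetilde\phi_i*\widetilde\phi_i=\delta_0$, so this convolution is supported at $0$. The only surviving contribution would therefore be $\eta=y_i$, which lies in the second case. Hence the diagonal vanishes, giving zero diagonal and completing the tridiagonal identification.
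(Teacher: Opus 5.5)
Your proposal is correct and follows the paper's proof essentially step for step. Both arguments conjugate by $\operatorname{QFT}_V$ to get one-block shifts, use the collision argument based on $2\ell+1<d_{\rm blk}(C^\perp)$ to identify syndromes with labels, compute the adjacent entries $\sqrt{(j+1)(M-j)}$ by counting and Parseval, and finish with the Rayleigh quotient.

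The only variation is the vanishing of the diagonal. You use the pointwise identity $\widetilde\phi_i*\widetilde\phi_i=\delta_0$, which follows from $f_i^2=1$; the paper instead uses the triple convolution $|\Sigma|^{-1}\sum_z f_i(z)^3=0$, which follows from balance. Both are valid. Your sentence about $\eta=y_i$ ``surviving'' is muddled, though. Since $y_i\neq0$, nothing survives: $(\widetilde\phi_i*\widetilde\phi_i)(y_i)=0$, and the added terms $\eta=0$ and $\eta=y_i$ vanish individually because $\widetilde\phi_i(0)=0$.
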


\begin{proof}
Let $\widetilde{\mathsf S}_f\coloneq \operatorname{QFT}_V\mathsf S_f\operatorname{QFT}_V^{-1}$.
Fourier inversion and $\langle Bx,y\rangle_{\Sigma^M}=\langle x,B^Ty\rangle_V$ show how one local objective term acts on an error. If $e_i\eta\in\Sigma^M$ denotes the word supported on block $i$ with value $\eta$, then
\begin{align}
\widetilde{\mathsf S}_f\ket{B^Ty'}
&=\frac12\ket{B^Ty'}
 +\frac1{2M}\sum_{i\in[M]}\sum_{\eta\in\Sigma}
 \widetilde\phi_i(\eta)
 \ket{B^T(y'+e_i\eta)}.
\label{eq:app-fourier-shift}
\end{align}
Thus one objective term shifts at most one folded block of the Fourier label.

The block-distance condition ensures that there are no additional syndrome collisions among the labels relevant to this matrix element. For retained $y,y'$,
\begin{align}
    B^Ty=B^T(y'+e_i\eta)
    \quad\Longleftrightarrow\quad
    y=y'+e_i\eta,
    \label{eq:app-folded-collision}
\end{align}
because equality of the syndromes puts $y-y'-e_i\eta$ in $C^\perp$, while its block weight is at most $2\ell+1<d_{\rm blk}(C^\perp)$.

We can now read off which layers may couple. A one-block shift has only three possibilities: it changes a zero block into a nonzero block and increases the weight by one; it changes a nonzero block into zero and decreases the weight by one; or it changes one nonzero block value into another and leaves the weight unchanged. Consequently,
\begin{align}
    \bra{\widetilde P_k}\widetilde{\mathsf S}_f\ket{\widetilde P_j} &=0
    \qquad\text{whenever }|k-j|>1.
    \label{eq:app-nonadjacent-layers}
\end{align}
It remains to compute the adjacent-layer and same-layer matrix elements.
For layers $j$ and $j+1$, a contributing shift inserts a nonzero value into one previously zero block. For each support of size $j$, there are $M-j$ choices of this new block, and Parseval gives $\sum_{\eta\ne0}|\widetilde\phi_i(\eta)|^2=1$ for its total local contribution. Including the layer normalizations yields
\begin{align}
    \bra{\widetilde P_{j+1}}
    \widetilde{\mathsf S}_f
    \ket{\widetilde P_j}
    &=\frac1{2M}
    \frac{\binom Mj(M-j)}
    {\sqrt{\binom Mj\binom M{j+1}}}
    =\frac1{2M}\sqrt{(j+1)(M-j)}.
    \label{eq:app-adjacent-entry}
\end{align}
The reverse entry is the same by symmetry.
On a fixed layer, the constant part of \eqref{eq:app-fourier-shift} contributes $1/2$. Consider a nonconstant contribution that preserves the layer weight. If the ket label has value $\zeta$ in block $i$ and the local objective term introduces the shift $\eta$, then the corresponding bra label has value $\zeta+\eta$. The ket layer, objective term, and bra layer contribute $\widetilde\phi_i(\zeta)$, $\widetilde\phi_i(\eta)$, and $\widetilde\phi_i(\zeta+\eta)$, respectively. Terms involving a zero frequency vanish because $\widetilde\phi_i(0)=0$, so we may sum over all $\zeta,\eta\in\Sigma$. The resulting triple Fourier convolution is
\begin{align}
    \sum_{\zeta,\eta\in\Sigma}
    \widetilde\phi_i(\zeta)
    \widetilde\phi_i(\eta)
    \widetilde\phi_i(\zeta+\eta)
    &=\frac1{|\Sigma|}\sum_{z\in\Sigma}f_i(z)^3.
    \label{eq:app-diagonal-entry}
\end{align}
Since $f_i(z)\in\{\pm1\}$, we have $f_i(z)^3=f_i(z)$, and balance makes the right-hand side zero. Thus the nonconstant same-layer contribution vanishes.

We have now exhausted all cases: layers separated by more than one do not couple, adjacent layers have the matrix element in \eqref{eq:app-adjacent-entry}, and a fixed layer receives only the constant contribution $1/2$. Therefore,
\begin{align}
    \bra{\widetilde P_k}\widetilde{\mathsf S}_f\ket{\widetilde P_j}
    &=\frac12\delta_{k,j}
     +\frac1{2M}A^{(M,\ell,0)}_{k,j}.
    \label{eq:app-layer-score-matrix}
\end{align}
Using \eqref{eq:app-score-expectation} and $\operatorname{QFT}_V\ket{P_{\mathbf w}(f)} =\sum_jw_j\ket{\widetilde P_j}$ gives \eqref{eq:finite-dqi-general}. The Rayleigh principle gives \eqref{eq:finite-dqi}.
\end{proof}

\paragraph{Explicit weights and the semicircle limit.}
Following the asymptotic argument of \cite[Supplementary Lemma~6.3]{JordanEtAl2025DQI}, we use equal amplitudes on a short window of the highest retained layers. Choosing the window width to be a power of two also permits exact preparation of its layer-register superposition.

\begin{lemma}[Explicit asymptotically optimal DQI weights]
\label{lem:dqi-terminal-window}
Let $\Delta$ be the largest power of two not exceeding $\min\{\ell+1,\floor{\sqrt M}\}$, and set
\begin{align}
 w_j^{\rm tw}&\coloneq
 \begin{cases}
 \Delta^{-1/2},&\ell-\Delta+1\leq j\leq\ell,\\
 0,&\text{otherwise}.
 \end{cases}
 \label{eq:terminal-window-vector}
\end{align}
The resulting expected score is
\begin{align}
 \E\bigl[\scor(BX_{\mathbf w^{\rm tw}})\bigr]
 &=\frac12+\frac1{M\Delta}
  \sum_{j=\ell-\Delta+1}^{\ell-1}\sqrt{(j+1)(M-j)},
 \label{eq:terminal-window-score}
\end{align}
with an empty sum equal to zero. If $M\to\infty$ and $\ell/M\to\mu\in[0,1/2]$, this score and the optimum in \eqref{eq:finite-dqi} both converge to
\begin{align}
 \alpha_{\rm sc}(\mu)&=\frac12+\sqrt{\mu(1-\mu)}.
\end{align}
\end{lemma}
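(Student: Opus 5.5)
Two things need to be shown: the exact formula for the terminal-window score, and the convergence of both that score and the optimum $\frac12+\lambda_{\max}(A^{(M,\ell,0)})/(2M)$ to $\alpha_{\rm sc}(\mu)$.

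\textbf{Finite formula.} Insert $\mathbf w^{\rm tw}$ into \eqref{eq:finite-dqi-general} from \cref{prop:dqi}. The matrix $A^{(M,\ell,0)}$ has zero diagonal, so
$\mathbf w^TA^{(M,\ell,0)}\mathbf w=2\sum_j w_jw_{j+1}A_{j,j+1}$. The only nonzero products $w_jw_{j+1}$ occur for consecutive indices in the window. Each such product equals $\Delta^{-1}$, and the admissible indices are $j=\ell-\Delta+1,\ldots,\ell-1$. By \eqref{eq:jacobi} this gives $\frac{2}{\Delta}\sum_{j}\sqrt{(j+1)(M-j)}$. Dividing by $2M$ yields \eqref{eq:terminal-window-score}, and the sum is empty when $\Delta=1$.

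\textbf{Lower bound in the limit.} Assume $\ell/M\to\mu$. The function $g(t)=\sqrt{t(1-t)}$ is increasing on $[0,1/2]$ and uniformly continuous.

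\emph{Case $\mu>0$.} Here $\ell+1\geq\sqrt M$ eventually, so $\Delta\geq\floor{\sqrt M}/2\to\infty$ and $\Delta\leq\sqrt M=o(M)$. For every window index $j$ we have $\sqrt{(j+1)(M-j)}/M=g(j/M)+O(1/M)$. Moreover $j/M=\ell/M+O(\Delta/M)\to\mu$ uniformly over the window. The sum has $\Delta-1$ terms, so the score is $\frac12+\frac{\Delta-1}{\Delta}\,(g(\mu)+o(1))\to\alpha_{\rm sc}(\mu)$.

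\emph{Case $\mu=0$.} The score lies between $\frac12$ and $\frac12+\max_{j\leq\ell}g(j/M)+o(1)=\frac12+o(1)$, so it converges to $\alpha_{\rm sc}(0)=\tfrac12$.

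\textbf{Matching upper bound on the optimum.} Since the terminal window is one admissible unit vector, the optimum is at least the terminal-window score. It remains to show $\lambda_{\max}(A^{(M,\ell,0)})\leq 2M\,g(\ell/M)+O(\sqrt M)$, or at least $\leq 2M g(\mu)+o(M)$. I would use the Gershgorin/row-sum bound. For a nonnegative symmetric tridiagonal matrix, $\lambda_{\max}$ is at most the maximal row sum, which is
\[
\max_{j\le\ell}\bigl(\sqrt{j(M-j+1)}+\sqrt{(j+1)(M-j)}\bigr).
\]
The edge $j=\ell$ contributes only the first term. Because $g$ is increasing on $[0,1/2]$ and $\ell/M\to\mu\leq 1/2$, each term is at most $M g(\ell/M+O(1/M))+O(1)$. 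Hence the row sum is at most $2M g(\mu)+o(M)$ (when $\mu=1/2$, $g$ attains its maximum $1/2$ there, so the bound still holds). Dividing by $2M$ and sandwiching gives convergence of the optimum to $\alpha_{\rm sc}(\mu)$.

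\textbf{Main obstacle.} The delicate part is uniformity in the Riemann-type approximation. We need the width $\Delta$ to satisfy both $\Delta\to\infty$, so that the missing edge term $(\Delta-1)/\Delta$ disappears, and $\Delta=o(M)$, so that the layers stay near $\mu$. The $\floor{\sqrt M}$ cap together with the power-of-two rounding (which loses at most a factor $2$) secures both.
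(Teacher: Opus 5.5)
Your proposal is correct and follows essentially the same route as the paper. You evaluate the quadratic form over the $\Delta-1$ adjacent pairs in the window, use $\Delta\to\infty$ with $\Delta=o(M)$ (plus a trivial bound when $\mu=0$) for the window's limit, and close the sandwich with the maximum row-sum bound on $\lambda_{\max}(A^{(M,\ell,0)})$, treating $\mu=1/2$ through the global maximum of $\sqrt{t(1-t)}$. The only difference is cosmetic: the paper states this row-sum estimate explicitly as $\lambda_{\max}\leq 2\sqrt{\ell(M-\ell+1)}$ for $\mu<1/2$ and $\lambda_{\max}\leq M+1$ for $\mu=1/2$.
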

\begin{proof}
The vector has $\Delta$ equal nonzero entries and is normalized. Its quadratic form is twice the sum of the $\Delta-1$ adjacent entries within the window, divided by $\Delta$, proving \eqref{eq:terminal-window-score}. If $\mu>0$, then $\Delta\to\infty$, $\Delta=o(M)$ and $M^{-1}\sqrt{(j+1)(M-j)}\to\sqrt{\mu(1-\mu)}$ uniformly on the window. The factor $(\Delta-1)/\Delta$ tends to one. If $\mu=0$, each normalized summand is at most $\sqrt{(\ell+1)/M}=o(1)$.

For the matching upper bound, the maximum absolute row-sum bound shows that the largest eigenvalue is at most twice the largest off-diagonal entry. When $\mu<1/2$ and $\ell\geq1$, eventually this gives $\lambda_{\max}\leq2\sqrt{\ell(M-\ell+1)}$; for $\ell=0$ the matrix is zero. When $\mu=1/2$, use $2\sqrt{(j+1)(M-j)}\leq M+1$. The window supplies the matching lower limit in both cases.
\end{proof}

\begin{proof}[Proof of \cref{prop:folded-dqi-from-bdd}]
Under the distance condition, the layer states are orthonormal, and the layer polynomials $P_0,\ldots,P_\ell$ form a basis for the real polynomial filters of degree at most $\ell$. Thus optimizing over these filters is equivalent to optimizing over real unit vectors of layer weights. By \cref{prop:dqi}, the optimal expected score is
\begin{align}
    \frac12+\frac{\lambda_{\max}(A^{(M,\ell,0)})}{2M}.
\end{align}
By \cref{lem:dqi-terminal-window}, this converges to $\frac12+\sqrt{\mu(1-\mu)}$ whenever $\ell/M\to\mu\in[0,1/2]$. Since the finite-size expression depends only on $M$ and $\ell$, the convergence is uniform over the balanced acceptance sets.
\end{proof}

\subsection{Membership queries and the folded-DQI circuit}
\label{app:uniform-dqi-proof}

We now verify that the state analyzed above can be prepared efficiently from membership-oracle access. We use the Fourier transforms, coefficients, and layer states defined in \eqref{eq:app-block-fourier}--\eqref{eq:app-layer}. The distance condition makes $B^T$ injective on all retained labels, while the coherent decoder of \cref{lem:dqi-frs-geometry} allows those labels to be erased from their syndromes. It remains to implement the local predicate phases and assemble the retained layers.

Supplementary Section~11.4 of Jordan et al.~\cite{JordanEtAl2025DQI} implements the corresponding local Fourier gate from phase-oracle access under the normalization $f_i(0)=1$. The following adapts that step to membership-oracle access and requires no assumption on $f_i(0)$.

\begin{lemma}[One-query predicate phase]
\label{lem:membership-phase-oracle}
For each fixed block $i$, one membership query implements $\ket{\kappa,z}\longmapsto f_i(z)^\kappa\ket{\kappa,z}$, with a constant number of additional gates.
\end{lemma}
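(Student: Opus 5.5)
The plan is the standard phase-kickback construction, made controlled by the bit $\kappa$. Since $f_i(z)=2\cdot\mathds 1_{\{z\in A_i\}}-1=(-1)^{1\oplus O_{A_i}(z)}$, the target map is $\ket{\kappa,z}\mapsto(-1)^{\kappa(1\oplus O_{A_i}(z))}\ket{\kappa,z}$. This factors as a product of two diagonal phases: $(-1)^{\kappa}$, which is a single $Z$ gate on the control qubit, and the controlled membership phase $(-1)^{\kappa O_{A_i}(z)}$. We will implement the second factor with exactly one call to $\calO_{A_i}$ from \eqref{eq:general-membership-oracle}.

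For the controlled membership phase, we use one ancilla qubit $\ket{t}$ initialized to $\ket0$.
\begin{enumerate}
\item Conditioned on $\kappa=1$, prepare $\ket{-}=(\ket0-\ket1)/\sqrt2$ on the ancilla. Concretely, apply a CNOT from the $\kappa$ register to the ancilla, followed by a Hadamard on the ancilla. This leaves the ancilla in $\ket{+}$ when $\kappa=0$ and in $\ket{-}$ when $\kappa=1$.
\item Apply $\calO_{A_i}$ once to the pair $(z,t)$.
\item Undo step (i): apply a Hadamard on the ancilla, then the CNOT from $\kappa$.
\end{enumerate}

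The query acts on $\ket{z}\ket{\pm}$ as $\ket{z}\ket{+}\mapsto\ket z\ket +$ and $\ket z\ket -\mapsto(-1)^{O_{A_i}(z)}\ket z\ket -$. Both $\ket{\pm}$ are eigenvectors of the bit flip $t\mapsto t\oplus b$, with eigenvalues $1$ and $(-1)^b$ respectively. So after step (ii) the state is $(-1)^{\kappa O_{A_i}(z)}\ket{\kappa,z}\ket{\pm}$. Step (iii) then returns the ancilla to $\ket0$ for both values of $\kappa$, so the ancilla ends in a product state and carries no leftover entanglement. By linearity the construction extends to arbitrary superpositions of $\kappa$ and $z$, which is exactly the access the oracle interface permits: $i$ is chosen classically and only the symbol and answer registers are in superposition.

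Combining this with the $Z$ gate on $\kappa$ gives $(-1)^{\kappa}(-1)^{\kappa O_{A_i}(z)}=(-1)^{\kappa(1\oplus O_{A_i}(z))}=f_i(z)^\kappa$. The construction uses one query and a constant number of additional gates (two CNOTs, two Hadamards, one $Z$). It needs no normalization such as $f_i(0)=1$, because the global sign is fixed explicitly by the $Z$ gate instead of being absorbed by a convention.

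The only step needing care is controlling the phase on $\kappa$ rather than applying it unconditionally. Uncontrolled kickback would also give the sign $f_i(z)$ on the $\kappa=0$ branch, up to the relative factor $(-1)^{\kappa}$. The $\kappa$-dependent ancilla preparation $\ket{+}$ versus $\ket{-}$ resolves this: $\ket+$ is invariant under the oracle, so the $\kappa=0$ branch acquires no phase. I expect no genuine obstacle beyond checking that this ancilla is correctly disentangled afterwards.
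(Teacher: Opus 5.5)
Your proposal is correct and matches the paper's construction: the same $\kappa$-controlled preparation of the answer qubit in $\ket{+}$ or $\ket{-}$ (CNOT from $\kappa$, then Hadamard), a single query giving the kickback $(-1)^{\kappa O_{A_i}(z)}$, and uncomputation back to $\ket0$. The only cosmetic difference is that the paper obtains the extra factor $(-1)^\kappa$ by flipping the answer qubit (an $X$ acting on $\ket{\chi_\kappa}$, which has eigenvalue $(-1)^\kappa$), whereas you apply a $Z$ to the control; the two are equivalent.
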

\begin{proof}
Prepare an answer qubit in $\ket{\chi_\kappa}=(\ket0+(-1)^\kappa\ket1)/\sqrt2$ using a CNOT from $\kappa$ followed by a Hadamard. Query $\mathcal O_{A_i}$, flip the answer qubit, and undo its preparation. The qubit returns to zero and phase kickback gives $(-1)^{\kappa O_{A_i}(z)}(-1)^\kappa=f_i(z)^\kappa$.
\end{proof}

\begin{lemma}[Implementation]
\label{lem:uniform-dqi}
For the folded Reed--Solomon family above with cutoff $\ell=\ell_M$, and for every fixed $c>0$, the state with weights \eqref{eq:terminal-window-vector} can be prepared to trace-distance error at most $M^{-c}$ using at most $M$ membership queries and polynomially many additional gates. Its measured distribution and expected score differ from the ideal ones by at most $M^{-c}$, uniformly over all balanced acceptance sets.
\end{lemma}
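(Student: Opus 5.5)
We follow the standard DQI circuit of Jordan et al., adapted to folded blocks, and prepare the state in the Fourier picture: first build $\sum_j w^{\rm tw}_j\ket{\widetilde P_j}$, then apply $\operatorname{QFT}_V^{-1}=\operatorname{QFT}_V$. The steps, in order, are as follows.

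\textbf{Step 1 (weight register).} Prepare the layer register $\sum_j w^{\rm tw}_j\ket{j}$. Because the window width $\Delta$ is a power of two and the window is contiguous, this is a uniform superposition over $\Delta$ consecutive integers. It is obtained exactly with Hadamards followed by an offset addition.

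\textbf{Step 2 (Dicke states).} Conditioned on $j$, prepare the Dicke state $\binom Mj^{-1/2}\sum_{|S|=j}\ket{\mathbf 1_S}$ on $M$ indicator qubits $\kappa_0,\ldots,\kappa_{M-1}$. We use the deterministic Bärtschi--Eidenbenz circuit, which needs $\calO(M^2)$ gates and is exact. The only inexactness is in the controlled $R_y$ rotation angles $\arccos\sqrt{\cdot}$, which we synthesize to precision $M^{-c-3}$ per gate.

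\textbf{Step 3 (local Fourier states).} For each block $i$, controlled on $\kappa_i$, map $\ket{0}_\Sigma$ to $\sum_{\eta}\widetilde\phi_i(\eta)\ket\eta$. By \eqref{eq:app-local-fourier-state}, this equals applying $\operatorname{QFT}_\Sigma$ to $|\Sigma|^{-1/2}\sum_z f_i(z)\ket z$. We implement it as follows: apply $H^{\otimes}$ on the $h\log_2 q$ qubits of block $i$ to obtain the uniform state, apply the phase $f_i(z)^{\kappa_i}$ with a single query via \cref{lem:membership-phase-oracle}, then apply $\operatorname{QFT}_\Sigma$ controlled on $\kappa_i$. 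In characteristic two, $\operatorname{QFT}_\Sigma$ under the trace pairing is a Hadamard transform composed with the fixed $\F_2$-linear change of basis given by the trace dual basis, so it costs $\operatorname{poly}(M)$ gates. When $\kappa_i=0$ we need $\ket 0$ rather than the uniform state, so we apply $H^{\otimes}$ controlled on $\kappa_i$ as well. The result is $\ket{0}$ if $\kappa_i=0$ and the local Fourier state if $\kappa_i=1$. Since $\widetilde\phi_i(0)=0$, the label $y_i$ is nonzero exactly when $\kappa_i=1$. This step makes exactly one query per block, hence $M$ queries in total. We then get $\sum_j w_j\ket j\binom Mj^{-1/2}\sum_{\operatorname{wt}_{\rm blk}(y)=j}\widetilde\phi(y)\ket{\mathbf 1_{\operatorname{supp} y}}\ket y$.

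\textbf{Step 4 (uncompute $j$ and $\kappa$).} The indicator register is a function of $y$, namely $\kappa_i=\mathds 1_{\{y_i\neq0\}}$, and $j=\operatorname{wt}_{\rm blk}(y)$. Both can be erased reversibly by computing these functions of $y$ and XORing them out, with no queries.

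\textbf{Step 5 (syndrome and decoding).} Compute $\ket{B^Ty}$ into a fresh register. By \cref{lem:dqi-frs-geometry}, every retained $y$ has block weight at most $\ell_M$, so $\operatorname{Dec}_M(B^Ty)=y$. We erase $y$ by XORing in $\operatorname{Dec}_M$ of the syndrome, using the coherent implementation \eqref{eq:decoder-interface}. This leaves $\sum_j w_j\ket{\widetilde P_j}$ by \eqref{eq:app-layer}. Applying $\operatorname{QFT}_V$ yields $\ket{P_{\mathbf w^{\rm tw}}(f)}$ by \eqref{eq:app-jordan-dqi-superposition}.

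\textbf{Main obstacle.} The delicate part is ensuring that the only approximation error comes from finite gate precision and is uniform in $\mathcal A$. All oracle-dependent steps are exact: Step 3 is an exact unitary for every balanced $\mathcal A$, and the decoder is exact on the support. The error is therefore confined to the Dicke-state and window rotations, which are independent of $\mathcal A$. With $\calO(M^2)$ rotations each accurate to $M^{-c-3}$, the total operator-norm error is at most $M^{-c}$, and this bounds the trace distance. Measurement statistics differ in total variation by at most the trace distance. Because $\scor\in[0,1]$, the expected score also changes by at most $M^{-c}$, uniformly over all balanced acceptance sets.

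We must also check that no query is made on a branch where it would leave garbage. Since the phase is applied controlled on $\kappa_i$, and the answer qubit returns to zero by \cref{lem:membership-phase-oracle}, no entanglement with oracle workspace remains.
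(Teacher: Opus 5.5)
Your proposal is correct and uses essentially the same circuit as the paper: an exact window superposition, weight-$j$ masks, one $\kappa_i$-controlled predicate phase per block (hence $M$ queries), erasure of the mask via the nonzero indicators of $y$, syndrome computation with coherent unique decoding, a final $\operatorname{QFT}_V$, and a telescoping, oracle-independent bound on gate-synthesis error. The differences are cosmetic: you use the B\"artschi--Eidenbenz Dicke circuit and controlled Hadamard/$\operatorname{QFT}_\Sigma$, where the paper uses counter-controlled sequential rotations and an uncontrolled $\operatorname{QFT}_\Sigma$--phase--$\operatorname{QFT}_\Sigma^{-1}$ sandwich, and you erase the layer register after the label preparation rather than before it.
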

\begin{proof}
We follow the folded circuit of \cite[Supplementary Section~11.3]{JordanEtAl2025DQI}.

\emph{Prepare the supports.}
Because $\Delta$ is a power of two, Hadamards followed by a reversible offset prepare exactly
\begin{align}
 \frac1{\sqrt\Delta}\sum_{a=0}^{\Delta-1}
 \ket{\ell-\Delta+1+a}.
\end{align}
Conditioned on $j$, prepare a uniform binary mask of weight $j$. Initialize a separate remaining-ones counter to $j$, leaving the layer register unchanged. For completeness, with $r$ positions and $k$ ones left to place, rotate the next bit to $\sqrt{(r-k)/r}\ket0+\sqrt{k/r}\ket1$ and update the $k$-counter coherently by decrementing it when that bit is one. Every weight-$j$ mask has squared amplitude $j!(M-j)!/M!=\binom Mj^{-1}$. The remaining-ones counter ends at zero. Compute the mask's weight into a temporary counter, subtract it from the layer register and uncompute that counter. The resulting mask state is
\begin{align}
 \sum_{j=0}^{\ell}\frac{w_j^{\rm tw}}{\sqrt{\binom Mj}}
 \sum_{\substack{\kappa\in\{0,1\}^M\\|\kappa|=j}}\ket\kappa.
\end{align}
All counter operations are reversible; choose identity rotations on unreachable counter values to define the circuit everywhere.

\emph{Prepare the errors.}
Initialize one label register to zero per block. Apply a local QFT, the predicate phase controlled by $\kappa_i$, and an inverse QFT. The local result is
\begin{align}
 \ket{\kappa_i}\ket0&\longmapsto
 \begin{cases}
 \ket0\ket0,&\kappa_i=0,\\
 \ket1\displaystyle\sum_{\eta\ne0}\widetilde\phi_i(\eta)\ket\eta,
 &\kappa_i=1.
 \end{cases}
 \label{eq:app-balanced-local-preparation}
\end{align}
Balance removes the zero frequency on the selected branch. Thus $\kappa_i$ is precisely the nonzero indicator of the label $y_i$. Apply the reversible update $\kappa_i\leftarrow\kappa_i\oplus\mathds1_{\{y_i\ne0\}}$, returning any temporary workspace to zero. Since $\kappa_i$ equals this indicator, the update erases the mask bit. This uses one query per block, without requiring $f_i(0)=1$. The remaining state is
\begin{align}
 \sum_{j=0}^{\ell}\frac{w_j^{\rm tw}}{\sqrt{\binom Mj}}
 \sum_{\operatorname{wt}_{\rm blk}(y)=j}\widetilde\phi(y)\ket y.
 \label{eq:app-pre-syndrome-dqi-state}
\end{align}
This preparation before syndrome computation does not require the distance condition: distinct words $y$ are still stored in distinct basis states. It is also used by the stronger algorithm's.

\emph{Decode and transform.}
Compute $B^Ty$ into a second register. Use the syndrome decoder to compute $y$ into temporary workspace, subtract it from the first register, and reverse the decoder:
\begin{align}
 \ket y\ket0&\longmapsto\ket y\ket{B^Ty}
 \longmapsto\ket0\ket{B^Ty}.
\end{align}
Every ideal label has block weight at most $\ell$, so decoding is correct. By \eqref{eq:app-layer}, the remaining register is $\sum_jw_j^{\rm tw}\ket{\widetilde P_j}$. Applying $\operatorname{QFT}_V^{-1}$ gives $\ket{P_{\mathbf w^{\rm tw}}(f)}$.

\emph{Complexity and approximation.}
The binary QFTs are Hadamards composed with the invertible linear maps representing the pairings. Arithmetic, nonzero tests and the decoder have polynomial-size reversible implementations. The support rotations have efficiently computable angles, and each counter has at most $M+1$ relevant values. They therefore decompose into polynomially many elementary controlled rotations. Let $L(M)$ be the total number of synthesized elementary controlled rotations; it is polynomial in $M$. Synthesize each to operator-norm error at most $M^{-c}/L(M)$. A telescoping bound gives total operator-norm error at most $M^{-c}$, and hence the same trace-distance bound on the joint output state. All auxiliary qubits are clean in the ideal circuit. Approximate preparation may create a small component outside the decoding promise, but the decoder is a total reversible map, so the circuit remains well defined on that component. Applying this unitary cannot increase the state-vector error relative to the ideal computation. Discarding workspace and measuring cannot increase trace distance; the resulting distributions differ by at most the same amount in total variation, and therefore so do the expectations of any score in $[0,1]$. These bounds do not depend on the oracle.
\end{proof}

\subsection{Completion of the quantum guarantee}
\label{app:dqi-theorem-proof}

\begin{proof}[Proof of \cref{thm:frs-separation}\textup{(i)}]
Apply the algorithm described above to the folded Reed--Solomon family. Use as a cutoff $\ell_M$ as defined above. The decoder lemma verifies the strict distance condition and recovers every retained label, and both $\log|V|=K\log q$ and $\log|\Sigma|=h\log q$ are polynomial in $M$. Let $\bar\alpha_M$ denote the ideal mean in \eqref{eq:terminal-window-score}; it depends only on the 
parameters. Since $\ell_M/M\to R/2$, the window lemma gives
\begin{align}
 \bar\alpha_M&\rightarrow
 \alpha_{\rm sc}(R/2)
 =\frac{1+\sqrt{R(2-R)}}2=\alpha_{\rm Q}(R).
\end{align}
Use compilation accuracy $M^{-2}$ and interpret the measured coefficient vector as $Q_{\rm DQI}$. The circuit uses at most $M$ queries and polynomially many gates, and
\begin{align}
 \sup_{\substack{\mathcal A=(A_i)\\|A_i|=|\Sigma|/2\ (i\in[M])}}
 \left|\E[\scor(Q_{\rm DQI})]-\alpha_{\rm Q}(R)\right|
 &\leq |\bar\alpha_M-\alpha_{\rm Q}(R)|+M^{-2}=o(1).
\end{align}
This proves the claimed guarantee uniformly over every balanced instance.
\end{proof}

\section{Additional results for the DQI separation}
\label{app:ordinary-dqi-consequences}

This appendix proves \cref{cor:typical-dqi-ratio-separation}. We first show that a sampled instance has a perfect solution with overwhelming probability, and then use this fact to transfer the classical score query lower bound and the Prange benchmark to approximation ratios. The quantum ratio guarantee holds for every promised oracle.

Throughout this appendix, fix a rational rate $R\in(0,1)$ and the balanced folded family. Recall that $|\Sigma|=q^h$. Each promised set $A_i$ contains exactly $\abs{\Sigma}/2$ symbols. For sampled oracles $\mathcal A=(A_i)_{i\in[M]}$, these sets are independent and uniform among subsets of that size.

A perfect solution is a codeword accepted in every block. We show that a sampled instance has such a solution by proving that the number of perfect codewords is close to its expected value with overwhelming probability.

\begin{lemma}[Concentration of perfect solutions]
\label{lem:typical-perfect-satisfiability} 
Let $N_{\rm perf}(\mathcal A)$ denote the number of perfect codewords:
\begin{align}
    N_{\rm perf}(\mathcal A) &\coloneq \abs{\left\{
    c\in C_{\rm FRS}:c_i\in A_i\text{ for every }i\in[M]
    \right\}}.
\end{align}
Define the normalized count as
\begin{align}
    Z_{\rm perf}(\mathcal A)
    &\coloneq
    \frac{N_{\rm perf}(\mathcal A)}{|C_{\rm FRS}|\,2^{-M}}.
    \label{eq:exact-normalized-solution-count}
\end{align}
The denominator is the expected number of perfect codewords (each fixed codeword is accepted independently across the $M$ blocks with probability $1/2$ per block, so it is perfect with probability $2^{-M}$). Thus $Z_{\rm perf}=1$ means that the actual count equals its expectation. For all sufficiently large admissible $M$,
\begin{align}
    \Prb_{\mathcal A}\left[
    |Z_{\rm perf}(\mathcal A)-1|>\frac12
    \right]
    &\leq
    2(\abs{\Sigma}+1)^M
    \exp\left(-\frac{\abs{\Sigma}}{2M\,2^{2M}}\right)
    \leq
    \exp\{-\exp(\Omega_R(M\log M))\}.
    \label{eq:perfect-solution-concentration}
\end{align}
Consequently,
\begin{align}
    \Prb_{\mathcal A}[\operatorname{OPT}=1] &\geq 1-\exp\{-\exp(\Omega_R(M\log M))\}.
    \label{eq:typical-perfect-satisfiability}
\end{align}
\end{lemma}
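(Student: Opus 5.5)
The plan is a block-by-block exposure martingale, i.e.\ a multiplicative concentration argument over the $M$ blocks, using only that each block projection of $C_{\rm FRS}$ is balanced. First I record the structural fact needed. Admissibility gives $K\geq h$, so for every block $i$ the linear map $Q\mapsto c_i(Q)$ from polynomials of degree $<K$ to $\Sigma=\F_q^h$ is surjective, since its $h$ evaluation points are distinct. Hence every symbol $z\in\Sigma$ has exactly $|C_{\rm FRS}|/|\Sigma|$ preimages in $C_{\rm FRS}$.

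Next I expose the acceptance sets one at a time. Let $S_0\coloneq C_{\rm FRS}$ and let $S_{i+1}\coloneq\{c\in S_i: c_i\in A_i\}$, so that $N_{\rm perf}=|S_M|$. Conditional on $A_0,\dots,A_{i-1}$, define weights $w_i(z)\coloneq|\{c\in S_i: c_i=z\}|$. These satisfy $\sum_z w_i(z)=|S_i|$ and, by the first step, $0\leq w_i(z)\leq|C_{\rm FRS}|/|\Sigma|$. Then $|S_{i+1}|=\sum_{z\in A_i}w_i(z)$ is a sum over a uniformly random half-size subset of $\Sigma$, i.e.\ sampling without replacement, and its mean is $|S_i|/2$. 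Hoeffding's inequality for sampling without replacement, which is at least as strong as the with-replacement bound, gives
\begin{align}
\Prb\Bigl[\bigl||S_{i+1}|-|S_i|/2\bigr|>t\Bigr]\leq 2\exp\left(-\frac{4t^2}{|\Sigma|\,(|C_{\rm FRS}|/|\Sigma|)^2}\right).
\end{align}

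Then I chain the steps. Let $G_i$ be the event $|S_i|\geq\tfrac12|C_{\rm FRS}|2^{-i}$. On $G_i$, I take $t=\delta|S_i|/2$ with $\delta\coloneq 1/(4M)$. This makes the exponent at least of order $\delta^2|\Sigma|2^{-2i}/16\gtrsim|\Sigma|/(M^2 2^{2M})$, so the failure probability of step $i$ is $2\exp(-\Omega(|\Sigma|/(M^22^{2M})))$. If all $M$ steps succeed, the ratio $|S_M|/(|C_{\rm FRS}|2^{-M})$ lies in $[(1-\delta)^M,(1+\delta)^M]\subseteq[3/4,e^{1/4}]\subset[1/2,3/2]$. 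This also keeps every $G_i$ valid inductively, so the good events propagate. A union bound over the $M$ steps yields a bound of the shape $2M\exp(-|\Sigma|/(cM^22^{2M}))$. This is at least as strong as the stated $2(|\Sigma|+1)^M\exp(-|\Sigma|/(2M2^{2M}))$ up to adjusting constants; if the exact displayed constant is required, I would tune $\delta$ and the Hoeffding variant rather than change the argument. Since $|\Sigma|=q^h=\exp(\Theta(M\log M))$ overwhelms $2^{2M}M^2$, the bound is $\exp\{-\exp(\Omega_R(M\log M))\}$. Finally, on the event $|Z_{\rm perf}-1|\leq 1/2$ we have $N_{\rm perf}\geq\tfrac12|C_{\rm FRS}|2^{-M}>0$, so $\operatorname{OPT}=1$, which gives \eqref{eq:typical-perfect-satisfiability}.

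The main obstacle is controlling the worst-case weight $\max_z w_i(z)$ uniformly along the adaptive exposure. The surjectivity and fiber-size count settle this: the weights only shrink as $S_i$ shrinks, so the crude bound $|C_{\rm FRS}|/|\Sigma|$ holds at every step. The resulting loss of $2^{-2i}$ relative to the shrinking mean is harmless precisely because $|\Sigma|$ is super-exponential in $M$.
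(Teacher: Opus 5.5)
Your argument is correct, and it takes a genuinely different route from the paper. The paper does not expose blocks sequentially. It passes to an auxiliary model in which all $M|\Sigma|$ membership bits are independent fair bits. It then applies McDiarmid's inequality once to $Z_{\rm perf}$, with bit-level differences $2^M/|\Sigma|$; these come from the same fiber count $|C_{\rm FRS}|/|\Sigma|$ that you use. Finally, it returns to the balanced distribution by conditioning on exact balance, which has auxiliary probability at least $(|\Sigma|+1)^{-M}$. That change of measure is the source of the prefactor $2(|\Sigma|+1)^M$ in the displayed middle bound.

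You instead stay inside the balanced model and use the product structure of $N_{\rm perf}$. Each block filters the surviving set through a fresh uniform half-size subset, so Hoeffding's inequality for sampling without replacement applies conditionally at every step. No change of measure is needed, and the prefactor is only $2M$. What the paper's route buys is generality: the same McDiarmid-plus-conditioning template is reused for the non-product quantities $Z$ and $N$ in the proof of \cref{prop:coherent-fiber-fixed-radius}, where your multiplicative filtering has no direct analogue.

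One claim needs correcting. Write $x\coloneqq|\Sigma|2^{-2M}=\exp(\Theta(M\log M))$. With the uniform choice $\delta=1/(4M)$, your bound is $2M\exp(-x/(16M^2))$. This does \emph{not} imply the middle inequality $2(|\Sigma|+1)^M\exp(-x/(2M))$, even up to constants. The exponents differ by about $x/(2M)$, which swamps $\log\bigl((|\Sigma|+1)^M\bigr)=\calO(M^2\log M)$. No uniform $\delta$ can fix this: matching the last step needs $\delta\gtrsim M^{-1/2}$, while keeping $(1\pm\delta)^M$ bounded needs $\delta=\calO(1/M)$.

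A step-dependent schedule does work within your framework. Take $\delta_i\coloneqq 2^{-(M-i)}/4$. Then $\sum_i\delta_i<1/4$, so $Z_{\rm perf}$ stays in $[3/4,e^{1/4}]$ and the bound $|S_i|\geq\frac34|C_{\rm FRS}|2^{-i}$ propagates inductively. The step-$i$ Hoeffding exponent is then $\delta_i^2|S_i|^2|\Sigma|/|C_{\rm FRS}|^2\geq\frac{9}{256}x$, uniformly in $i$. The union bound gives $2M\exp(-\frac{9}{256}x)$. This is at most the displayed middle bound once $M\geq 15$, and is in fact sharper by a factor $\Theta(M)$ in the exponent, so the full statement follows.
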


\begin{proof}
We show that $Z_{\rm perf}$ concentrates around its mean of one. To apply a concentration inequality for independent variables, first consider an auxiliary oracle in which all $M\abs{\Sigma}$ membership bits are independent fair bits. The same expectation calculation gives $\E_{\rm aux}[Z_{\rm perf}]=1$.

Let $E_{\rm{bal}}$ be the event that each block accepts exactly $\abs{\Sigma}/2$ symbols. Conditioning the auxiliary oracle on $E_{\rm{bal}}$ gives precisely our promised sampled-oracle distribution.

This conditioning has a controlled cost. In one block, the number of accepted symbols has distribution $\operatorname{Bin}(\abs{\Sigma},1/2)$. Its most likely value is $\abs{\Sigma}/2$, and there are $\abs{\Sigma}+1$ possible values. Thus the probability of exact balance in one block is at least $1/(\abs{\Sigma}+1)$. Independence across blocks gives
\begin{align}
    \Prb_{\rm aux}[E_{\rm{bal}}]
    &\geq(\abs{\Sigma}+1)^{-M}.
    \label{eq:fixed-density-conditioning-probability}
\end{align}
For any bad event $E$, we therefore have
\begin{align}
    \Prb_{\mathcal A}[E]
    &=
    \Prb_{\rm aux}[E\mid E_{\rm{bal}}]
    \leq
    (\abs{\Sigma}+1)^M\Prb_{\rm aux}[E].
\end{align}

It remains to bound how much one membership bit can change the count. Changing the answer at $(i,z)$ can affect only codewords whose $i$th block equals $z$. Because $K\geq h$, interpolation allows any prescribed value in one block. The block projection is therefore a surjective linear map, so every symbol has exactly $|C_{\rm FRS}|/\abs{\Sigma}$ preimages. Consequently, changing one membership bit changes $N_{\rm perf}$ by at most $|C_{\rm FRS}|/\abs{\Sigma}$, and hence changes $Z_{\rm perf}$ by at most $2^M/\abs{\Sigma}$.

There are $M\abs{\Sigma}$ independent membership bits in the auxiliary model. The sum of their squared change bounds is
\begin{align}
    M\abs{\Sigma}\left(\frac{2^M}{\abs{\Sigma}}\right)^2
    &=\frac{M\,2^{2M}}{\abs{\Sigma}}.
\end{align}
McDiarmid's inequality, followed by the conditioning bound above, therefore gives, for every $\varepsilon>0$,
\begin{align}
    \Prb_{\mathcal A}\left[
    |Z_{\rm perf}(\mathcal A)-1|>\varepsilon
    \right]
    &\leq
    2(\abs{\Sigma}+1)^M
    \exp\left(
    -\frac{2\varepsilon^2\abs{\Sigma}}{M\,2^{2M}}
    \right).
    \label{eq:exact-normalization-mcdiarmid}
\end{align}
Any fixed tolerance $\varepsilon\in(0,1)$ would suffice to prove existence: on the event $|Z_{\rm perf}-1|\leq\varepsilon$, we have $Z_{\rm perf}\geq1-\varepsilon>0$. We take $\varepsilon=1/2$ for convenience. Since $\abs{\Sigma}=(M+1)^{2(M+2)}=\exp(\Theta(M\log M))$, the exponential decay dominates the conditioning factor, giving \eqref{eq:perfect-solution-concentration}. Outside this exceptional event,
\begin{align}
    N_{\rm perf}
    &\geq \frac{|C_{\rm FRS}|}{2^{M+1}}>0.
\end{align}
Thus a perfect codeword exists and $\operatorname{OPT}=1$.
\end{proof}

Perfect satisfiability is a conclusion about typical sampled oracles, not part of the balance promise; none of our every-oracle guarantees assumes that $\operatorname{OPT}=1$. Equipped with this, we can prove \cref{cor:typical-dqi-ratio-separation}.

\begin{proof}[Proof of \cref{cor:typical-dqi-ratio-separation}]
First fix any promised oracle $\mathcal A$; probability in the quantum argument is over the algorithm's internal randomness and measurement outcomes. For $\alpha_{\rm C}(R)<r<\alpha_{\rm Q}(R)$, applying \eqref{eq:expectation-to-search-concrete} with $\beta=r$ gives a positive constant single-run probability of score at least $r$, uniformly in $\mathcal A$. Exact verification and a constant number of independent runs raise this probability to at least $2/3$ using $\calO_{R,r}(M)$ queries. Every verified output of score at least $r$ satisfies
\begin{align}
    \operatorname{ratio}(Q) \geq \scor(Q) \geq r.
\end{align}
For the classical bound, probability is over the sampled oracle and the algorithm's independent random tape $\omega$. Put $\eta\coloneq r-\alpha_{\rm C}(R)>0$. Since $r<1$, this lies in the range of \cref{thm:frs-separation}\textup{(iii)}. On $\mathcal E_M\coloneq\{\operatorname{OPT}=1\}$, score and ratio agree. Hence
\begin{align}
    \Prb\left[
    \operatorname{ratio}(A(\omega))\geq r
    \right]
    &\leq
    \Prb[\mathcal E_M^c]
    +\Prb\left[\scor(A(\omega))\geq r\right].
\end{align}
The first term is doubly exponentially small by \cref{lem:typical-perfect-satisfiability}; the second is exponentially small by the classical theorem. Their sum remains exponentially small after reducing the positive constant in the exponent.

Finally, under the sampled-oracle distribution and Prange's independent randomness, its output score converges in probability to $\alpha_{\rm C}(R)$. Its ratio equals its score on $\mathcal E_M$, whose probability tends to one. This proves the Prange ratio convergence as well.
\end{proof}

\begin{remark}[A ratio separation without typical satisfiability]
\label{rem:ratio-without-typical-optimum}
We can also obtain a ratio separation without proving typical perfect satisfiability, provided
\begin{align}
\frac{\alpha_{\rm C}(R)}{\alpha_{\rm Q}(R)}
&<r<\alpha_{\rm Q}(R).
\label{eq:ratio-without-typical-window}
\end{align}
The DQI expected-score guarantee implies $\operatorname{OPT}\geq\alpha_{\rm Q}(R)-o(1)$ uniformly over all promised oracles: the optimum is at least the algorithm's expected score. Thus any output with ratio at least $r$ has
\begin{align}
\scor(Q)
&\geq r\operatorname{OPT}
\geq r\alpha_{\rm Q}(R)-o(1).
\end{align}
Since $r\alpha_{\rm Q}(R)>\alpha_{\rm C}(R)$, this exceeds $\alpha_{\rm C}(R)+\eta$ for all sufficiently large $M$, where $\eta\coloneq (r\alpha_{\rm Q}(R)-\alpha_{\rm C}(R))/2>0$. The classical score query lower bound therefore applies.
On the quantum side, $r<\alpha_{\rm Q}(R)$ allows verified DQI repetition to achieve score at least $r$ with bounded error. Since $\operatorname{OPT}\leq1$, this also gives approximation ratio at least $r$.
This smaller interval is nonempty exactly when $\alpha_{\rm Q}(R)^2>\alpha_{\rm C}(R)$. Typical perfect satisfiability gives the wider interval $\alpha_{\rm C}(R)<r<\alpha_{\rm Q}(R)$ for every fixed $R\in(0,1)$, as stated in \cref{cor:typical-dqi-ratio-separation}.
\end{remark}

\section{Proof of the stronger quantum guarantees on typical oracles}
\label{sec:stronger-sampled-algorithms}

This appendix proves \cref{thm:sampled-score-hierarchy} by adapting Jo's coherent summation over complete syndrome fibers~\cite{Jo2026} to our balanced folded-OPI family. We combine this construction with complete list decoding of the folded dual and establish the required normalization and score bounds. We first give the shared Fourier and decoding interfaces, then prove the stronger score thresholds and exact-search guarantee. The final subsection compares these constructions with the Yamakawa--Zhandry random-oracle model.

Throughout, fix a rational rate $R\in(0,1)$ and the balanced folded family. Retain the block alphabet $\Sigma$, coefficient space $V$, and encoder $B:V\to\Sigma^M$ from \cref{app:dqi}. The acceptance sets $\mathcal A=(A_i)_{i\in[M]}$ are sampled independently and uniformly among the $\abs{\Sigma}/2$-element subsets of $\Sigma$, where $|\Sigma|=q^h$. Write $f_i(z)=2\cdot \mathds{1}_{\{z\in A_i\}}-1$ for the local sign predicates, as in \cref{app:dqi}.

We first develop the quantum primitives used by the two constructions, then prove the stronger score thresholds and the exact search guarantee.

\subsection{Quantum primitives for the stronger constructions}
\label{sec:shared-fiber-interfaces}

We use the standard additive Fourier transforms, with syndrome map $B^T$. Recall that $\abs{\Sigma}=(M+1)^{2(M+2)}$. We use the Fourier transforms from
\cref{eq:app-block-fourier,eq:app-message-fourier} and write their phases explicitly as $(-1)^{\langle x,\sigma\rangle_V}$ for $\sigma\in V$. For density operators $\varpi$ and $\varsigma$, write
\begin{align}
 d_{\rm tr}(\varpi,\varsigma)
 &\coloneq \frac12\norm{\varpi-\varsigma}_1.
 \label{eq:exact-trace-distance-definition}
\end{align}

For exact search, we initialize each block register in the uniform superposition of all symbols accepted by that block. The following lemma shows how to prepare this state using membership queries.

\begin{lemma}[Uniform accepted-symbol state]
\label{lem:accepted-symbol-preparation}
For every fixed block $i\in[M]$, membership access to a balanced set $A_i\subseteq\Sigma$ permits the clean preparation
\begin{align}
     \ket0 \longmapsto \frac1{\sqrt{\abs{A_i}}}\sum_{z\in A_i}\ket z
     \label{eq:app-local-prepared-states}
\end{align}
with all auxiliary registers returned to $\ket0$. The preparation uses three queries to $\mathcal O_{A_i}$ and polynomially many additional gates in $\log \abs{\Sigma}$. It can be placed under an additional coherent control with only a constant-factor query overhead. Throughout, the block index $i$ remains classically fixed.
\end{lemma}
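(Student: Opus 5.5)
The plan is to exploit exact balance: $|A_i|=|\Sigma|/2$, and $|\Sigma|=q^h$ is a power of two. The target state is then an equal-weight superposition over exactly half of a power-of-two-sized alphabet. I will use exact amplitude amplification with a single Grover iteration, which takes success probability $1/2$ to $1$ deterministically.

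The procedure has four steps.

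First, prepare $\ket{u}\coloneq|\Sigma|^{-1/2}\sum_{z\in\Sigma}\ket z$ with Hadamards on the $h\log_2 q$ qubits of the block register. Write $\ket u=\sin\theta\ket{g}+\cos\theta\ket{b}$, where $\ket g$ is the normalized uniform state on $A_i$ and $\ket b$ the uniform state on $\Sigma\setminus A_i$. Balance gives $\sin^2\theta=1/2$, so $\theta=\pi/4$.

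Second, a standard Grover iteration $G=-S_u S_{A_i}$ rotates by $2\theta$ and lands at angle $3\pi/4$, which overshoots. I therefore use the exact (phase-adjusted) amplitude-amplification variant. Replace the reflections by generalized phase rotations $S_{A_i}(\varphi)=I-(1-e^{i\varphi})\Pi_{A_i}$ and $S_u(\phi)=I-(1-e^{i\phi})\ket u\bra u$. For success probability exactly $1/2$, the choice $\varphi=\phi=\pi/2$ gives $-S_u(\phi)S_{A_i}(\varphi)\ket u=\ket g$ up to a global phase. I will verify this in the two-dimensional span of $\ket g$ and $\ket b$ by a $2\times2$ computation, which is routine.

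Third, implement $S_{A_i}(\pi/2)$ from the bit oracle. Query $\mathcal O_{A_i}$ into a clean answer qubit, apply the phase gate $\mathrm{diag}(1,i)$ to that qubit, and query again to uncompute. This costs two queries and returns the answer qubit to $\ket0$. The reflection $S_u(\pi/2)$ about the uniform state is oracle-free: conjugate a phase on $\ket{0\cdots0}$ by Hadamards, which takes polynomially many gates in $\log|\Sigma|$.

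This gives two queries. The stated count of three leaves slack, for example if one instead implements the phase rotation via the one-query kickback of \cref{lem:membership-phase-oracle} plus a separate compute/uncompute. I would match whichever gadget is cleanest and only need \emph{at most} three. The main point to check carefully is that every ancilla is returned exactly to $\ket0$, so the preparation is clean and can later be uncomputed.

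Fourth, for the coherent-control claim, controlled versions of Hadamards and phase gates are elementary. The oracle itself need not be controlled: each compute/uncompute query pair cancels when the intermediate phase gate is controlled, so only the inserted phase gate is made controlled. This yields a controlled preparation with the same query count. The block index $i$ stays classical throughout, consistent with the interface of \cref{sec:balanced-oracle}.

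The main obstacle I anticipate is getting the exact-amplification phases right, so that the output is exactly, not approximately, $\ket g$. If the $\pi/2$ choice fails, I would fall back to the alternative three-query construction. That construction adds an ancilla qubit rotated so the marked probability becomes $1/4$, where a single standard Grover step is exact ($\theta=\pi/6$, $3\theta=\pi/2$). The ancilla is then uncomputed using the known final state, with one extra query to clean up the marked-bit register.
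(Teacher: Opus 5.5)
Your construction is correct, and it takes a genuinely different (and slightly more economical) route than the paper's. The $2\times2$ check you deferred goes through, so no fallback is needed. With $\ket u=(\ket g+\ket b)/\sqrt2$ and $\varphi=\phi=\pi/2$, one has $S_{A_i}(\pi/2)\ket u=(i\ket g+\ket b)/\sqrt2$, whose overlap with $\ket u$ is $(1+i)/2$. Hence $S_u(\pi/2)$ subtracts $\tfrac{(1-i)(1+i)}{2}\ket u=\ket u$, which cancels the $\ket b$ component and leaves $\frac{i-1}{\sqrt2}\ket g$. Including your leading minus sign, the output is $e^{-i\pi/4}\ket g$.

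The paper instead uses only $\pm1$ reflections. Its $U_i$ writes the membership bit into a persistent register and adds a coin qubit in $\ket+$. The good subspace, where both flags equal one, then has probability exactly $1/4$, so one standard Grover iteration is exact. The reflection about $U_i\ket0$ needs both $U_i^{-1}$ and $U_i$, which is where the three queries come from. The two flags end deterministically in $\ket1$ and are reset by $X$ gates, so no cleanup query is needed, contrary to what your fallback sketch suggests.

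Your phase-adjusted amplification avoids the coin qubit and uses the oracle only through the compute--$\mathrm{diag}(1,i)$--uncompute gadget. It also makes the reflection about $\ket u$ oracle-free. You therefore get two queries, within the stated three, and a controlled version with no query overhead. The paper, by contrast, pays two queries per controlled oracle call. The one detail to add for the controlled version is that the known global phase $e^{-i\pi/4}$ becomes a relative phase between the control branches. It must be cancelled by a single phase gate on the control qubit.
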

\begin{proof}
Fix a block $i$. Use a symbol register $\mathrm Z$, a membership qubit $\mathrm B$, and a coin qubit $\mathrm C$. Let $U_i$ prepare a uniform symbol, compute its membership bit, and prepare the coin in $\ket+$. Using one membership query, this gives
\begin{align}
    \ket{\psi}
    &\coloneq
    U_i\ket0_{\mathrm Z}\ket0_{\mathrm B}\ket0_{\mathrm C}
    =
    \frac1{\sqrt{2\abs{\Sigma}}}
    \sum_{z\in\Sigma}
    \ket z_{\mathrm Z}\ket{O_{A_i}(z)}_{\mathrm B}
    \bigl(\ket0_{\mathrm C}+\ket1_{\mathrm C}\bigr).
\end{align}

Call a basis state good when both auxiliary qubits contain one. Since $A_i$ contains exactly half the symbols, the good component has squared norm $1/4$. Its normalized state is
\begin{align}
    \ket G \coloneq \frac1{\sqrt{|A_i|}} \sum_{z\in A_i} \ket z_{\mathrm Z}\ket1_{\mathrm B}\ket1_{\mathrm C}.
\end{align}
The projector onto this good subspace is
\begin{align}
\Pi_{\rm good}
&\coloneq
I_{\mathrm Z}\otimes
(\ket1\bra1)_{\mathrm B}\otimes
(\ket1\bra1)_{\mathrm C},
\end{align}
so $\Pi_{\rm good}\ket{\psi}=\frac12\ket G$. The extra coin qubit makes the initial good probability exactly $1/4$, allowing one Grover iteration to amplify it to one.

First apply $S_{\rm good}\coloneq I-2\Pi_{\rm good}$. This flips the sign precisely when both auxiliary qubits contain one, so it is a controlled-$Z$ gate on $\mathrm B$ and $\mathrm C$ and requires no oracle query. Then reflect about $\ket{\psi}$ using
\begin{align}
    R_\psi &\coloneq 2\ket{\psi}\bra{\psi}-I = U_i\left( 2(\ket0\bra0)_{\mathrm Z} \otimes(\ket0\bra0)_{\mathrm B} \otimes(\ket0\bra0)_{\mathrm C} -I \right)U_i^{-1}.
\end{align}
Because the good component has squared norm $1/4$,
\begin{align}
    \bra{\psi}S_{\rm good}\ket{\psi} =1-2\cdot\frac14=\frac12.
\end{align}
Consequently, these two reflections give
\begin{align}
    R_\psi S_{\rm good}\ket{\psi} &=\ket{\psi}-S_{\rm good}\ket{\psi} =2\Pi_{\rm good}\ket{\psi} =\ket G.
\end{align}
This is one exact amplitude-amplification iteration~\cite{BrassardHoyerMoscaTapp2002}; no measurement or postselection is required. In $\ket G$, both auxiliary qubits are fixed at one. Applying an $X$ gate to each therefore gives
\begin{align}
    \ket G
    &\longmapsto
    \frac1{\sqrt{|A_i|}}
    \sum_{z\in A_i}
    \ket z_{\mathrm Z}\ket0_{\mathrm B}\ket0_{\mathrm C}.
\end{align} 
The initial application of $U_i$ uses one membership query, and the reflection $R_\psi$ uses one query in each of $U_i^{-1}$ and $U_i$. The total is therefore three queries. All remaining gates have complexity polynomial in $\log \abs{\Sigma}$.

To place the preparation under an additional coherent control, implement each controlled oracle call by computing the membership bit into a clean auxiliary qubit, using it conditionally, and uncomputing it. This uses two membership queries per controlled call; the remaining gates can be controlled directly. The block index $i$ remains fixed throughout. Applying the preparation separately to all $M$ blocks therefore uses $\calO(M)$ queries.
\end{proof}

DQI uses unique decoding to erase an error $y$ once its syndrome $B^Ty$ identifies it uniquely. Beyond the unique-decoding radius, several retained labels may have the same syndrome. The next lemma replaces unique erasure by coherent aggregation. A complete list decoder enumerates the retained labels in each syndrome fiber. Coherently recording the position of the input label within this list allows the label to be erased reversibly, leaving its syndrome and list index. Applying Hadamards to the index register and accepting the all-zero outcome makes the amplitudes of all labels in the same fiber add on the same syndrome state. The polynomial fiber-size bound keeps this procedure efficient, while the success flag rejects labels outside the enumerated set.

\begin{lemma}[Coherent syndrome-fiber summation]
\label{lem:coherent-fiber-primitive}
For every admissible $M$, let $\mathcal E\subseteq\Sigma^M$ be a set. Suppose there is a single deterministic algorithm that, for every such $M$ and every syndrome $\sigma\in V$, lists exactly
\begin{align}
    \mathcal L_\sigma^{\mathcal E} &\coloneq \{y\in\mathcal E:B^Ty=\sigma\},
    \label{eq:generic-complete-fib}
\end{align}
in canonical order and in time polynomial in $M$, uniformly over $\sigma$. Suppose also that every list has size at most a bound $L_M$, which is computable in time polynomial in $M$ and satisfies $1\leq L_M\leq\poly(M)$. Let $L_{\rm pad}\coloneq 2^{\ceil{\log_2L_M}}$. This is the smallest power of two at least $L_M$, so every padded list can be indexed by $\ceil{\log_2L_M}$ qubits.
When $\mathcal E$ consists of the words with block weights in $[j_-,j_+]$, the list $\mathcal L_\sigma^{\mathcal E}$ is precisely $\mathcal L_{\sigma,[j_-,j_+]}$ from \eqref{eq:frs-complete-fib}.

Then there is a polynomial-size, oracle-free circuit with a measured success flag that performs coherent summation over the syndrome fibers. On any normalized input $\sum_{y\in\Sigma^M}a_y\ket y$, its successful branch has the unnormalized state
\begin{align}
 \frac1{\sqrt{L_{\rm pad}}}
 \sum_{\sigma\in V}
 \left(\sum_{y\in\mathcal L_\sigma^{\mathcal E}}a_y\right)\ket\sigma.
 \label{eq:generic-fiber-success-state}
\end{align}
The success flag is observed with probability
\begin{align}
 p_{\rm succ}
 &=\frac1{L_{\rm pad}}
 \sum_{\sigma\in V}
 \left|
 \sum_{y\in\mathcal L_\sigma^{\mathcal E}}a_y
 \right|^2.
 \label{eq:generic-fiber-success-probability}
\end{align}
The input need not be supported on $\mathcal E$. In particular, labels outside $\mathcal E$ are rejected by the measured flag rather than silently removed or renormalized.
\end{lemma}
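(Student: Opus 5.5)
The plan is to build the circuit as a composition of four reversible stages acting on a label register $\ket y$, a syndrome register, an index register of $\ceil{\log_2 L_M}$ qubits, and a flag qubit, and then read off the successful branch by direct computation.

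\emph{Stage 1 (syndrome and membership).} Compute $\sigma=B^Ty$ into a fresh register by a reversible linear map. Using the assumed deterministic enumerator, compute, into workspace, the canonical list $\mathcal L_\sigma^{\mathcal E}$. Then compute a flag $b=\mathds 1_{\{y\in\mathcal L_\sigma^{\mathcal E}\}}$ and, when $b=1$, the position $k(y)\in[L_M]$ of $y$ in this list; when $b=0$, set the index to zero. Uncompute the list workspace by rerunning the enumerator in reverse. All of this is polynomial-size because the enumerator runs in polynomial time and the list has at most $L_M\le\poly(M)$ entries.

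\emph{Stage 2 (erasing the label).} On the branch $b=1$, the pair $(\sigma,k)$ determines $y$ as the $k$th element of $\mathcal L_\sigma^{\mathcal E}$. So we recompute the list from $\sigma$, extract its $k$th entry, XOR it into the label register, which becomes $\ket 0$, and uncompute the list. This step is controlled on $b$. On the branch $b=1$ the state is now $\ket0\ket\sigma\ket{k(y)}\ket1$, and the branch $b=0$ keeps $y$.

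\emph{Stage 3 (summation and postselection).} Apply $H^{\otimes\ceil{\log_2L_M}}$ to the index register and measure the flag together with the index register. Success is declared on the outcome $b=1$ with index all-zero. The all-zero projection of $H^{\otimes}\ket{k}$ has amplitude $1/\sqrt{L_{\rm pad}}$ for every $k$. Hence the successful branch equals
\[
\frac{1}{\sqrt{L_{\rm pad}}}\sum_{\sigma}\Bigl(\sum_{y\in\mathcal L_\sigma^{\mathcal E}}a_y\Bigr)\ket0\ket\sigma .
\]
This holds because distinct $y$ in the same fiber have distinct indices but the same $\sigma$, so their amplitudes collide on the same basis state. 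Labels outside $\mathcal E$, or with $y\notin\mathcal L_{B^Ty}^{\mathcal E}$ (which is the same thing), sit on $b=0$ and are rejected. Discarding the now-clean label register gives the state claimed in the statement, and its squared norm gives $p_{\rm succ}$.

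\emph{Main obstacle.} The only delicate point is reversibility and cleanliness of Stages 1–2. We must check that the flag/index computation is an honest reversible function of $(y,\sigma)$, and that the erasure is correct exactly on $b=1$. We handle this by treating each stage as compute–copy–uncompute of a total deterministic function. Out-of-range index values (for $k\ge L_M$) never occur on $b=1$ and are left untouched. Canonical ordering ensures Stage 2 recovers the same $k$ as Stage 1.
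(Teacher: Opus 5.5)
Your construction is correct and follows the paper's proof essentially step for step. Both compute the syndrome, enumerate the fiber, record the slot index together with a match flag (set to $(0,0)$ on a miss), erase the label by subtracting the indexed list entry, then apply Hadamards to the index register and accept on the all-zero index with match flag one; the only difference is that you uncompute and recompute the list between indexing and erasure, whereas the paper keeps it in workspace until after the erasure.
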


\begin{proof}
Starting from $\sum_y a_y\ket y$, compute the syndrome $\sigma=B^Ty$ coherently into a fresh register:
\begin{align}
     \sum_{y\in\Sigma^M}a_y\ket y\ket0
     &\longmapsto
     \sum_{y\in\Sigma^M}a_y\ket y\ket{B^Ty}.
     \label{eq:generic-fiber-compute-syndrome}
\end{align}
This leaves the original label register unchanged. From $\sigma$, reversibly compute the complete list in canonical order and pad it to $L_{\rm pad}$ slots, marking the added slots as invalid. Coherently compare the original label register with the valid list entries. If $y$ occurs in slot $j$, record $(\mathrm{Idx},b_{\rm match})=(j,1)$; otherwise record $(0,0)$. The match flag distinguishes the latter case from a genuine match in slot zero. Uncompute the comparison workspace while retaining $\mathrm{Idx}$ and $b_{\rm match}$.

When $b_{\rm match}=1$, subtract the list entry in slot $j$ from the label register, and then uncompute the list and its workspace. For the $j$th label $y_{\sigma,j}$ in the fiber this realizes
\begin{align}
     \ket{y_{\sigma,j}}\ket\sigma
     \ket0_{\mathrm{Idx}}\ket0_{b_{\rm match}}
     &\longmapsto
     \ket0\ket\sigma
     \ket j_{\mathrm{Idx}}\ket1_{b_{\rm match}}.
     \label{eq:generic-fiber-indexing}
\end{align}
The index retains which list element was erased, so the map remains reversible. For a nonmatching input, $b_{\rm match}=0$ and the label register is left unchanged. Such inputs are rejected by the final success measurement.

At this point, the matched part of the state is
\begin{align}
    \sum_{\sigma\in V}
    \sum_{j\in[|\mathcal L_\sigma^{\mathcal E}|]}
    a_{y_{\sigma,j}}
    \ket\sigma\ket j_{\mathrm{Idx}}\ket1_{b_{\rm match}}.
\end{align}
We now erase the list index in a way that makes the amplitudes within each syndrome fiber add coherently. Let $k=\log_2L_{\rm pad}$. Applying $H^{\otimes k}$ and accepting the all-zero index outcome projects the index register onto its uniform superposition, since
\begin{align}
    \langle0^k|H^{\otimes k}|j\rangle &=\frac1{\sqrt{L_{\rm pad}}} \qquad\text{for every }j\in[L_{\rm pad}].
\end{align}
Accepting only when the match flag is also one therefore leaves the unnormalized syndrome-register state
\begin{align}
    \frac1{\sqrt{L_{\rm pad}}} \sum_{\sigma\in V} \left(\sum_{y\in\mathcal L_\sigma^{\mathcal E}}a_y \right)\ket\sigma.
\end{align}
Thus the amplitudes of all retained labels with the same syndrome have been added coherently, proving \eqref{eq:generic-fiber-success-state}. Taking its squared norm gives \eqref{eq:generic-fiber-success-probability}. All list computation and uncomputation have polynomial overhead and make no oracle queries. This is the complete-fiber mechanism of Jo~\cite{Jo2026}, stated with explicit rejection of labels outside the enumerated set and with a charged, observed success flag.
\end{proof}

After applying $\operatorname{QFT}_V^{-1}$ to the syndrome register in \eqref{eq:generic-fiber-success-state}, the successful branch becomes
\begin{align}
    \frac1{\sqrt{L_{\rm pad}|V|}} \sum_{x\in V} \left( \sum_{y\in\mathcal E} a_y(-1)^{\langle x,B^Ty\rangle_V} \right)\ket x,
    \label{eq:generic-fiber-message-amplitude}
\end{align}
because the fibers $\mathcal L_\sigma^{\mathcal E}$ partition $\mathcal E$ and each $y\in\mathcal E$ belongs to the unique fiber with $\sigma=B^Ty$.

For arbitrary input amplitudes, the fiber-summation circuit need not succeed with constant probability, since amplitudes within a fiber may cancel. For each state used below, we separately prove a constant lower bound on its fiber-success probability.

Both stronger quantum constructions use complete list decoding of the folded dual. Given a syndrome $\sigma$ and a block-weight interval, the decoder enumerates every Fourier label $y$ in that interval satisfying $B^Ty=\sigma$. The following lemma records the imported folded Reed--Solomon decoding guarantee in the form needed here.

\begin{lemma}[Complete folded-dual fiber decoder]
\label{lem:frs-complete-list-decoder}
Fix a rational constant $\tau\in(0,R)$ and put
\begin{align}
    L_\tau &\coloneq \ceil{\frac{\tau}{R-\tau}}+1.
\end{align}
For all sufficiently large admissible $M$, there is a deterministic algorithm that, given a syndrome $\sigma\in V$ and integers
\begin{align}
    0\leq j_-\leq j_+\leq\floor{\tau M},
\end{align}
outputs exactly
\begin{align}
    \mathcal L_{\sigma,[j_-,j_+]} &\coloneq \left\{ y\in\Sigma^M: B^Ty=\sigma,\quad j_-\leq\operatorname{wt}_{\rm blk}(y)\leq j_+ \right\}.
\label{eq:frs-complete-fib}
\end{align}
The algorithm runs in time $M^{O_{R-\tau}(1)}$, and every output list has size at most
\begin{align}
    \left|\mathcal L_{\sigma,[j_-,j_+]}\right| &\leq L_\tau=O_{R-\tau}(1).
\end{align}
The output can be deduplicated and sorted canonically, and the algorithm can be padded to a known running time independent of $\sigma$.
\end{lemma}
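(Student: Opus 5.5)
The proof will be a reduction to known list decoding of folded Reed--Solomon codes. The reduction goes through the unfolded dual and the syndrome-to-word conversion already used in \cref{lem:dqi-frs-geometry}. By \eqref{eq:folded-rs-dual-grs}, $C^\perp=\operatorname{Fold}_h(\operatorname{GRS}_{\boldsymbol\alpha,\boldsymbol\nu,N-K})$. Dividing coordinatewise by the nonzero multipliers $\nu_t$ maps it bijectively to $\operatorname{Fold}_h(\operatorname{RS}_{\boldsymbol\alpha,N-K})$ and preserves block supports. Since $\alpha_{ih+j}=\gamma^{ih+j}$ are consecutive powers of a primitive element, this rescaled code is exactly a Guruswami--Rudra folded Reed--Solomon code with folding parameter $h$, block length $M$, and dimension $N-K$. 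Its rate is $1-R+O(1/N)$.

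Given a syndrome $\sigma$, I will first use Gaussian elimination to find some $r\in\Sigma^M$ with $B^Tr=\sigma$; this is possible because $B^T$ is surjective. Then the fiber $\{y:B^Ty=\sigma\}$ equals $r-C^\perp$. Hence listing all $y$ in the fiber with $\operatorname{wt}_{\rm blk}(y)\leq\floor{\tau M}$ is the same as listing all dual codewords $c$ with $d_{\rm blk}(r,c)\leq\tau M$, and setting $y=r-c$. This is a block-error list-decoding problem at radius fraction $\tau<R=1-\text{rate}$.

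Next I invoke Guruswami's linear-algebraic folded Reed--Solomon decoder~\cite{Guruswami2011Linear}. For a fixed gap $R-\tau$, choose a fixed interpolation parameter $s$ and require $h\gg s$, which holds since $h=M+2$. The decoder then outputs, in time $M^{O_{R-\tau}(1)}$, a basis of an affine subspace of candidate messages of dimension $O_{R-\tau}(1)$ that contains every codeword within block distance $\tau M$ of the received word. The parameters must be checked to fit the regime: the agreement condition has roughly the form $(1-\tau)M > \frac{M}{s+1}+\frac{s}{s+1}\frac{(N-K)}{h-s+1}$, which holds for $s$ a large enough constant depending only on $R-\tau$, together with $q\ge N$. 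The subspace alone may still contain exponentially many points (its size is $q^{O(1)}$). The point-count is fixed by the Chen--Zhang relaxed list-size bound~\cite{ChenZhang2025Relaxed}: folded Reed--Solomon codes with this folding are $(\tau, L)$-list decodable with $L\leq \ceil{\tau/(R-\tau)}+1$ whenever $\tau$ is below capacity $R$ and $h$ is large. This gives the size bound $L_\tau$ and means the true list is small. To extract it efficiently, I will use the standard pruning step from those works. If the affine subspace has dimension $d=O(1)$, one can recover all close codewords in polynomial time, e.g.\ via the subspace-evasive or random-restriction argument, or by enumerating agreements on a few blocks. My concrete plan is to use the fact that any two close codewords are determined by their values on an $O(1)$-sized set of blocks, and to iterate over choices of $d$ blocks and $r$-values to solve for the affine coordinates. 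This costs $M^{O(d)}$, which is polynomial.

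Finally, for each candidate I compute $y=r-c$, keep those with $j_-\leq\operatorname{wt}_{\rm blk}(y)\leq j_+$, deduplicate, sort, and pad the running time to a fixed bound. The list is independent of the choice of $r$, since it is described intrinsically as the fiber. I expect the main obstacle to be the middle step: matching the imported theorems' hypotheses and parameters to our $h=M+2$, $q=(M+1)^2$ regime, and converting the output subspace into the exact list in polynomial time. If the pruning step resists a clean argument, I would cite Chen--Zhang's algorithmic version directly, since it states efficient list recovery at this radius.
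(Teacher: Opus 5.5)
Your proposal follows the paper's proof step for step: rescale the dual by its multipliers to a standard $h$-folded Reed--Solomon code of rate $1-R+o(1)$, convert the syndrome to a received word by Gaussian elimination, apply Guruswami's linear-algebraic decoder with a fixed $s$ to obtain an affine candidate space, bound the surviving list by Chen--Zhang, and filter by block weight. The paper takes $s=L_\tau$ and uses $\frac{s}{s+1}R>\tau$. One point is misjudged: since $q=(M+1)^2$, an affine space of dimension at most $s-1$ has at most $q^{s-1}=M^{O_{R-\tau}(1)}$ points, not exponentially many, so the paper simply enumerates and checks all of them and no subspace-evasive or agreement-based extraction step is needed.
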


\begin{proof}
Let $D\coloneq\operatorname{diag}(\alpha_t^{-1})_{t\in[N]}$. By the Reed--Solomon duality identities in
\cref{eq:rs-dual-grs,eq:folded-rs-dual-grs,eq:dqi-full-length-dual-multipliers},
$D(C_{\rm FRS}^{\perp})$ is the standard $h$-folded Reed--Solomon code of scalar dimension $N-K$ and rate $R_\perp=\frac{N-K}{N}=1-R+o(1)$. The map $D$ preserves block Hamming distance.

Set $s\coloneq L_\tau$. Guruswami's linear-algebraic decoder \cite[Theorem~7 and Lemma~6]{Guruswami2011Linear} produces an affine space of dimension at most $s-1$ containing every codeword within relative block distance
\begin{align}
    \delta_{M,s} &\coloneq \frac{s}{s+1} \left( 1-\frac{hR_{\perp}}{h-s+1} \right).
\end{align}
Since $s$ is fixed,
\begin{align}
    \delta_{M,s} &\rightarrow \frac{s}{s+1}R >\tau,
\end{align}
where the strict inequality follows from $s>\tau/(R-\tau)$. Hence, for all sufficiently large $M$, the decoder contains every codeword within block distance $\floor{\tau M}$ of its input. Its affine candidate space has at most $q^{s-1}=M^{\calO_{R-\tau}(1)}$ elements, so all candidates can be enumerated and checked exactly.

The list-size bound of Chen and Zhang \cite[Theorem~2.20]{ChenZhang2025Relaxed} shows that at most $s=L_\tau$ candidates survive this verification. Its hypotheses hold for our parameters because $N=q-1=hM$, the expanded evaluation points are distinct, $q>M$, and $s\leq h$ for all sufficiently large $M$.

It remains to convert this received-word decoder into the required syndrome-fiber decoder. Since $B^T$ is surjective, Gaussian elimination computes a canonical word $e_\sigma\in\Sigma^M$ satisfying $B^Te_\sigma=\sigma$. Then
\begin{align}
    B^Ty=\sigma \quad\Longleftrightarrow\quad e_\sigma-y\in C_{\rm FRS}^{\perp},
\end{align}
and
\begin{align}
    \operatorname{wt}_{\rm blk}(y) &= d_{\rm blk}(e_\sigma,e_\sigma-y).
\end{align}
Run the standard folded Reed--Solomon decoder on $De_\sigma$. For every returned codeword $c'\in D(C_{\rm FRS}^{\perp})$, set
\begin{align}
    d\coloneq D^{-1}c'\in C_{\rm FRS}^{\perp}, \qquad y\coloneq e_\sigma-d.
\end{align}
Since
\begin{align}
    d_{\rm blk}(De_\sigma,c') &= d_{\rm blk}(e_\sigma,d) = \operatorname{wt}_{\rm blk}(y),
\end{align}
retain exactly those $y$ whose block weights lie in $[j_-,j_+]$. This produces precisely \eqref{eq:frs-complete-fib}, with the same list-size and running-time bounds. Canonical sorting and fixed-time padding add only polynomial overhead.
\end{proof}

\subsection{Part \textup{(i)}: coherent-fiber thresholds}
\label{sec:proof-sampled-coherent-fib}

Recall from \cref{app:dqi-score-proof} that the index $j$ denotes the block weight of a normalized folded-DQI layer; see \eqref{eq:app-polynomial-dqi-layer}. Fix a rational relative layer cutoff $\mu$ with $0<\mu\leq1/2$ and $\mu<R$. Put
\begin{align}
 \ell_M&\coloneq\floor{\mu M}, &
 \Delta&\coloneq2^{\floor{(\log_2M)/2}}.
 \label{eq:cl-cutoff-width}
\end{align}
Here $\ell_M$ is the layer cutoff, while $\Delta$ is the number of consecutive layers used immediately below it. For sufficiently large $M$, one has $\ell_M-\Delta+1\geq0$, while $\Delta\to\infty$ and $\Delta=o(M)$. Thus the number of selected layers grows, but every selected index satisfies $j/M=\mu+o(1)$.

We use the normalized layer polynomials $P_j$ from \eqref{eq:app-polynomial-dqi-layer}; their algebraic definition applies for every $0\leq j\leq M$. As in \eqref{eq:app-filter-from-layers}, assign equal amplitudes to the $\Delta$ retained layers. The resulting filter evaluates to
\begin{align}
    P(f(x)) &\coloneq \frac1{\sqrt\Delta} \sum_{j=\ell_M-\Delta+1}^{\ell_M}P_j(f(x)).
    \label{eq:cl-filter}
\end{align}
Unlike in the DQI setting of \cref{app:dqi}, different retained labels may have the same syndrome, and their amplitudes can interfere. We therefore keep the normalization explicit in the corresponding quantum state below. Whenever the amplitude vector is nonzero, define
\begin{align}
    \ket{P(f)} &\coloneq \frac1{\sqrt{\sum_{x\in V}P(f(x))^2}} \sum_{x\in V}P(f(x))\ket x.
\label{eq:cl-filter-state}
\end{align}
Let $X\in V$ denote the coefficient vector obtained by measuring this state in the computational basis and denote by $Q_X$ the corresponding polynomial. 

\begin{proposition}[Coherent-fiber thresholds at a fixed cutoff]
\label{prop:coherent-fiber-fixed-radius}
For the balanced sampled-oracle folded Reed--Solomon family, fix a rational rate $R\in(0,1)$, a rational relative cutoff $\mu$ satisfying
\begin{align}
    0<\mu\leq\frac12, \qquad \mu<R,
\end{align}
and a constant $c>0$. Use $\ell_M=\floor{\mu M}$ and the filter $P$ from \eqref{eq:cl-filter}. There is a quantum algorithm using $\calO(M)$ queries per attempt and $M^{\calO_{R,\mu,c}(1)}$ gates. For all sufficiently large admissible $M$ and every oracle outside an exceptional set whose probability under the balanced sampled-oracle distribution is at most
\begin{align}
    \exp\{-\exp(\Omega_{R,\mu}(M\log M))\},
    \label{eq:cl-bad-oracle-probability}
\end{align}
one attempt observes the fiber-success flag with probability $\Omega_{R-\mu}(1)$. Conditioned on successful fiber summation, the output polynomial has a distribution within total-variation distance $M^{-c}$ of the ideal distribution obtained by measuring $\ket{P(f)}$ and returning the polynomial $Q_X$ with the measured coefficients. Moreover, for each such fixed oracle,
\begin{align}
    \E[\scor(Q_X)] &=\alpha_{\rm sc}(\mu)+o(1) =\frac12+\sqrt{\mu(1-\mu)}+o(1).
    \label{eq:cl-score}
\end{align}
Here the expectation is over measurement of the ideal filter state $\ket{P(f)}$, and the $o(1)$ is uniform over oracles outside the exceptional set. Every fixed rational threshold $0<\beta<\alpha_{\rm sc}(\mu)$ is therefore attained with bounded error using $\calO_{R,\mu,\beta}(M)$ queries outside the same exceptional set. If also $\beta>\alpha_{\rm C}(R)$, \cref{thm:frs-separation}\textup{(iii)} supplies the corresponding distributional classical query lower bound.
\end{proposition}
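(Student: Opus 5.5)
The algorithm reuses the folded-DQI preparation from \cref{lem:uniform-dqi} up to the pre-syndrome state \eqref{eq:app-pre-syndrome-dqi-state}, now with the window of $\Delta$ layers ending at $\ell_M=\floor{\mu M}$. This step needs no distance condition and uses one query per block. Instead of unique erasure, I would apply \cref{lem:coherent-fiber-primitive} with $\mathcal E$ the labels of block weight in $[\ell_M-\Delta+1,\ell_M]$. For this I choose a rational $\tau$ with $\mu<\tau<R$ and use the complete decoder of \cref{lem:frs-complete-list-decoder}, so that $L_M=L_\tau=O_{R-\mu}(1)$ and the circuit is polynomial. Then I apply $\operatorname{QFT}_V^{-1}$.

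By \eqref{eq:generic-fiber-message-amplitude} and the character expansion \eqref{eq:app-support-character-expansion}, the successful branch is exactly $(L_{\rm pad}|V|)^{-1/2}\sum_x P(f(x))\ket x$. Hence, conditioned on success, the output is the ideal measurement of $\ket{P(f)}$, and
\begin{align}
p_{\rm succ}=\frac{1}{L_{\rm pad}}\cdot\frac{1}{|V|}\sum_{x\in V}P(f(x))^2 .
\end{align}
Gate-synthesis errors give the $M^{-c}$ total-variation bound as in \cref{lem:uniform-dqi}. Dividing by a success probability of order $\Omega(1)$ only changes the constant $c$.

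Next I would express the norm and score in terms of codeword statistics. Since $P(f(x))$ depends only on the accepted count $a(x)$ of the codeword $Bx$, define
\begin{align}
Z_a\coloneq\frac{\abs{\{x: a(x)=a\}}}{|V|\binom Ma 2^{-M}} .
\end{align}
Then $|V|^{-1}\sum_x P(f(x))^2=\sum_a \binom Ma 2^{-M} Z_a\,P^2$, and the score numerator has the same form with an extra factor $a/M$. For independent uniform acceptance sets the codeword pattern is uniform, so $Z_a=1$ in expectation. With all $Z_a$ equal to one, these sums reproduce the exact orthonormal computation of \cref{prop:dqi}: the norm equals $1$ and the numerator equals the window score \eqref{eq:terminal-window-score}. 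By \cref{lem:dqi-terminal-window} this gives $\alpha_{\rm sc}(\mu)+o(1)$.

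For concentration, I would repeat the proof of \cref{lem:typical-perfect-satisfiability} for each $a$. This means McDiarmid's inequality on the auxiliary i.i.d.\ bit model, with per-bit influence at most $2^M\binom Ma^{-1}/|\Sigma|$, followed by the conditioning cost $(|\Sigma|+1)^M$ and a union bound over the $M+1$ values of $a$. The result is that $|Z_a-1|\le \varepsilon_M$ for all $a$ simultaneously, with $\varepsilon_M=|\Sigma|^{-1/3}$ say, except with probability $\exp\{-\exp(\Omega(M\log M))\}$.

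The main obstacle is showing these tiny relative errors survive the weighting. The additive error in the norm is at most $\varepsilon_M\sum_a\binom Ma2^{-M}P(a)^2$. Since $|P_j(f)|\le\sqrt{\binom Mj}$ trivially and $\Delta\le\sqrt M$, this weighted sum is at most $2^{O(M)}$. That is negligible against $\varepsilon_M=\exp(-\Theta(M\log M))$. The same holds for the score numerator. Therefore the norm is $1+o(1)$, so $p_{\rm succ}\geq(1-o(1))/L_{\rm pad}=\Omega_{R-\mu}(1)$, and the ratio gives \eqref{eq:cl-score} uniformly outside the exceptional set.

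The threshold claim then follows as in \eqref{eq:expectation-to-search-concrete}. A constant number of attempts, each succeeding with constant probability, together with $M$-query verification, reaches any $\beta<\alpha_{\rm sc}(\mu)$ with probability $2/3$ using $\calO(M)$ queries. The classical statement is quoted directly from \cref{thm:frs-separation}\textup{(iii)}.
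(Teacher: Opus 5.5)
Your proposal is correct and shares the paper's architecture. The preparation, the use of \cref{lem:coherent-fiber-primitive} with success probability $Z/L_{\rm pad}$ where $Z=\E_{x\in V}[P(f(x))^2]$, and McDiarmid in the auxiliary independent-bit model with the $(|\Sigma|+1)^M$ conditioning cost all match. The difference is in what you concentrate.

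\begin{itemize}
\item The paper applies McDiarmid directly to the two aggregates $Z$ and $N=\E_{x\in V}[P(f(x))^2\scor(Q_x)]$. It bounds the per-bit influence using $|P(f(x))|\leq 2^{M/2}$, and evaluates $\E_{\rm aux}[Z]=1$ and $\E_{\rm aux}[N]$ from independent-sign orthogonality and the three-term recurrence \eqref{eq:cl-three-term-recurrence}.
\item You concentrate the full accepted-count histogram $(Z_a)_{a=0}^M$ of the code. This is a direct generalization of \cref{lem:typical-perfect-satisfiability}, whose $Z_{\rm perf}$ is your $Z_M$. You then recover $Z$ and $N$ as nonnegative combinations of the $Z_a$.
\end{itemize}

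Your route costs $M+1$ McDiarmid applications with per-$a$ influence $2^M/(\binom Ma|\Sigma|)$, which is harmless. In exchange it gives a single exceptional set that does not depend on the filter. On that set, every amplitude and score functional that depends only on the accepted count is controlled at once. The bound $Z_{\rm perf}\geq 1/2$ needed for exact search in part (ii) also comes for free.

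Two small corrections.
\begin{itemize}
\item The step you flag as the main obstacle is not one. Since $\sum_a\binom Ma2^{-M}P(a)^2=1$ exactly, the additive errors in $Z$ and $N$ are at most $\max_a|Z_a-1|$. Hence any $\varepsilon_M=o(1)$ for which McDiarmid still yields a doubly exponential tail (e.g.\ $M^{-2}$) suffices, with no need to beat a $2^{O(M)}$ factor.
\item The binomial identities you need (norm one, and numerator equal to the window score \eqref{eq:terminal-window-score}) should be justified by the independent-sign computation. That is Krawtchouk orthogonality together with the three-term recurrence. Do not appeal to \cref{prop:dqi}: its distance hypothesis \eqref{eq:folded-dqi-collision-condition} fails for the folded Reed--Solomon dual once $\mu>R/2$.
\end{itemize}
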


\begin{proof}
We analyze the coherent summation, its normalization, and verification in turn.

\emph{Complete decoding and coherent summation.}
Define the retained set
\begin{align}
    \mathcal E &\coloneq \left\{y\in\Sigma^M: \ell_M-\Delta+1\leq\operatorname{wt}_{\rm blk}(y)\leq\ell_M \right\}.
    \label{eq:cl-retained-set}
\end{align}
For the folded Reed--Solomon dual, the complete decoder from \cref{lem:frs-complete-list-decoder} returns all labels in $\mathcal E$ with a given syndrome $\sigma$. Write this list as $\mathcal L_\sigma\coloneq \mathcal L_{\sigma,[\ell_M-\Delta+1,\ell_M]}$ in the notation of \eqref{eq:frs-complete-fib}. Every such list has at most $L\coloneq L_{\mu}=\calO_{R-\mu}(1)$ entries. The positive slack $\mu<R$ makes the list bound constant in $M$. The lists are complete and canonically ordered. As above, use the common power-of-two padding length $L_{\rm pad} \coloneq 2^{\ceil{\log_2 L}}$.

Use only the pre-syndrome layer and Fourier-label preparation from \eqref{eq:app-pre-syndrome-dqi-state}, with equal weights on the $\Delta$ selected layers. Compute the syndrome into a second register, but retain the label register instead of applying unique decoding. This produces
\begin{align}
    \ket{\Psi_0} &\coloneq \sum_{y\in\Sigma^M}a_y\ket y\ket{B^Ty}.
    \label{eq:cl-labeled-state}
\end{align}
Here the first register holds the error label $y$, the second holds its syndrome, and $a_y$ is the amplitude of this joint basis state. Explicitly,
\begin{align}
    a_y &\coloneq
    \begin{cases}
        \displaystyle \frac{\widetilde\phi(y)}{\sqrt{\Delta\binom M{\operatorname{wt}_{\rm blk}(y)}}}, &\ell_M-\Delta+1\leq\operatorname{wt}_{\rm blk}(y)\leq\ell_M,\\[1ex] 
        0,&\text{otherwise},
    \end{cases}
    \label{eq:cl-ay}
\end{align}
where, as in \cref{app:dqi}, $\widetilde\phi(y)=\prod_{i:y_i\ne0}\widetilde\phi_i(y_i)$. The factor $1/\sqrt{\Delta}$ weights the selected layers equally. The binomial factor comes from choosing uniformly among all supports of size $\operatorname{wt}_{\rm blk}(y)$. The factor $\widetilde\phi(y)$ then supplies the Fourier amplitudes of the nonzero block values on that support.

The superposition $\sum_y a_y\ket y$ is normalized because each layer has norm one, different layers are orthogonal, and their squared weights sum to one. Computing the syndrome into a second register preserves this normalization. Even when distinct labels $y$ and $y'$ have the same syndrome $\sigma$, the joint states $\ket y\ket\sigma$ and $\ket{y'}\ket\sigma$ remain orthogonal.

The difficulty arises when we want to remove the error register. In DQI, the syndrome uniquely determines the retained error label, so a decoder can recover and erase it. Here several retained labels may share a syndrome. Instead, the complete fiber-summation procedure combines their amplitudes into that syndrome's amplitude and removes the error register on its reported success branch. The local sign-state routine is applied separately for each fixed block $i$, using one query to $\mathcal O_{A_i}$ per block and hence $M$ queries in total. 

We now identify the amplitudes produced by summing over each syndrome fiber. By Fourier inversion, each label $y$ contributes $a_y(-1)^{\langle x,B^Ty\rangle_V}$ to $P(f(x))$. Labels with the same syndrome $\sigma=B^Ty$ have the same phase factor, so collecting their coefficients gives
\begin{align}
    P(f(x)) &= \sum_{\sigma\in V} \left(\sum_{y\in\mathcal L_\sigma}a_y\right) (-1)^{\langle x,\sigma\rangle_V}.
    \label{eq:cl-fiber-amplitudes}
\end{align}
Thus the sum $\sum_{y\in\mathcal L_\sigma}a_y$ is precisely the coefficient needed at syndrome $\sigma$ before the final inverse Fourier transform.

We must also determine the normalization of these summed amplitudes. Their squared norm is
\begin{align}
    Z &\coloneq \sum_{\sigma\in V} \left|\sum_{y\in\mathcal L_\sigma}a_y\right|^2 = \E_{x\in V}[P(f(x))^2],
    \label{eq:cl-fourier-identity}
\end{align}
where $\E_{x\in V}$ denotes the uniform average over $x\in V$. To obtain the equality, square the preceding expression for $P(f(x))$ and average over $x$. Cross terms between distinct syndromes average to zero, leaving exactly the sum of squared amplitudes on the left. Within each fiber, amplitudes are added before their squared magnitude is taken. They may reinforce or cancel one another, so $Z$ need not equal one.

Starting from the joint state $\ket{\Psi_0}$ in \eqref{eq:cl-labeled-state}, apply the coherent-fiber procedure of \cref{lem:coherent-fiber-primitive} with retained set $\mathcal E$, beginning after its syndrome-computation step. The procedure reports success or failure, and its component corresponding to success is
\begin{align}
    \frac1{\sqrt{L_{\rm pad}}} \sum_{\sigma\in V} \left(\sum_{y\in\mathcal L_\sigma}a_y\right)\ket\sigma.
    \label{eq:cl-fiber-sum}
\end{align}
The squared norm of this unnormalized component is the probability that fiber summation reports success in one attempt. By the definition of $Z$, this probability is
\begin{align}
    p_{\rm fib} &=\frac{Z}{L_{\rm pad}}.
    \label{eq:cl-fiber-success}
\end{align}

If $Z=0$, success never occurs. Otherwise, conditioning on success normalizes the displayed vector by dividing it by $\sqrt{Z/L_{\rm pad}}$. The padding factor cancels, leaving the summed syndrome amplitudes divided by $\sqrt Z$. Applying the inverse Fourier transform then gives amplitude
\begin{align}
    \frac1{\sqrt{|V|Z}} \sum_{\sigma\in V} \left(\sum_{y\in\mathcal L_\sigma}a_y\right) (-1)^{\langle x,\sigma\rangle_V} &=\frac{P(f(x))}{\sqrt{|V|Z}}
\end{align}
at the coefficient vector $x$ of $Q_x$, by \eqref{eq:cl-fiber-amplitudes}. Thus, conditioned on success, the circuit prepares exactly the filter state $\ket{P(f)}$ from \eqref{eq:cl-filter-state}.

We next show that $Z$ is close to one on \emph{typical sampled oracles}. Since $L_{\rm pad}$ is bounded independently of $M$, this gives a success probability bounded below by a positive constant. Controlling $Z$ also controls the normalization needed to calculate the expected score after success.

\emph{Concentration of normalization and score.}
We follow the concentration argument from \cref{lem:typical-perfect-satisfiability}, now applied to $Z$ and the score-weighted quantity
\begin{align}
    N &\coloneq \E_{x\in V}[P(f(x))^2\scor(Q_x)].
    \label{eq:cl-NH}
\end{align}
For a fixed balanced oracle with $Z>0$, the ideal filtered distribution assigns probability $P(f(x))^2/(|V|Z)$ to $x$. Its expected score is therefore $N/Z$. We will control this ratio by showing that numerator and denominator are simultaneously close to their auxiliary expectations.

Use the auxiliary model in which all membership bits are independent fair bits, and write $\E_{\rm aux}$ for expectation over this oracle. Evaluate $Z$ using the message-average expression in \eqref{eq:cl-fourier-identity}. This expression and the definition of $N$ remain well-defined even when the auxiliary oracle is not balanced.

We first compute the auxiliary expectations. Fix $x\in V$. The signs $f_i(B_ix)$ are independent and uniform in $\{\pm1\}$. For two subsets $S,T\subseteq[M]$, the factors belonging to $S\cap T$ cancel because $f_i(B_ix)^2=1$. If $S\ne T$, at least one independent mean-zero sign remains. Consequently,
\begin{align}
    \E_{\rm aux}\left[\left(\prod_{i\in S}f_i(B_ix)\right) \left(\prod_{i\in T}f_i(B_ix)\right) \right] &=\mathds{1}_{\{S=T\}}.
\end{align}
Each $P_j(f(x))$ sums these products over all supports of size $j$, with normalization $1/\sqrt{\binom Mj}$. Different layers therefore have zero expected product. Within the same layer, the $\binom Mj$ matching supports contribute one each, and the normalization cancels this factor. Hence
\begin{align}
    \E_{\rm aux}\left[P_j(f(x))P_k(f(x))\right] &=\mathds{1}_{\{j=k\}}.
    \label{eq:cl-layer-orthogonality}
\end{align}

Since $\scor(Q_x)=1/2+f(x)/(2M)$, computing the score-weighted expectation also requires multiplication by $f(x)=\sum_i f_i(B_ix)$. Multiplying a product supported on $j$ blocks by $f_i(B_ix)$ adds block $i$ if it is absent and removes it if it is present. Each resulting support of size $j+1$ arises in $j+1$ ways, while each support of size $j-1$ arises in $M-j+1$ ways. Accounting for the layer normalizations gives
\begin{align}
    f(x)P_j(f(x)) &=\sqrt{(j+1)(M-j)}\,P_{j+1}(f(x)) +\sqrt{j(M-j+1)}\,P_{j-1}(f(x)),
    \label{eq:cl-three-term-recurrence}
\end{align}
with the boundary convention $P_{-1}=P_{M+1}=0$.

For the equal weights on the $\Delta$ selected layers, orthogonality and this recurrence yield
\begin{align}
    \E_{\rm aux}[Z] &=1, \label{eq:cl-ED}\\
    \E_{\rm aux}[N] &=\frac12+\frac1{M\Delta} \sum_{j=\ell_M-\Delta+1}^{\ell_M-1} \sqrt{(j+1)(M-j)}.
    \label{eq:cl-EN}
\end{align}

To turn these oracle averages into guarantees for individual instances, we bound how much changing one membership bit can change $Z$ or $N$, and then apply McDiarmid's inequality. Changing the answer at $(i,z)$ can affect only messages $x\in V$ whose output in block $i$ equals $z$, that is, $B_ix=z$. For sufficiently large $M$, we have $K\geq h$. Because the $h$ evaluation points in each block are distinct, interpolation realizes every element of $\Sigma$ in that block. Hence $B_i:V\to\Sigma$ is surjective. Since this map is linear, exactly $|V|/|\Sigma|$ messages have each block value. Thus only the fraction $1/|\Sigma|$ of messages can be affected.

We also need a uniform bound on the contribution of each affected message. Each $P_j(f(x))$ is a normalized sum of $\binom Mj$ products of signs, so $|P_j(f(x))|\leq\sqrt{\binom Mj}$. Put $P_{\max}\coloneq2^{M/2}$. Cauchy--Schwarz gives
\begin{align}
    |P(f(x))| &\leq \frac1{\sqrt\Delta} \sum_{j=\ell_M-\Delta+1}^{\ell_M} \sqrt{\binom Mj} \leq P_{\max}.
    \label{eq:cl-filter-bound}
\end{align}
For the last inequality, use $\sum_{j=0}^M\binom Mj=2^M$.

Because scores lie in $[0,1]$, both $P(f(x))^2$ and $P(f(x))^2\scor(Q_x)$ lie in $[0,P_{\max}^2]$. Changing one membership bit therefore changes either average $Z$ or $N$ by at most $P_{\max}^2/|\Sigma|$. There are $M|\Sigma|$ independent membership bits in the auxiliary model, so the sum of squared change bounds is $MP_{\max}^4/|\Sigma|$.

Apply McDiarmid's inequality separately to $Z$ and $N$, then take a union bound. As in the perfect-solution argument, conditioning on exact balance costs at most the factor $(|\Sigma|+1)^M$ from \eqref{eq:fixed-density-conditioning-probability}. Under the balanced sampled-oracle distribution, this gives, for every $\delta>0$,
\begin{align}
    \Prb_{\mathcal A}\left[ \substack{|Z-1|>\delta\ \text{or}\\ |N-\E_{\rm aux}[N]|>\delta} \right]
    &\leq 4(|\Sigma|+1)^M \exp\left( -\frac{2\delta^2|\Sigma|}{MP_{\max}^4} \right).
    \label{eq:cl-mcdiarmid}
\end{align}
Recall that
\begin{align}
    |\Sigma| &=(M+1)^{2(M+2)}.
    \label{eq:cl-U-size}
\end{align}
Taking $\delta=M^{-2}$ and using $P_{\max}^4=2^{2M}$, the magnitude of the negative exponent grows as $\exp(\Theta(M\log M))$, whereas the logarithm of the prefactor is only $\calO(M^2\log M)$. This yields the exceptional-set bound \eqref{eq:cl-bad-oracle-probability}.

Outside this exceptional set, both $|Z-1|\leq\delta$ and $|N-\E_{\rm aux}[N]|\leq\delta$ hold. Moreover, $0\leq N\leq Z$ for every oracle, and $\E_{\rm aux}[Z]=1$, so $0\leq\E_{\rm aux}[N]\leq1$. For sufficiently large $M$, we therefore have
\begin{align}
    Z &\geq1-\delta\geq\frac12,\\ \left|\frac NZ-\E_{\rm aux}[N]\right| &\leq \frac{|N-\E_{\rm aux}[N]|+\E_{\rm aux}[N]|1-Z|}{Z} \leq\frac{2\delta}{1-\delta} \leq4\delta.
\end{align}
Thus, for each fixed balanced oracle outside the exceptional set, the expected score of the ideal filtered output satisfies
\begin{align}
    \E[\scor(Q_X)] &=\frac NZ =\E_{\rm aux}[N]+\calO(M^{-2}) =\alpha_{\rm sc}(\mu)+o(1).
    \label{eq:cl-score-proof}
\end{align}
For the last equality, $j/M\to\mu$ uniformly throughout the selected window, because $\ell_M/M\to\mu$ and $\Delta=o(M)$. Hence $\sqrt{(j+1)(M-j)}/M\to\sqrt{\mu(1-\mu)}$ uniformly over the summands in \eqref{eq:cl-EN}. There are $\Delta-1$ such summands, and $(\Delta-1)/\Delta\to1$ because $\Delta\to\infty$.

The same event gives $p_{\rm fib}=Z/L_{\rm pad}\geq1/(2L_{\rm pad})$. Since $L_{\rm pad}$ is bounded independently of $M$, one attempt succeeds with probability bounded below by a positive constant.

\emph{Implementation, conditioning, and verification.}
Fix an oracle in the concentration event above and put $\xi\coloneq M^{-c}$. Let $\ket\Phi$ be the ideal normalized state immediately before measuring the fiber-success flag, and let $\ket{\widetilde\Phi}$ be its implementation. Compile the preparation so that
\begin{align}
    \norm{\ket{\widetilde\Phi}-\ket\Phi} &\leq \varepsilon_{\rm imp} \leq \min\left\{ \frac1{8L_{\rm pad}}, \frac{\xi}{4\sqrt{2L_{\rm pad}}} \right\}.
    \label{eq:cl-precision-choice}
\end{align}
Let $\Pi_{\rm fib}$ project onto the measured outcome $(\mathrm{Idx},b_{\rm match})=(0,1)$ of \cref{lem:coherent-fiber-primitive}. The ideal and implemented success probabilities are
\begin{align}
    p_{\rm fib}&=\norm{\Pi_{\rm fib}\ket\Phi}^2, &\widetilde p_{\rm fib} &\coloneq \norm{\Pi_{\rm fib}\ket{\widetilde\Phi}}^2.
\end{align}
Projection cannot increase the vector error. Since both unprojected states have norm one, the difference of these squared norms is at most $2\varepsilon_{\rm imp}$. Thus the first precision requirement preserves a constant success probability:
\begin{align}
     \widetilde p_{\rm fib}
     &\geq p_{\rm fib}-2\varepsilon_{\rm imp}
     \geq\frac1{2L_{\rm pad}}-2\varepsilon_{\rm imp}
     \geq\frac1{4L_{\rm pad}}.
     \label{eq:cl-implemented-fiber-success}
\end{align}
Conditioning on success normalizes each projected vector. This can amplify its error, so we compare the normalized branches explicitly:
\begin{align}
     \norm{
     \frac{\Pi_{\rm fib}\ket{\widetilde\Phi}}
       {\sqrt{\widetilde p_{\rm fib}}}
     -\frac{\Pi_{\rm fib}\ket\Phi}{\sqrt{p_{\rm fib}}}
     }
     &\leq\frac{2\varepsilon_{\rm imp}}{\sqrt{p_{\rm fib}}} \leq 2\varepsilon_{\rm imp}\sqrt{2L_{\rm pad}}\leq\xi/2.
     \label{eq:cl-conditioning-stability}
\end{align}
The first inequality follows by adding and subtracting the implemented projected vector divided by $\sqrt{p_{\rm fib}}$, then using the reverse triangle inequality for the two projected norms. This explains the second requirement in \eqref{eq:cl-precision-choice}. Implement the final inverse Fourier transform to operator-norm error at most $\xi/2$. The triangle inequality then bounds the total conditional state-vector error by $\xi$. Computational-basis measurement has total-variation error at most this amount, giving the claimed $M^{-c}$ distance from the law of $Q_X$. Since $L_{\rm pad}=O_{R-\mu}(1)$, all requested precisions are inverse polynomial in $M$. They require only polynomial gate overhead and no extra query factors.

For fixed rational $0<\beta<\alpha_{\rm sc}(\mu)$, put $\gamma\coloneq\alpha_{\rm sc}(\mu)-\beta$ and
$p_0\coloneq\gamma/[2(1-\beta)]$. For sufficiently large $M$,
\begin{align}
     \E[\scor(Q_X)]
     &\geq\alpha_{\rm sc}(\mu)-\gamma/4, &
     M^{-c}&\leq\frac{\gamma}{4(1-\beta)}.
     \label{eq:cl-threshold-slack}
\end{align}
The expectation-to-threshold bound first gives an ideal conditional success probability
\begin{align}
 \Prb[\scor(Q_X)\geq\beta]
 &\geq\frac{\E[\scor(Q_X)]-\beta}
         {1-\beta}
 \geq\frac{3\gamma}{4(1-\beta)}.
\end{align}
Write $Q$ for the candidate polynomial produced by the implemented circuit, before score verification. Conditioned on fiber success, its distribution is within total-variation distance $M^{-c}$ of the ideal law of $Q_X$. The probability of reaching threshold $\beta$ can therefore decrease by at most $M^{-c}$. Hence
\begin{align}
    \Prb[\scor(Q)\geq\beta\mid\text{fiber success}]
    &\geq\frac{3\gamma}{4(1-\beta)}-M^{-c}
    \geq p_0.
    \label{eq:cl-threshold-probability}
\end{align}
This is conditional on observing the flag. Multiplying by \eqref{eq:cl-implemented-fiber-success} shows that one full attempt produces a score-$\beta$ candidate with probability at least $p_0/(4L_{\rm pad})$. Verify each such candidate by querying all $M$ blocks, returning it only if its score is at least $\beta$. Charging these $M$ queries to every attempt, including failed fiber projections, gives $\calO(M)$ queries per attempt. With
\begin{align}
     n&\coloneq \left\lceil4L_{\rm pad}p_0^{-1}\log3\right\rceil,
     &\Prb[\text{all attempts fail}]
     &\leq\left(1-\frac{p_0}{4L_{\rm pad}}\right)^n
     \leq\frac13,
\end{align}
independent verified attempts give bounded error. If no candidate passes verification within the prescribed number of attempts, the algorithm reports failure. Both $L_{\rm pad}$ and $p_0^{-1}$ are constants for the fixed parameters, so the total query count is $\calO_{R,\mu,\beta}(M)$. When $\beta>\alpha_{\rm C}(R)$, \cref{thm:frs-separation}\textup{(iii)} with $\eta=\beta-\alpha_{\rm C}(R)$ supplies the corresponding distributional classical query lower bound.
\end{proof}

\paragraph{Resulting score envelope and scope.}

The semicircle curve $\alpha_{\rm sc}(\mu)$ is continuous and strictly increasing on $[0,1/2]$. For our folded Reed--Solomon family, \cref{lem:frs-complete-list-decoder} supports every fixed rational cutoff $\mu<R$. Taking the supremum over these cutoffs within the range of the score analysis gives
\begin{align}
    \sup_{\substack{\mu\in\mathbb Q,\ 0<\mu\leq1/2\\\mu<R}}
    \alpha_{\rm sc}(\mu)
    &=
    \begin{cases}
        \frac12+\sqrt{R(1-R)},&0<R\leq1/2,\\
        1,&1/2<R<1,
    \end{cases}
    =\alpha_{\rm fib}(R).
    \label{eq:cl-envelope}
\end{align}
For $R\leq1/2$, rational cutoffs can approach $R$ from below, but the endpoint $\mu=R$ is excluded by the decoder's requirement of positive slack $R-\mu$. Every fixed threshold strictly below the envelope can nevertheless be reached using a fixed positive slack. For $R>1/2$, the cutoff $\mu=1/2$ is allowed and gives limiting expected score one. This certifies every fixed threshold below one, but does not by itself guarantee an output satisfying every block; exact search requires the separate argument below.

\begin{proof}[Proof of \cref{thm:sampled-score-hierarchy}\textup{(i)}]
Fix a rational threshold $0<\beta<\alpha_{\rm fib}(R)$. We choose a rational cutoff $\mu$ supported by our decoder and satisfying $\alpha_{\rm sc}(\mu)>\beta$. If $R\leq1/2$, choose $0<\mu<R$ sufficiently close to $R$; continuity and $\beta<\alpha_{\rm sc}(R)$ ensure that this is possible. If $R>1/2$, take $\mu=1/2$, for which $\alpha_{\rm sc}(\mu)=1>\beta$.

In either case, $\mu<R$, so \cref{lem:frs-complete-list-decoder} provides complete syndrome lists up to block weight $\floor{\mu M}$, with list size $\calO_{R-\mu}(1)$ and polynomial running time. The cutoff is fixed independently of $M$, so both the decoding slack $R-\mu$ and the score gap $\alpha_{\rm sc}(\mu)-\beta$ are positive constants.

Applying \cref{prop:coherent-fiber-fixed-radius} therefore gives bounded-error success at threshold $\beta$ outside its exceptional set. Since the chosen cutoff depends only on $R$ and $\beta$, the algorithm uses $\calO_{R,\beta}(M)$ queries and polynomially many gates, with exceptional probability $\exp\{-\exp(\Omega_{R,\beta}(M\log M))\}$, as claimed.
\end{proof}

\subsection{Part \textup{(ii)}: exact search by truncated coherent fiber summation}
\label{sec:proof-sampled-exact}

For exact search, we begin with a uniform superposition of accepted symbols in each block. The resulting block words need not be codewords. Fourier transformation and coherent fiber summation convert this state into a distribution on valid candidate polynomials. We calculate directly the probability that one full attempt returns a polynomial accepted in every block.

\begin{proof}[Proof of \cref{thm:sampled-score-hierarchy}\textup{(ii)}]
Fix a rational rate $R\in(1/2,1)$ and a sufficiently large admissible block count $M$.

\emph{Choosing the retained labels.}
We use complete list decoding at relative block radius $1/2$. This is available because $R>1/2$: the required decoding slack $R-1/2$ is a positive constant. Every admissible $M$ is odd,  since $M+1$ is a power of two. Put
\begin{align}
    \tau&\coloneq\frac12,
    &k_M&\coloneq\floor{M/2}=\frac{M-1}{2},
    &L_{\rm exact}&\coloneq2^{\ceil{\log_2L_{\tau}}}.
    \label{eq:exact-radius-choice}
\end{align}
Here $L_{\rm exact}$ is the common power-of-two padding length for the decoder's lists. Define the retained set
\begin{align}
    \mathcal E
    &\coloneq\{y\in\Sigma^M:\operatorname{wt}_{\rm blk}(y)\leq k_M\}.
\end{align}
By \cref{lem:frs-complete-list-decoder}, the complete lists are $\mathcal L_{\sigma,[0,k_M]}$ from \eqref{eq:frs-complete-fib}. They can be computed in polynomial time, and $L_{\rm exact}=\calO_R(1)$.

\emph{Preparing the accepted-symbol state.}
For each fixed block $i$, \cref{lem:accepted-symbol-preparation} prepares
\begin{align}
    \ket0
    &\longmapsto
    \frac1{\sqrt{|\Sigma|/2}}\sum_{z\in A_i}\ket z
    \label{eq:exact-local-state-preparation}
\end{align}
using three queries to $\mathcal O_{A_i}$. Applying these routines separately to the $M$ blocks gives the normalized product state
\begin{align}
    \ket{W^{\mathcal A}}
    &\coloneq\sum_{z\in\Sigma^M}W^{\mathcal A}(z)\ket z,
    &W^{\mathcal A}(z)
    &\coloneq\prod_{i\in[M]}
    \frac{\mathds{1}_{\{z_i\in A_i\}}}{\sqrt{|\Sigma|/2}}.
    \label{eq:exact-product-weight}
\end{align}
This uses $3M$ membership queries. We account for circuit approximation errors at the end of the proof.

Apply a Fourier transform to each block register. Using the coefficients $\widetilde\phi_i$ from
\eqref{eq:app-fourier}, the local state becomes
\begin{align}
    \operatorname{QFT}_\Sigma
    \left(\frac1{\sqrt{|\Sigma|/2}}\sum_{z\in A_i}\ket z\right)
    &=\frac1{\sqrt2}\ket0
    +\frac1{\sqrt2}\sum_{\eta\ne0}\widetilde\phi_i(\eta)\ket\eta.
    \label{eq:exact-local-fourier-state}
\end{align}
Indeed, the acceptance indicator is $(1+f_i)/2$. The constant term supplies the zero error, while balance gives $\widetilde\phi_i(0)=0$ for the sign term. Parseval gives $\sum_{\eta\ne0}|\widetilde\phi_i(\eta)|^2=1$. Thus a measurement of this register would give zero with probability $1/2$ and a nonzero label with probability $1/2$.

The block registers are in a product state, so their zero/nonzero choices are independent. Write $\widehat W^{\mathcal A}(y)$ for the amplitude of label $y$ after these Fourier transforms. Then
\begin{align}
    \widehat W^{\mathcal A}(y)
    &=2^{-M/2}\prod_{i:y_i\ne0}\widetilde\phi_i(y_i),
    \label{eq:exact-prepared-fourier-amplitude}\\
    \sum_{\operatorname{wt}_{\rm blk}(y)=j}
    |\widehat W^{\mathcal A}(y)|^2
    &=2^{-M}\binom Mj.
    \label{eq:exact-binomial-layers}
\end{align}
In particular, the Fourier block weight has distribution $\operatorname{Bin}(M,1/2)$. The total squared amplitude on the retained labels is therefore
\begin{align}
    b_M &\coloneq2^{-M}\sum_{j=0}^{k_M}\binom Mj =\Prb[\operatorname{Bin}(M,1/2)\leq k_M].
    \label{eq:exact-truncated-binomial-mass}
\end{align}
Since $M$ is odd, symmetry pairs each weight at most $(M-1)/2$ with a weight at least $(M+1)/2$, with no middle layer left over. Consequently, $b_M=1/2$. Thus our decoder covers exactly half of the Fourier probability. This statement holds for every balanced promised oracle.

\emph{Computing the probability of a perfect output.}
Apply \cref{lem:coherent-fiber-primitive} to the full normalized state $\sum_y\widehat W^{\mathcal A}(y)\ket y$, with retained set $\mathcal E$ and padding length $L_{\rm exact}$. Here the procedure includes syndrome computation. Success in \cref{lem:coherent-fiber-primitive} is the joint outcome $(\mathrm{Idx},b_{\rm match})=(0,1)$ after the Hadamards on the index register. The match bit rejects labels outside $\mathcal E$, while projection of the padded index onto zero contributes the factor $L_{\rm exact}^{-1/2}$. We neither project onto $\mathcal E$ in advance nor rescale the retained amplitudes by $b_M^{-1/2}$.

Without measuring the success flag, apply $\operatorname{QFT}_V^{-1}$ to the syndrome register and consider the unnormalized successful component. By \eqref{eq:generic-fiber-message-amplitude}, its amplitude at $x\in V$ is
\begin{align}
    &\frac1{\sqrt{L_{\rm exact}|V|}}
    \sum_{y\in\mathcal E}
    \widehat W^{\mathcal A}(y)
    (-1)^{\langle x,B^Ty\rangle_V}.
    \label{eq:exact-pre-support-amplitude}
\end{align}
We evaluate this sum by grouping the retained labels according to their block support. For a fixed $T\subseteq[M]$, the adjoint identity and local Fourier inversion give
\begin{align}
    &\sum_{\operatorname{supp}_{\rm blk}(y)=T}
    \left(\prod_{i\in T}\widetilde\phi_i(y_i)\right)
    (-1)^{\langle x,B^Ty\rangle_V} =
    \prod_{i\in T}\left(
    \sum_{\eta\ne0}\widetilde\phi_i(\eta)
    (-1)^{\langle B_ix,\eta\rangle_\Sigma}\right)
    =\prod_{i\in T}f_i(B_ix).
\end{align}
Here $\operatorname{supp}_{\rm blk}(y)=\{i:y_i\ne0\}$. The equality uses $\langle x,B^Ty\rangle_V=\sum_{i\in T}\langle B_ix,y_i\rangle_\Sigma$, as in \eqref{eq:app-support-character-expansion}. No zero-frequency term is missing, since $\widetilde\phi_i(0)=0$.

Adding the contributions of all retained supports gives the truncated amplitude function
\begin{align}
    F^{\leq k_M}(x)
    &\coloneq
    2^{-M/2}
    \sum_{\substack{T\subseteq[M]\\|T|\leq k_M}}
    \prod_{i\in T}f_i(B_ix)
    =
    2^{-M/2}
    \sum_{j=0}^{k_M}
    \sqrt{\binom Mj}\,P_j(f(x)).
    \label{eq:exact-truncated-amplitude}
\end{align}
Here the second equality uses the same normalized elementary-symmetric expression as \eqref{eq:app-polynomial-dqi-layer}, now merely as algebraic notation for every $0\leq j\leq k_M$. No orthogonality or preparation claim for the corresponding higher-degree layer states is needed.

It follows that the amplitude of $\ket{x}$ in the unnormalized successful branch is
\begin{align}
    \frac{F^{\leq k_M}(x)}
    {\sqrt{L_{\rm exact}|V|}}.
    \label{eq:exact-successful-message-amplitude}
\end{align}
The factor $L_{\rm exact}^{-1/2}$ comes from coherent fiber summation, while $|V|^{-1/2}$ comes from the inverse Fourier transform.

If $Q_x$ satisfies every block, all the signs $f_i(B_ix)$ equal $+1$. Every support product then equals one, so
\begin{align}
    P_j(f(x))&=\sqrt{\binom Mj},
    &F^{\leq k_M}(x)&=2^{M/2}b_M.
\end{align}
All perfect polynomials therefore receive the same successful amplitude. Recall that $N_{\rm perf}(\mathcal A)$ counts their codewords and that $Z_{\rm perf}(\mathcal A)$ is this count divided by its expectation, as in \eqref{eq:exact-normalized-solution-count}. The encoder is injective, so $|V|=|C_{\rm FRS}|$ and perfect codewords correspond one-to-one to perfect coefficient vectors. Squaring the amplitude above and summing over them gives
\begin{align}
    p_{\rm perf}(\mathcal A)
    &=\frac{2^M b_M^2N_{\rm perf}(\mathcal A)}{L_{\rm exact}|V|}
    =\frac{b_M^2Z_{\rm perf}(\mathcal A)}{L_{\rm exact}}.
    \label{eq:exact-direct-success-identity}
\end{align}
Here $p_{\rm perf}(\mathcal A)$ is the joint probability that one ideal attempt reports fiber success and produces a perfect polynomial. Because we used the unnormalized branch, this probability already accounts for attempts in which fiber summation fails. There is no need to calculate the distribution conditional on fiber success.

\emph{Typical oracles, implementation, and verification.}
The remaining oracle-dependent quantity is $Z_{\rm perf}$. The concentration result from \cref{lem:typical-perfect-satisfiability} shows that it is at least $1/2$ on almost all sampled oracles. Define
\begin{align}
    \mathsf{Good}_M
    &\coloneq\{\mathcal A:Z_{\rm perf}(\mathcal A)\geq1/2\}.
    \label{eq:exact-good-oracle-event}
\end{align}
Under the balanced sampled-oracle distribution,
\begin{align}
    \Prb[\mathcal A\notin\mathsf{Good}_M]
    &\leq\exp\{-\exp(\Omega_R(M\log M))\}.
    \label{eq:exact-good-oracle-probability}
\end{align}
Fix any oracle in $\mathsf{Good}_M$. Since $b_M=1/2$, \eqref{eq:exact-direct-success-identity} gives $p_{\rm perf}(\mathcal A)\geq1/(8L_{\rm exact})$. Probabilities in the remainder of this argument are over the quantum algorithm for this fixed oracle.

We may defer measuring the fiber-success flag until after the inverse Fourier transform, since they act on different registers. Compile the gates, including those of the final inverse Fourier transform, using only the parameters and target accuracy, independently of $\mathcal A$. Choose the accuracy so that, together with the exact membership oracle calls, the implemented normalized joint state before measurement differs from the ideal state by at most $1/(32L_{\rm exact})$ in vector norm, uniformly over all promised oracles. Include all auxiliary registers in this comparison. As in the preceding proof, the probability of any measurement event changes by at most twice the vector error. Applying this to the joint event of fiber success and a perfect candidate gives
\begin{align}
    \widetilde p_{\rm perf}(\mathcal A)
    &\geq\frac1{8L_{\rm exact}}-
    \frac2{32L_{\rm exact}}
    =\frac1{16L_{\rm exact}}.
    \label{eq:exact-single-run-success}
\end{align}
Since $L_{\rm exact}=\calO_R(1)$, this precision can be achieved with polynomial gate overhead.

Whenever fiber summation succeeds, measure the candidate and verify it by querying all $M$ blocks. Return it only if every block accepts. One full attempt therefore returns a perfect polynomial with probability at least $1/(16L_{\rm exact})$. Repeat independently, with the same fixed oracle, for
\begin{align}
    n&\coloneq\ceil{16L_{\rm exact}\log3}
\end{align}
attempts, stopping when verification succeeds. If no candidate passes verification within the prescribed number of attempts, the algorithm reports failure. The probability that all attempts fail is at most
\begin{align}
    \left(1-\frac1{16L_{\rm exact}}\right)^n
    &\leq\exp\left(-\frac{n}{16L_{\rm exact}}\right)
    \leq\frac13.
\end{align}
Each attempt uses $3M$ fixed-block membership queries for state preparation and at most $M$ for verification; list decoding and coherent fiber summation are oracle-free. The remaining gates have polynomial complexity. Since $L_{\rm exact}=\calO_R(1)$, the predetermined number of attempts is $\calO_R(1)$. Hence the worst-case total query count is $\calO_R(M)$ and the running time is polynomial. Every returned polynomial is perfect and hence has approximation ratio one.
\end{proof}

\subsection{Yamakawa--Zhandry exact search and its DQI interpretation}
\label{sec:yz-dictionary}

The ideal target of Yamakawa and Zhandry's exact-search algorithm can be interpreted as a DQI filter that keeps only candidates satisfying every block. We explain this connection and record their oracle interface, which is useful for comparing the two settings.

\paragraph{Their oracle and search task.}
Yamakawa and Zhandry consider folded Reed--Solomon codewords $c(Q)\in\Sigma^M$, where $\Sigma=\mathbb F_q^h$ is the alphabet of one block, and a uniformly random vector-output oracle $G$~\cite{YamakawaZhandry2024}. Write
\begin{align}
    G:\Sigma&\rightarrow\{0,1\}^M,
    &G_i(z)&\coloneq G(z)_i.
\end{align}
Their task is to find a polynomial $Q$ whose codeword satisfies $G_i(c_i(Q))=1$ for every block $i\in[M]$. Thus each coordinate function $G_i$ determines which symbols are accepted in block $i$, and the goal is to obtain exact score one.

Their quantum oracle is
\begin{align}
    \calO_G\ket{z,y}
    &\coloneq\ket{z,y\oplus G(z)},
    \qquad y\in\{0,1\}^M.
    \label{eq:yz-vector-output-oracle}
\end{align}
One query returns all $M$ coordinate bits at the same symbol address $z$. There is no block-index register in this interface. For each fixed $i$, the corresponding coordinate oracle is
\begin{align}
    \calO_{G_i}\ket{z,t}
    &\coloneq\ket{z,t\oplus G_i(z)},
    \qquad t\in\{0,1\}.
    \label{eq:yz-coordinate-oracles}
\end{align}
This has the same access form as our separately callable block-membership oracle $\calO_{A_i}$.

The two interfaces can simulate one another with polynomial query overhead. A vector-output query uses $M$ coordinate queries, one for each output bit. Conversely, a query to a fixed coordinate uses two vector-output queries: compute $G(z)$, copy its $i$th bit into the answer qubit, and uncompute $G(z)$. These simulations compare access to the same predicates; the distributions of the predicates are a separate issue.

\paragraph{The polynomial-filter interpretation.}
DQI uses a polynomial filter to favor high-scoring solutions. An exact-search filter makes the more restrictive choice of assigning zero amplitude to every imperfect candidate and equal amplitude to every perfect one. Such a filter can also be expressed as a polynomial: the number of accepted blocks takes values $0,\ldots,M$, and a degree-$M$ polynomial can equal zero at $0,\ldots,M-1$ and one at $M$. The accepted-block count is an affine function of the signed objective, so this is a DQI polynomial filter as well. Whenever perfect solutions exist, its normalized state is their uniform superposition. Encoding the coefficient vectors gives the corresponding uniform superposition of perfect codewords.

At a high level, Yamakawa and Zhandry prepare a superposition over codewords together with a product superposition over locally accepted symbols, pass to the Fourier basis, and use decoding of the dual code to combine these two constraints coherently~\cite{YamakawaZhandry2024}. For typical random oracles, the resulting state is close to one having equal amplitude on all codewords accepted in every block and zero amplitude elsewhere. Their ideal target can therefore be viewed as the degree-$M$ exact-satisfaction DQI filter state, a connection already discussed by Jordan et al.~\cite{JordanEtAl2025DQI}. The degree-$M$ description characterizes the target amplitudes, while efficient preparation depends on the available decoding procedure. Our DQI separation instead uses a lower-degree filter together with conventional unique decoding, whereas our exact-search construction in \cref{sec:proof-sampled-exact} analyzes the verified perfect-output probability directly through truncated Fourier amplitudes and coherent fiber summation.

\paragraph{Comparison with our oracle distribution.}
For uniformly random $G$, the bits $G_i(z)$ are mutually independent over all pairs $(i,z)$. Consequently, the acceptance sets $T_i^G\coloneq\{z\in\Sigma:G_i(z)=1\}$ have fluctuating cardinalities. Our sampled acceptance sets are independent across blocks, but each is uniform among subsets of cardinality exactly $|\Sigma|/2$. This is the independent-bit distribution conditioned on exact balance in every block, and it introduces dependence between membership bits within each block. The oracle-interface simulations above do not remove this distributional difference. In particular, the exact binomial Fourier-weight law in \eqref{eq:exact-binomial-layers} follows from our balance promise and holds for every promised oracle.

\end{document}